\renewcommand{\p@subsection}{}
\renewcommand{\p@subsubsection}{}
\def\l@subsubsection#1#2{}
\newcommand*{\cC}{\mathcal{C}}
\newcommand{\labelgroup}[5]{\POS"#1,#5"."#2,#5"."#1,#5"."#2,#5", \POS"#1,#5"."#2,#5"."#1,#5"."#2,#5"*!C!<1em,#3>=<0em>{#4}}
\theoremstyle{plain}
\newtheorem{result}{Result}
\newtheorem{lemma}{Lemma}
\newtheorem{theorem}[lemma]{Theorem}
\theoremstyle{definition}
\newtheorem{cond}{Condition}
\newtheorem{definition}[lemma]{Definition}
\newcommand{\id}{1}
\newcommand{\poly}{\mathrm{poly}}
\newcommand{\calA}{{\cal A }}
\newcommand{\calB}{{\cal B}}
\newcommand{\calL}{{\cal L }}
\newcommand{\calN}{{\cal N }}
\newcommand{\calF}{{\cal F }}
\newcommand{\calG}{{\cal G }}
\newcommand{\calV}{{\cal V }}
\newcommand{\calE}{{\cal E }}
\newcommand{\calC}{{\cal C }}
\newcommand{\calQ}{{\cal Q }}
\newcommand{\calS}{{\cal S }}
\newcommand{\calP}{{\cal P }}
\newcommand{\calW}{{\cal W }}
\newcommand{\be}{\begin{equation}}
\newcommand{\ee}{\end{equation}}
\newcommand{\la}{\langle}
\newcommand{\ra}{\rangle}
\newcommand{\dec}{\mathsf{Dec}}
\newcommand{\rec}{\mathsf{Rec}}
\newcommand{\repair}{\mathsf{Rep}}
\newcommand{\parity}{\mathsf{Parity}}
\newcommand{\lift}{\mathsf{Lift}}
\newcommand{\cor}{\mathsf{Cor}}
\newcommand{\pauli}{\mathrm{Pauli}}
\newcommand{\neigh}{\mathrm{neigh}}
\newcommand{\syn}{\mathrm{syn}}
\newcommand*{\CNOT}{\textrm{CNOT}}
\newcommand*{\ConeD}{C^{\textrm{1DMSP}}}
\newcommand*{\malpha}{\alpha} 
\newcommand*{\mbeta}{\beta}   
\newcommand*{\flightcone}{L^{\rightarrow}}
\newcommand*{\blightcone}{L^{\leftarrow}}
\newcommand*{\gluedgraph}{T_{\textrm{gl}}}
\begin{document}

\title{\Large Quantum advantage with noisy shallow circuits in 3D}

\author{Sergey Bravyi}
\affiliation{IBM T. J. Watson Research Center, Yorktown Heights, USA}
\author{David Gosset}
\affiliation{Department of Combinatorics and Optimization, and\\
Institute for Quantum Computing, University of Waterloo, Waterloo, Canada}
\author{Robert K\"onig}
\affiliation{Institute for Advanced Study \& Zentrum Mathematik, Technical University of Munich, Munich, Germany}
\author{Marco Tomamichel}
\affiliation{Centre for Quantum Software and Information \& School of Software, University of Technology Sydney, Sydney, Australia}


\begin{abstract}
Prior work has shown that there exists a relation problem which can be solved with certainty by a constant-depth quantum circuit composed of geometrically local gates in two dimensions, but cannot be solved with high probability by any classical constant depth circuit composed of bounded fan-in gates. Here we provide two extensions of this result. Firstly, we show that a separation in computational power persists even when the constant-depth quantum circuit is restricted to geometrically local gates in one dimension. The  corresponding quantum algorithm is the simplest we know of which achieves a quantum advantage of this type. It may also be more practical for future implementations. Our second, main result, is that a separation persists even if the shallow quantum circuit is corrupted by noise. We construct a relation problem which can be solved with near certainty using a \textit{noisy}  constant-depth quantum circuit composed of geometrically local gates in three dimensions, provided the noise rate is below a certain constant threshold value. On the other hand, the problem cannot be solved with high probability by a noise-free classical circuit of constant depth. A key component of the proof is a quantum error-correcting code which admits constant-depth logical Clifford gates and single-shot logical state preparation. We show that the surface code meets these criteria. To this end, we provide a protocol for single-shot logical state preparation in the surface code which may be of independent interest.
\end{abstract}
\maketitle

\tableofcontents

\vspace{2cm}

\section{Introduction}
The appeal of quantum computing lies in the hope that  quantum devices may surpass
their classical counterparts in certain information processing tasks.
Indeed, a universal quantum computer could   efficiently solve certain computational problems such as factoring, for which no efficient classical algorithms are known to date. Yet,  even an experimental realization of such
universal quantum machines 
-- while impressive and potentially useful in applications -- would not conclusively establish  a computational quantum advantage
in the complexity-theoretic sense.
Instead, an efficient quantum algorithm must be accompanied with a  proof of the classical hardness of the considered problem.
For almost any problem of interest, such a proof would itself constitute a major complexity-theoretic advance.

To solidify the theoretical underpinnings of quantum computation, recent work has focused on computational problems where quantum advantage can be established, either conditionally or information-theoretically. Results of the former category rely on certain complexity-theoretic conjectures such as the non-collapse of the polynomial hierarchy as well as specific hardness assumptions for a given problem. For example, so-called IQP circuits and related proposals~\cite{bremner2016average,bremner2016achieving,farhi2016quantum,bermejo2017architectures} provide evidence that classically sampling from the output distribution of certain shallow quantum circuits may be intractable -- a key feature first identified 
by Terhal and DiVincenzo~\cite{terhal2002adaptive} and later strengthened by Aaronson's characterization of postBQP \cite{aaronson2005quantum}. Some of these works also provide experimental proposals for using a near-term quantum computer to perform a computational task that cannot be performed by any existing classical computer~\cite{boixo2018characterizing}. A rich debate concerning the feasibility of such proposals has prompted improvements to the performance of classical simulation algorithms for quantum computers \cite{pednault2017breaking,boixo2017simulation,li2018quantum, chen2018classical,bravyi2018simulation}.

While these results seek to separate efficient (i.e., polynomial-time) quantum computation from efficient classical computation, complementary unconditional results have been obtained for a more narrow question.
 It has been shown~\cite{bragokoe18} that constant-depth quantum circuits provide a provable computational advantage over constant-depth classical circuits, where both types of circuits are assumed to have bounded fan-in gates. Ref.~\cite{bragokoe18} introduced  a computational problem such that
\begin{enumerate}[(i)]
\item\label{it:noisefreecircuits}
the problem can be solved with certainty by a constant-depth quantum circuit
composed of geometrically local gates on a 2D grid of qubits, while
\item
any classical probabilistic circuit which solves the problem 
with success probability at least~$7/8$   must have depth growing logarithmically with the input size.
\end{enumerate}
This separation also holds in the average-case setting when
the classical circuit only needs to solve a few instances of the problem
that are drawn randomly from a suitable distribution~\cite[Supplementary Material]{bragokoe18}. 
Similar proofs of quantum advantage with associated average-case hardness results for classical circuits  have been obtained more recently in~\cite{coudronetal18,legall18}, see also~\cite{Watts2019advantage}. In this work we extend these results in two distinct ways.  

First, since the quantum algorithm described in Ref.~\cite{bragokoe18} is geometrically local in two dimensions, it is natural to ask whether a provable quantum advantage can also be achieved in a one-dimensional geometry. 
We answer this question in the affirmative.

Following Ref.~\cite{bragokoe18}, below we consider relation problems.
Recall that a relation  $R$ is defined as a set of valid input-output pairs
$(z_{\mathrm{in}},z_{\mathrm{out}})$, where $z_{\mathrm{in}}$ and $z_{\mathrm{out}}$
are bit strings of appropriate length. We shall describe a relation
by a function $R(z_{\mathrm{in}},z_{\mathrm{out}})$ that takes values $0$ or $1$.
A classical or quantum circuit  is said to 
solve a relation problem $R$ for some input $z_{\mathrm{in}}$
if it outputs a string $z_{\mathrm{out}}$ such that $R(z_{\mathrm{in}},z_{\mathrm{out}})=1$. A relation problem is said to have $l$ input-output bits if $|z_{in}|+|z_{out}|=l$.

\begin{result}[\textbf{Quantum advantage with 1D shallow circuits\,---\,informal}]
For each $n$ there exists a relation problem $R$ with roughly $n$ input-output bits and a set of inputs $S$ of 
size $|S|=\poly(n)$ such that the following holds:
\begin{itemize}
  \item The problem $R$ can be solved with certainty for all inputs  by 
  a constant-depth quantum circuit composed of geometrically local gates on a 1D grid.
  \item Any classical probabilistic circuit composed of constant fan-in gates that solves $R$ with probability exceeding $0.9$ for a uniformly random input from $S$ must have depth at least $\Omega(\log n)$.
\end{itemize}
\label{res:1dqa}
\end{result}

The proof of this result is given in Section \ref{sec:1Dmagicsquare}, where the formal statements appear as Theorems~\ref{thm:1dq} and~\ref{thm:lowerbound}. 
As in previous work \cite{bragokoe18, coudronetal18}, the separation described in Result~\ref{res:1dqa} is achieved by a quantum algorithm with input/output statistics that are related to those of a certain nonlocal game. Recall that in a nonlocal game, cooperating players are each provided with an input and must each produce an output without communicating with the other players. Their aim is to satisfy a given winning condition, or input/output relation. It is known that quantum players who share entanglement can win certain nonlocal games with higher probability than classical players who share randomness. To prove the above result, we exhibit a constant-depth one-dimensional quantum circuit and a set $S$ of inputs such that the input/output statistics of the circuit given any input in $S$ are directly related to a variant of the well known magic-square game ~\cite{peres90,mermin90}.  We further establish that for any classical circuit with low enough depth there are a significant fraction of inputs in $S$ for which the circuit can be viewed as executing a classical strategy for winning this nonlocal game. The result then follows as a result of upper bounds on the winning probability of any classical strategy. The constant-depth quantum circuit which achieves this quantum advantage is shown in Fig.~\ref{fig:1d}. It is a classically controlled Clifford circuit with a particularly simple one-dimensional structure, and may be suitable for a near-term experimental demonstration. 

Secondly, we ask if the separation between the power of  constant-depth classical and quantum circuits persists even for noisy quantum circuits, i.e., quantum circuits where each qubit/gate can be erroneous with a constant probability. In this paper we compare the computational power of noisy shallow quantum circuits with that of noise-free shallow classical probabilistic circuits. 
The quantum circuits we consider will be subject to \textit{local stochastic noise}~\cite{fawzi2018constant}.
This noise model assumes that a random Pauli error occurs at each time step
in the ideal circuit. The error may affect multiple qubits, but the probability
of high-weight errors must be exponentially suppressed. 
This is quantified by a {\em noise rate} $p\in [0,1]$ such that 
the probability of observing $k$ single-qubit errors at any given 
subset of $k$ qubits must be at most $p^k$, see Section~\ref{sec:noise} for formal definitions.
The (probabilistic) classical circuits we consider will be composed of gates of bounded fan-in, as defined in Section~\ref{sec:hardnessoneDsquare}.

We note that standard fault-tolerance constructions which emulate a noise-free universal quantum computation using faulty gates and measurements do not directly apply in this setting: these constructions typically lead to non-constant depth circuits. As an example, a quantum error-correcting code with extensive code distance does not have a constant-depth encoding circuit~\cite{bravyi2006lieb,eldar2017local,aharonov2018quantum}. Thus, standard quantum error correction methods do not directly provide a generic way to turn a separation such as that established in~\cite{bragokoe18}, or the one described in Result~\ref{res:1dqa}, into a separation between noisy constant-depth quantum and (noiseless) constant-depth classical circuits.  Nevertheless, in this paper we do provide such a generic recipe. Applying the recipe to the separation described in Result~\ref{res:1dqa} we obtain the following.

\begin{result}[\textbf{Quantum advantage with noisy shallow circuits\,---\,informal}]
\label{res:noisy}
For each $n$ there exists a relation problem $R$ with roughly $n$ input-output bits and a set of inputs $S$ of size $|S|=poly(n)$ such that the following holds:
\begin{itemize}
  \item The problem $R$ can be solved with probability at least $0.99$ for all inputs by 
 a constant-depth quantum circuit composed of geometrically local gates on a 3D grid, subject to local stochastic noise.
 The noise rate must be below a constant threshold value independent of $n$.
  \item Any classical probabilistic circuit composed of constant fan-in gates that solves $R$ with probability exceeding $0.9$ for a uniformly random input from $S$ must have depth at least
  \[
    \Omega\left(\frac{\log(n)}{\log(\log(n))}\right).
  \]
\end{itemize}
\end{result}

Let us briefly describe the main idea which allows us to convert a quantum advantage with ideal quantum circuits, such as in Result~\ref{res:1dqa}, into one with noisy quantum circuits. The recipe is detailed in Section \ref{sec:noisyadvantageconstruction}. It uses the facts that (A) the quantum circuits which achieves the separation  are  controlled Clifford circuits with a classical control (i.e., for any fixed input a Clifford unitary is applied), and (B) Certain classical computations, such as the decoding needed for quantum error correction, can be incorporated into the definition of the relation problem rather than performed explicitly in the quantum algorithm. 

Consider a relation problem $R$ such that a constant-depth controlled-Clifford circuit 
produces a solution to a given instance with certainty. We are interested in the setting where $R$ cannot be satisfied by any constant-depth classical circuit. Such relations $R$ are provided in Ref.~\cite{bragokoe18} and Result~\ref{res:1dqa}. 
For a fixed input the controlled-Clifford circuit implements a constant-depth Clifford unitary~$C$ 
acting on $n$ qubits
followed by measurement of all qubits in the computational basis. Suppose that our goal is to perform a fault-tolerant version of this computation.  We imagine encoding each logical qubit using $m$ physical qubits of some
CSS-type~\cite{calderbank1996good,steane1996multiple} stabilizer code $\mathcal{Q}_m$.

As noted above, since good codes do not admit constant-depth encoding circuits, we are unable to initialize all logical qubits in the state $|\overline{0}\rangle$.  However, we can hope to prepare a version of this state which is corrupted by a known Pauli operator (which may act nontrivially on all physical qubits). To do this we can initialize all $m$ physical qubits, along with a suitable number $m_{\mathrm{anc}}$ of ancilla qubits, in the all-zeros state, and then perform a Clifford circuit $W$ which measures all stabilizers of the code to obtain a syndrome $s$. The resulting state is then
\begin{equation}
\left(I\otimes |s\rangle\langle s|\right) W|0^{m}\rangle|0^{m_{\mathrm{anc}}}\rangle \propto \rec(s)|\overline{0}\rangle|s\rangle.
\label{eq:stateprep}
\end{equation}
where the ``recovery" Pauli operator $\rec(s)$ is a function of the syndrome $s$. We shall be interested in the case when the code $\calQ_m$ is a low-density parity-check (LDPC) code, i.e., it has constant weight
stabilizer generators such that each qubit is acted upon nontrivially by at most a constant number of them. The syndrome of such codes can be measured by a constant depth Clifford circuit $W$. Using this procedure we can prepare the desired logical  state $|\overline{0}\rangle$ 
modulo a Pauli recovery operator $\rec(s)$. 
The same method can be used to prepare
$n$ copies of the state $|\overline{0}\rangle$, modulo a Pauli recovery
$\rec(s)$ acting on $nm$ qubits.
Let $\overline{C}$ be the logical version of the Clifford circuit $C$. 
Applying this circuit  to the prepared logical all-zero state we obtain
\begin{equation}
\overline{C}\rec(s)|\overline{0}\rangle^{\otimes n}=P(s)\overline{C}|\overline{0}\rangle^{\otimes n}
\label{eq:corrupted2}
\end{equation}
where $P(s)=\overline{C}\rec(s)\overline{C}^{\dagger}$ is another Pauli operator which is a simple function of $s$. Here we require that the logical Clifford $\overline{C}$ is implementable by a constant-depth physical circuit (for example,
this holds for any CSS code with transversal logical Hadamard and phase gates).  In other words, using such a code $\mathcal{Q}_m$ we are able to implement a logical encoded version of the constant-depth Clifford circuit $C$, masked by a Pauli operator $P(s)$ that depends on the initial syndrome measurement $s$ obtained in state preparation. The computational basis measurement statistics of the encoded state with the mask Eq.~\eqref{eq:corrupted2} are related to those of the unencoded state with no mask $C|0\rangle^{\otimes n}$ by flipping the bits corresponding to the $X$-type part of $P(s)$ and then decoding the resulting bit string. Thus we can simulate the desired encoded quantum computation using a constant-depth quantum circuit along with some simple classical postprocessing. If we chose to incorporate this classical postprocessing into the quantum algorithm, it could pose a problem as its depth may not be constant. Happily, it turns out, we can instead modify the definition of the relation problem $R$ to account for the difference. 

Now let us consider the noise-tolerance of this procedure.  Since the above quantum circuit has a constant depth and 
 uses logical encoded qubits and operations, it can be made to work in the presence of noisy physical gates and measurements, as long as they occur after the state preparation step.  Unfortunately, the state preparation step Eq.~\eqref{eq:stateprep} is not generally fault-tolerant and the whole algorithm can fail due to errors in the measured syndrome $s$. For example, a single faulty bit of $s$ can potentially 
damage the recovery operator $\rec(s)$ at multiple qubits
resulting in an uncorrectable error.
  This can be addressed by using a code $\mathcal{Q}_m$ that admits a so-called \textit{single-shot state preparation} procedure.  The latter is closely related to a single-shot error correction~\cite{bombin2015single}.
The code $\mathcal{Q}_m$ is said to admit a single-shot state preparation for a single-qubit logical state $\overline{\phi}$ if there exists a number of ancillas $m_{\mathrm{anc}}$ (upper bounded by a polynomial function of $m$) and a constant-depth Clifford circuit $W$
acting on $m+m_{\mathrm{anc}}$ qubits such that, for any local stochastic Pauli error $E$ with noise rate $p$, we have
\[
\left(I\otimes |s\rangle\langle s|\right)EW|0^{m}\rangle|0^{m_{\mathrm{anc}}}\rangle \propto F \rec(s)|\overline{\phi}\rangle|s\rangle.
\]
where $F$ is also a local stochastic Pauli error with a possibly  larger noise rate $p'\leq c_1p^{c_2}$ for positive constants $c_1,c_2$. For example, single-shot state basis state preparation allows us to use a constant-depth circuit composed of noisy gates and measurements to prepare a state $F \rec(s)|\overline{0}\rangle|s\rangle$, where $F$ is a random Pauli error that can be viewed as residual noise. We can also consider single-shot preparation of $k$-qubit encoded states, with $k>1$, in which case $m$ should be replaced by $mk$ above.

Putting together these ingredients we obtain a recipe which starts with a relation $R$ defined by the input-output statistics of a constant-depth controlled-Clifford circuit, and converts this ``bare relation" into a ``noise-tolerant" relation $\mathcal{R}$ that is based on the encoded circuit with single-shot state preparation, and which incorporates the classical postprocessing in its definition. We further show that the input/output statistics of a constant-depth quantum circuit satisfy  $\mathcal{R}$, and we show that the depth required for a classical circuit to satisfy $\mathcal{R}$ is comparable to that required to satisfy the bare relation $R$. 

A crucial requirement for the recipe outlined above is the existence of a CSS stabilizer code $\mathcal{Q}_m$ such that elementary logical Clifford gates are implemented by constant-depth Clifford circuits, and which admits a single-shot state preparation procedure. Here we show that the standard surface code satisfies these desiderata. The first requirement follows from previous work~\cite{moussa2016transversal} which describes how to implement logical single-qubit Hadamard and phase gates in the surface code using constant-depth Clifford circuits. Together with the transversal logical CNOT gate this provides a complete set of Clifford generators which can each be implemented in constant depth. A central technical contribution of our work is to provide a single-shot state preparation procedure for the surface code. Specifically, we show how  to prepare a logical Bell state encoded in two identical surface codes.
\begin{result}[\textbf{Single-shot Bell state preparation in the surface code\,---\,informal}]
For each $d\geq 4$, there is a single-shot state preparation procedure for the encoded Bell state 
${2}^{-1/2}\left(|\overline{00}\rangle+|\overline{11}\rangle\right)$ shared between two 
distance-$d$ surface codes, each encoding one logical qubit into $m=d^2+(d-1)^2$ physical qubits.
The procedure uses a depth-$6$ Clifford circuit $W$ composed of geometrically local gates on a 3D grid and computational basis measurements.
\label{res:intro3}
\end{result}

The proof of Result~\ref{res:intro3}, given in Sections~\ref{sec:codes} and~\ref{sec:3Dcluster},
relies crucially on ideas introduced in Ref.~\cite{raussendorfetal05}.
The authors of Ref.~\cite{raussendorfetal05} showed 
how to prepare a logical Bell state encoded into a pair
of surface codes starting from a 3D grid of qubits initially prepared in a (noisy) cluster state
and measuring a suitable subset of qubits.
Here we extend the analysis of Ref.~\cite{raussendorfetal05} and prove that the
same protocol yields a single-shot state preparation scheme with a constant
error threshold in the presence of local stochastic noise. 
We leave as an open question whether Result~\ref{res:intro3}
in conjunction with Knill's syndrome measurement method~\cite{knill312190scalable,GottesmanChuangNature}
provides a single-shot error correction scheme based on the surface code.

The 3D constant-depth quantum circuit described in Result~\ref{res:noisy} is obtained by combining
the 3D Bell state preparation circuit of Result~\ref{res:intro3}
with the 1D circuit of Result~\ref{res:1dqa} encoded 
by the surface code (we shall see
that the first few gates of this circuit simply prepare Bell states).
We show that  the encoded 1D circuit can be made geometrically local 
on a 3D grid  using the lattice folding trick introduced in Ref.~\cite{moussa2016transversal}.
The folded encoded 1D circuit uses only nearest-neighbor two-qubit gates on a 3D grid
with $O(1)$ qubits per site, as detailed in Section~\ref{sec:3D}.

\subsubsection*{Outline}
The remainder of the paper is organized as follows:
In Section~\ref{sec:1Dmagicsquare}, we introduce a new computational problem, the {\em 1D Magic Square Problem},  
separating constant-depth classical and quantum circuits. Contrary to the hidden linear function problem considered in~\cite{bragokoe18} which relied on a 2D~qubit architecture, the quantum circuit for the 1D~Magic Square Problem is geometrically local in one dimension.  An added benefit is a  simpler proof of the computational hardness for  constant-depth classical circuits.

In Section~\ref{sec:noisyadvantageconstruction}, we show how noise can be addressed for suitable (relation) problems: given an ideal constant-depth classically controlled Clifford circuit solving a certain relation problem, we show how to define a noise-tolerant version of the problem. The latter can be solved with a noisy quantum circuit constructed using appropriate error-correcting codes, and retains the hardness (in terms of circuit depth) of the original relation for classical circuits. Instantiating this construction with the 1D~Magic Square Problem provides the desired separation between noisy constant-depth quantum and (noise-free) constant-depth classical  circuits.

In Section~\ref{sec:codes}, we explain how to obtain the required code properties using the standard 2D~surface codes.  We give a high-level overview of the procedure for single-shot encoded Bell state preparation based on a 3D~grid of qubits. The full proof is  provided in  Section~\ref{sec:3Dcluster}.

 In Section~\ref{sec:3D}, we argue that the required quantum circuit for the noise-tolerant 1D~Magic Square Problem  can be realized using a constant-depth circuit with geometrically local gates on a 3D~architecture of qubits.

\section{\mbox{The 1D Magic Square Problem: 
Quantum advantage in a 1D geometry}\label{sec:1Dmagicsquare}}

In this section we define a relation problem called the 1D Magic Square Problem. We show that it can be solved with certainty by a constant-depth quantum circuit with nearest neighbor gates in a one-dimensional geometry. Conversely, we prove that it cannot be solved with high probability by any constant-depth classical (probabilistic) circuit composed of bounded fan-in gates. We begin by describing the \emph{magic square game}~\cite{peres90,mermin90}.

\begin{figure}
\subcaptionbox{Quantum strategy}[5cm]{
\begin{tikzpicture}[scale=0.75]

\draw (0,0)--(3,0);
\draw (0,1)--(3,1);
\draw (1.5, 1.45) node {$|\Phi\rangle$};
\draw (1.5, -0.55) node {$|\Phi\rangle$};
\draw (0,0) node[circle, fill=black, scale=0.5]{};
\draw (3,0) node[circle, fill=black, scale=0.5]{};

\draw (0,1) node[circle, fill=black, scale=0.5]{};
\draw (3,1) node[circle, fill=black, scale=0.5]{};

\draw [->, thick](-1.5, 2)--(-1.5, 1);
\draw [->, thick](-1.5, 0)--(-1.5, -1);
\draw (-1.5, 2.25) node {$\alpha$};
\draw (-1.5, -1.5 ) node {$\mathbf{x}$};

\draw [dotted, thick] (-2.5, -0.5)--(-2.5, 1.5)--(0.5,1.5)--(0.5,-0.5)--(-2.5, -0.5);
\begin{scope}[shift={(5,0)}]
\draw [dotted, thick] (-2.5, -0.5)--(-2.5, 1.5)--(0.5,1.5)--(0.5,-0.5)--(-2.5, -0.5);
\draw [->, thick](-1.25, 2)--(-1.25, 1);
\draw [->, thick](-1.25, 0)--(-1.25, -1);
\draw (-1.25, 2.25) node {$\beta$};
\draw (-1.25, -1.5 ) node {$\mathbf{y}$};
\end{scope}
\draw (-1.5, 0.5) node {Alice};
\draw (4, 0.5) node {Bob};
\end{tikzpicture}
}
\hspace{2cm}
  \subcaptionbox{Optimal measurements}[6cm]{
    \begin{tabular}{|c|c|c|c|} \hline
      \backslashbox{~${\beta}$}{${\alpha}$} & $01$ & $10$ & $11$\\
      \hline
      $01$ & $X_1 \id_2$ & $\id_1 X_2$ & $X_1 X_2$ \\
      \hline
      $10$ &  $\id_1 Z_2$ & $ Z_1 \id_2$ & $Z_1 Z_2$ \\
      \hline
       $11$ & $-X_1 Z_2$ & $-Z_1 X_2$ & $Y_1 Y_2$ \\
      \hline
      \end{tabular}
  }
\caption{The magic square game. (a) The two-bit inputs $\alpha,\beta\in \{01,10,11\}$ to the magic square game specify a column and row for Alice and Bob respectively, and the outputs are three bits $\mathbf{x} = (x^1, x^2, x^3) \in \{-1,1\}^3$ and $\mathbf{y} = (y^1, y^2, y^3) \in \{-1,1\}^3$ for each entry in the column or row. The game is won when $x^1 x^2  x^3 = -1$, $y^1 y^2 y^3 = 1$ and $x^{\iota(\beta)} = y^{\iota(\alpha)}$ (where $\iota$ converts between binary and non-binary representation). (b) The commuting observables for Alice (columns) and Bob (rows) for each input, yielding 3 output bits for each Alice and Bob that always satisfy the parity constraints.\label{tb:magicstab}} \label{fig:magic}
 \end{figure}
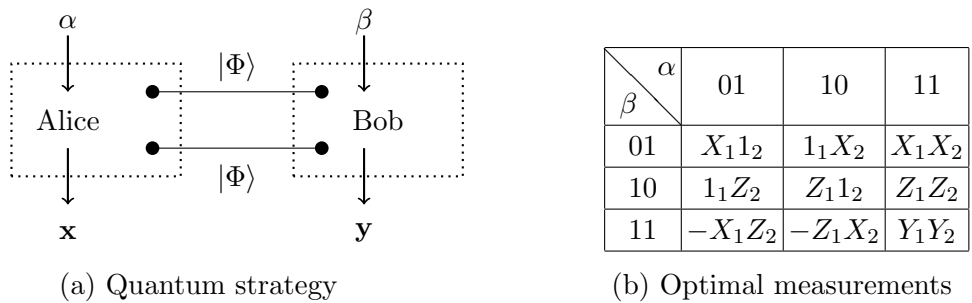
\subsection{The (generalized) magic square game}
The magic square game is a nonlocal game with two cooperating players Alice and Bob who cannot communicate. At the outset, Alice is given an input $\mathbf{\alpha}\in \{01,10,11\}$ which specifies  one of the three columns~$\iota(\mathbf{\alpha})\in \{1,2,3\}$ of a $3\times 3$ table and Bob is given an input $\mathbf{\beta}\in \{01,10,11\}$ which specifies one of the rows $\iota(\mathbf{\beta})\in \{1,2,3\}$. For later convenience, we use a binary encoding of integers with conversion map~$\iota:\{0,1\}^2\rightarrow \{0,1,2,3\}$.   Alice is asked to fill in the three entries in her column with either zeros or ones such that the overall parity is odd, while Bob is asked to fill out his row so that the parity is even.  They win if they satisfy this property and in addition they report the same value for the square where the column $\iota(\alpha)$ and row $\iota(\beta)$ overlap. There is no fixed assignment of the table satisfying the winning condition, which can be seen by noting that the total parity of all bits must be either even or odd, contradicting one of the restrictions. In fact, the maximal winning probability using a classical strategy is $8/9$. On the other hand, quantum players can win this game with certainty if Alice and Bob measure the observables in Fig.~\ref{fig:magic} on two maximally entangled states $|\Phi\rangle^{\otimes 2}$, where $|\Phi\rangle=2^{-1/2}(|00\rangle+|11\rangle)$.

This quantum strategy for the magic square game can alternatively be depicted using a quantum circuit as shown in Fig.~\ref{fig:magiccircuit}(a). After creating two Bell states $|\Phi\rangle^{\otimes 2}$, Alice (top) applies a Clifford unitary $U(\alpha)$ and then measures in the computational basis to obtain outcomes~$(x^1, x^2)$. 
Similarly, Bob (bottom) applies a Clifford unitary $V(\beta)$ and measures in the computational basis, getting outcomes~$(y^1, y^2)$.  Alice's output then is $(x^1,x^2,x^3)\in \{-1,+1\}^3$, where $x^1$ and $x^2$ are the measurement outcomes and the third component $x^3=-x^1x^2$ is fixed by the parity constraint in the magic square game. Similarly, Bob's output is $(y^1,y^2,y^3)\in \{-1,+1\}^3$, with $y^3=y^1y^2$. The Clifford unitaries $U(\alpha),V(\beta)$ implement the measurements described in Fig.~\ref{fig:magic}.  For example, $U(01) = H_1 \id_2$ implements the measurements $X_1 \id_2$ (on the first qubit) and $\id_1 Z_2$ (on the second qubit). For later convenience we set $U(00) = V(00) = I$. The full list of Clifford unitaries applied by Alice respectively Bob is given in Fig.~\ref{fig:magiccircuit}(b).

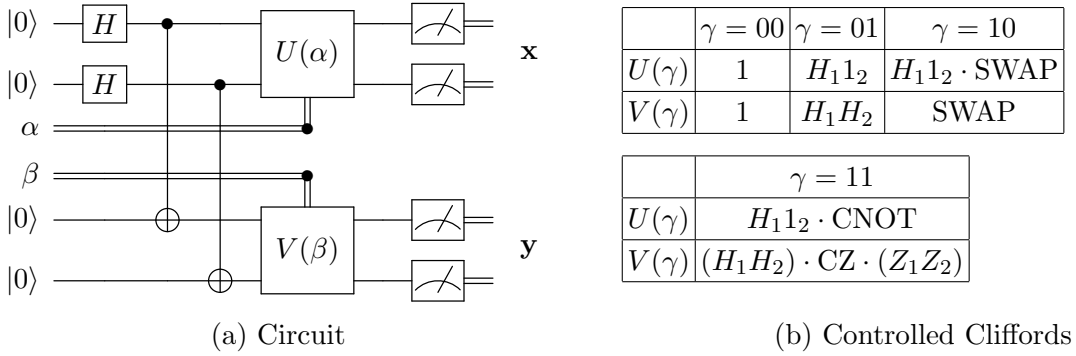
\begin{figure}
\subcaptionbox{Circuit}[0.5\textwidth]{
{
  \Qcircuit @C=1.0em @R=0.7em {
    & & & & \\
       \lstick{\ket{0}}     \lstick{\ket{0}}\labelgroup{2}{3}{1.0em}{\hspace{2.5em}{\mathbf{x}}}{9}					& \gate{H}  & \ctrl{5}    & \qw                & \multigate{1}{U({\alpha})} &\qw & \meter & \cw & \\ 
      \lstick{\ket{0}}             & \gate{H}  &  \qw        & \ctrl{5}           &      \ghost{U({\alpha})}      &\qw  & \meter & \cw & \\
       \lstick{{\alpha}}  & \cw          & \cw          & \cw                & \cctrl{-1} & \\
                              &                &              & \\
       \lstick{\beta}               & \cw          & \cw     & \cw        & \cctrl{1}  &\\
            \lstick{\ket{0}}\labelgroup{7}{8}{1.0em}{\hspace{2.5em}{\mathbf{y}}}{9} & \qw & \targ        & \qw             & \multigate{1}{V({\beta})} & \qw & \meter & \cw&\\
      \lstick{\ket{0}}  & \qw  &  \qw      & \targ            &        \ghost{V(\beta)}   &  \qw   & \meter & \cw &   }
  }
 }
\subcaptionbox{Controlled Cliffords}[0.45\textwidth]{
\begin{minipage}{0.45\textwidth}
\begin{flushleft}
\begin{tabular}{|c |c| c| c |}
\hline
&$\gamma=00$ & $\gamma=01$ & $\gamma=10$ \\ 
\hline
$U(\gamma)$ & $\id$ & $H_1\id_2$ & $H_1\id_2\cdot \mathrm{SWAP}$  \\
\hline
$V(\gamma)$ & $\id$ & $H_1H_2$ & $\mathrm{SWAP}$    \\
\hline
\end{tabular} 
\vspace{0.25cm}

\begin{tabular}{|c | c |}
\hline
&$\gamma=11$\\ 
\hline
$U(\gamma)$ & $H_1\id_2\cdot \mathrm{CNOT}$ \\
\hline
$V(\gamma)$ & $(H_1H_2)\cdot \mathrm{CZ}\cdot (Z_1 Z_2) $   \\
\hline
\end{tabular}
\vspace{0.25cm}

\end{flushleft}
\end{minipage}
}
\caption{
(a) The quantum strategy for the magic square game as a circuit.  The circuits takes as input $\alpha,\beta\in \{0,1\}^2$. The values $\alpha=00$ and $\beta=00$ are not used in the magic square game. The outputs are $\mathbf{x}=(x^1,x^2),\mathbf{y}=(y^1,y^2)\in \{+1,-1\}^2$. 
We set $x^3=-x^1x^2$ and $y^3=y^1y^2$ to satisfy the parity constraints.(b) When $\gamma=00$ we set $U(\gamma)=V(\gamma)=I$. For $\gamma\in \{01,10,11\}$ we choose $U(\gamma)$ and $V(\gamma)$ to be Cliffords which implement the basis changes needed to measure the observables described in Fig. \ref{fig:magic} (b). Here $\mathrm{CZ}=\mathrm{diag}(1,1,1,-1)$ is the controlled-$Z$ gate and $\mathrm{SWAP}$ is defined by $\mathrm{SWAP}|z_1z_2\rangle=|z_2z_1\rangle$ for all $z_1,z_2\in \{0,1\}$.
\label{fig:magiccircuit}}
\end{figure}

We briefly describe a \textit{generalized magic square game with parameters }$(s,t,s',t')\in \{-1, +1\}^4$. In the generalized game, Alice and Bob are still asked to fill in columns respectively rows of a $3\times 3$ table, with odd  or even parity constraints as above. However, the winning condition~$x^{\iota(\beta)}=y^{\iota(\alpha)}$ that  Alice's outputs~$(x^1,x^2,x^3)$ and Bob's outputs $(y^1,y^2,y^3)$ coincide in the entry where the column~$\iota(\alpha)$  and the row~$\iota(\beta)$ overlap is replaced by
\begin{equation}
x^{\iota(\beta)} y^{\iota(\alpha)} = f_{\alpha,\beta}(s,t,s',t')\ ,\label{eq:win2}
\end{equation}
where $f_{\alpha,\beta}(s,t,s',t')$ is given by the table in Fig. \ref{fig:modmag}(b).

It is straightforward to show that the maximal winning probability for the generalized magic square game using a classical strategy is again equal to~$8/9$. On the other hand, quantum players can win with probability one 
if they share the entangled state $\ket{\Phi_{s,t}}\otimes \ket{\Phi_{s',t'}}$, where
\begin{equation}
\ket{\Phi_{s,t}} = \left(Z^{\frac12(1+s)}X^{\frac12(1+t)}\otimes I\right) \ket{\Phi}  \qquad \qquad s,t\in \{-1, +1\}\ .
\label{eq:varphistzerozero}
\end{equation}
Here the tensor product separates Alice's qubit (on the left) from Bob's.  The corresponding winning strategy consists in measuring the same observables (i.e., from Fig.~\ref{fig:magic}) as before.

To see that this quantum strategy succeeds with probability~$1$, observe that because of~\eqref{eq:varphistzerozero}, this is equivalent to Alice and Bob sharing the initial state~$\ket{\Phi^{\otimes 2}}$, 
Bob measuring the same observables as before (described by the rows of Fig \ref{fig:magic}(a)), and Alice measuring the observables described in Fig.~\ref{fig:modmag}(a). Clearly, the outcomes 
$x^1, x^2, x^3$ for Alice and $y^1, y^2, y^3$ for Bob still satisfy the parity conditions
as Bob's measurement is the same as before, whereas Alice's 
outcomes again multiply to~$-1$ as can be seen by taking the product of
the operators in each column. The fact that~\eqref{eq:win2} is satisfied follows by comparing the observables in Fig.~\ref{fig:modmag}(a) with the definition of $f_{\alpha,\beta}$ (see Fig.~\ref{fig:magic}(b)).



In our arguments below (see Lemma~\ref{lem:modifiedgamemain}), we use a variant of this  generalized magic square game where $s,t,s',t'$ enter the winning conditions and may depend on the inputs $(\alpha,\beta)$, but only in a restricted way. For later reference, we note that the functions~$f_{\alpha,\beta}$ satisfy
\begin{align}
  \prod_{i=1}^3 f_{\iota^{-1}(i),\beta}(s, t, s', t') = \prod_{j=1}^3 f_{\alpha,\iota^{-1}(j)}(s, t, s', t') = 1\qquad\textrm{ for all }
(\alpha,\beta)\in \{01,10,11\}\ .
\label{eq:stprod}
\end{align}
Observe also that each of the functions $f_{\alpha,\beta}$ depends linearly on each of the arguments $(s,t,s',t')$. In particular,
if these variables are products of $\{+1,-1\}$-valued variables, then the value of the function factorizes as
\begin{align}
f_{\alpha,\beta}(s_As_B,t_At_B,s'_As'_B,t'_At'_B)=f_{\alpha,\beta}(s_A,t_A,s'_A,t'_A)\cdot f_{\alpha,\beta}(s_B,t_B,s'_B,t'_B)\label{eq:factorizationproperty}
\end{align}
for all $s_A,s_B,t_A,t_B,s'_A,s'_B,t'_A,t'_B\in \{+1,-1\}$.

\begin{figure}
\subcaptionbox{Alice's observables}[5cm]{

\begin{tabular}{|c| c| c | c |}
\hline
$\alpha=01$ & $\alpha=10$ & $\alpha=11$\\
\hline
$s X_1 \id_2$ &  $s'\id_1 X_2$ &  $ss'X_1 X_2$ \\
\hline
$t'\id_1 Z_2$ &  $tZ_1 \id_2$ &  $tt'Z_1 Z_2$ \\
\hline
$-st' X_1 Z_2$ & $-ts'Z_1 X_2$ &  $ss'tt'Y_1 Y_2$ \\
\hline
\end{tabular}
}
\hspace{2cm}
\subcaptionbox{Definition of $f_{\alpha,\beta}(s,t,s',t')$}[6cm]{
\begin{tabular}{ |c| c| c | c |}
\hline
 & $\alpha=01$ & $\alpha=10$ & $\alpha=11$\\ 
\hline
$\beta=01$ &   $s$ &  $s'$ &  $ss'$ \\
\hline
$\beta=10$ &   $t'$ & $t$ & $tt'$ \\
\hline
$\beta=11$ &    $st'$ & $s't$ & $ss'tt'$ \\
\hline
\end{tabular}
}
\caption{Generalized magic square game with initial state $|\Phi_{s,t}\rangle\otimes |\Phi_{s',t'}\rangle$.\label{fig:modmag}}
\end{figure}

\subsection{The 1D Magic Square Problem and its solution by a constant-depth quantum circuit}
We now describe the relation problem which we call the 1D Magic Square Problem. We simultaneously exhibit a constant-depth quantum circuit using classically controlled Clifford gates which solves this problem with certainty for any input.  In fact, we define the problem by giving this quantum circuit, but remark that, alternatively, a purely algebraic definition could be given without making reference to quantum circuits.

\begin{figure}
  \[
  \Qcircuit @C=1.0em @R=0.7em {
        & & & & & \\
     \lstick{p_1 \; \ket{0}}     \lstick{\ket{0}}\labelgroup{2}{3}{1.0em}{\hspace{2.5em}\mathbf{x}_1}{9}					& \gate{H}  & \ctrl{6}    & \qw                & \multigate{1}{U(\malpha_1)} &\qw & \meter & \cw & \\ 
       \lstick{q_1 \;\ket{0}}             & \gate{H}  &  \qw        & \ctrl{6}           &      \ghost{U(\malpha_1)}      &\qw  & \meter & \cw & \\
        \lstick{\malpha_1}  & \cw          & \cw          & \cw                & \cctrl{-1} & \\
                              &                &              & \\
                               &                &              & \\
        \lstick{\mbeta_1}               & \cw          & \cw     & \cw        & \cctrl{1}  & \cctrl{1}\\
        \lstick{p_2 \;\ket{0}}\labelgroup{8}{9}{1.0em}{\hspace{2.5em}\mathbf{y}_1}{9} & \qw  & \targ        & \qw             & \multigate{1}{V(\mbeta_1)} & \multigate{3}{W(\mbeta_1,\malpha_2)} & \meter & \cw&\\
     \lstick{q_2 \;\ket{0}}  & \qw  &  \qw      & \targ            &        \ghost{V(\mbeta_1)}   &  \ghost{W(\mbeta_1,\malpha_2)}    & \meter & \cw & \\
      \lstick{p_3 \;\ket{0}} \labelgroup{10}{11}{1.0em}{\hspace{2.5em}\mathbf{x}_2}{9} 			& \gate{H}  & \ctrl{6}    & \qw                & \multigate{1}{U(\malpha_2)} &\ghost{W(\mbeta_1,\malpha_2)}    & \meter & \cw& \\
      \lstick{q_3 \;\ket{0}}             & \gate{H}  &  \qw        & \ctrl{6}           &      \ghost{U(\malpha_2)}      &\ghost{W(\mbeta_1,\malpha_2)}     & \meter & \cw& \\
       \lstick{\malpha_2}  & \cw          & \cw          & \cw                & \cctrl{-1} & \cctrl{-1} \\
                              &                &              & \\
                               &                &              & \\
        \lstick{\mbeta_2}               & \cw          & \cw     & \cw        & \cctrl{1}  & \cctrl{1}\\
        \lstick{p_4 \; \ket{0}} \labelgroup{16}{17}{1.0em}{\hspace{2.5em}\mathbf{y}_2}{9} & \qw  & \targ        & \qw             & \multigate{1}{V(\mbeta_2)} & \multigate{3}{W(\mbeta_2,\malpha_3)} & \meter & \cw& \\
     \lstick{q_4 \; \ket{0}}  & \qw  &  \qw      & \targ            &        \ghost{V(\mbeta_2)}   &  \ghost{W(\mbeta_2,\malpha_3)}    & \meter & \cw & \\
      \lstick{p_5 \; \ket{0}}\labelgroup{18}{19}{1.0em}{\hspace{2.5em}\mathbf{x}_3}{9}  & \gate{H}  & \ctrl{3}    & \qw                & \multigate{1}{U(\malpha_3)} &\ghost{W(\mbeta_2,\malpha_3)}    & \meter & \cw &\\
     \lstick{q_5 \; \ket{0}}             & \gate{H}  &  \qw        & \ctrl{2}           &      \ghost{U(\malpha_3)}     &\ghost{W(\mbeta_2,\malpha_3)}     & \meter & \cw&\\
      \lstick{\malpha_3}  & \cw          & \cw          & \cw                & \cctrl{-1} & \cctrl{-1} \\
                              &                &              & \\
                               &                &              &\\
   & \vdots & & &  \vdots &  & \vdots \\                     
                                  &                &              & \\
                               &                &              & \\
                                 & & & \\
                                  & & \qwx[2] & \qwx[3] \\
                                \lstick{\mbeta_n}               & \cw          & \cw     & \cw        &  \cctrl{1} & &  &   \\
           \lstick{p_{2n} \; \ket{0}}  \labelgroup{29}{30}{1.0em}{\hspace{2.5em}\mathbf{y}_n}{9}        & \qw  & \targ        & \qw             & \multigate{1}{V(\mbeta_n)} & \qw & \meter & \cw  & \\
                   \lstick{q_{2n} \; \ket{0}}  & \qw  &  \qw        & \targ            &        \ghost{V(\mbeta_n)}    &  \qw   & \meter & \cw  &   
                    }
                    \]
  \caption[justification=centerlast]{Quantum circuit for the 1D Magic Square Problem.  Here the classically controlled Clifford~$W(\mbeta,\malpha)$ is the identity unless $\malpha=\mbeta=(0,0)$. In that case, we set $W(00,00)=M_{13}\otimes M_{24}$ where $M=(H\otimes I)\CNOT$ is the Bell basis change, leading to an entanglement-swapping measurement.\label{fig:1d}}
\end{figure}
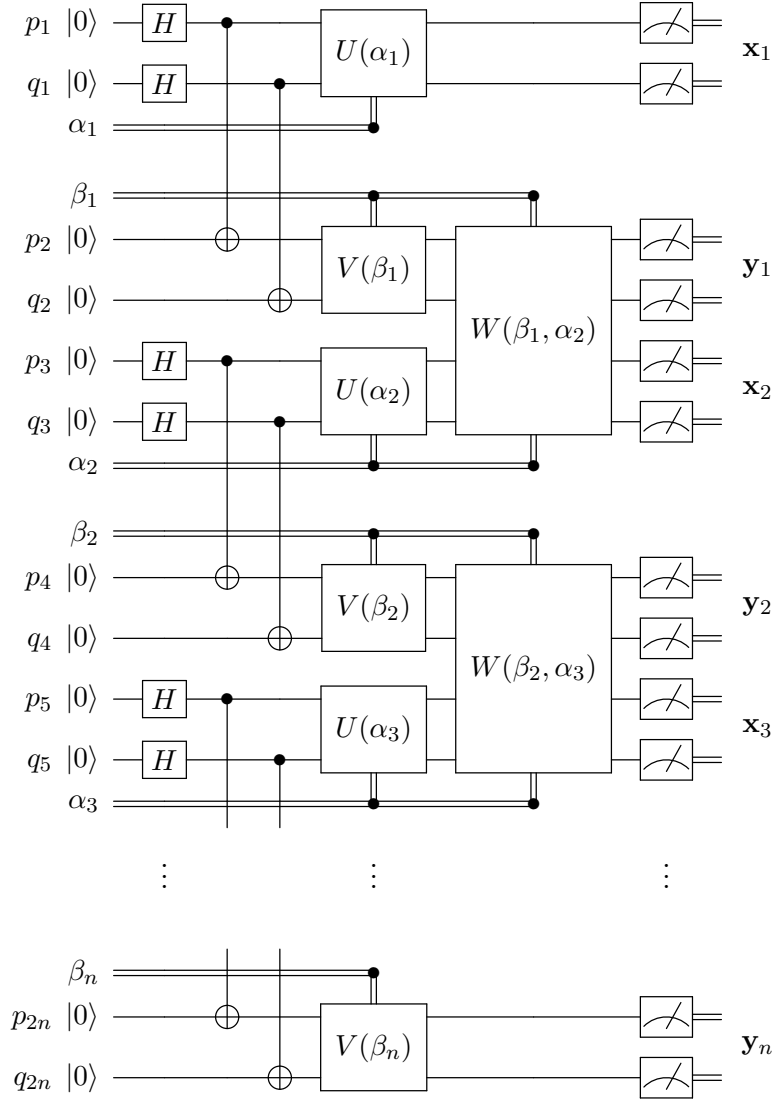
Consider the quantum circuit shown in Fig.~\ref{fig:1d}. To understand what is going on in this circuit, it may be useful to compare with Fig.~\ref{fig:magiccircuit}. The circuit in Fig.~\ref{fig:1d} takes inputs $\malpha_j, \mbeta_j \in \{0,1\}^2$  and outputs $\mathbf{x}_j, \mathbf{y}_j \in \{-1,1\}^2$,  for $j\in \{1,2,\ldots, n\}$. Thus the circuit has a total of $4n$ input bits and $4n$ output bits. It contains $4n$ data qubits which are labelled $p_1,q_1,p_2,q_2,\ldots, p_{2n},q_{2n}$ in the Figure. The gates $U(\malpha)$ and $V(\mbeta)$ are the same Clifford gates which appear in Fig.~\ref{fig:magiccircuit}. Recall that for $\malpha,\mbeta\in \{01,10,11\}$ they are chosen to implement Alice and Bob's measurements in the magic square game, and $U(00)=V(00)=I$. In addition, the circuit contains controlled $4$-qubit Clifford gates $W(\mbeta,\malpha)$ defined by 
\begin{align}
W(\mbeta,\malpha)=\begin{cases} M_{13}\otimes M_{24}, & \malpha=\mbeta=00\\
I, & \text{ otherwise}\ .
\end{cases}\label{eq:Wbetaalphadef}
\end{align}
where $M=(H\otimes I)\CNOT$ is the Bell basis change, mapping the Bell basis to the computational one. We note that the circuit realizes a Clifford unitary~$\ConeD_{z_{in}}$ on $4n$~qubits which is classically controlled
by the input~$z_{in}=(\malpha_1,\ldots,\malpha_n,\mbeta_1,\ldots,\mbeta_n)$, followed by a computational basis measurement yielding~$z_{out}=(\mathbf{x}_1,\ldots,\mathbf{x}_n,\mathbf{y}_1,\ldots,\mathbf{y}_n)$. In the 1D Magic Square Problem we are asked to reproduce the input-output relation satisfied by the circuit:
\begin{definition}[\textbf{1D Magic Square Problem}]
We are given an input $z_{in}\in\{0,1\}^{4n}$. The goal is to output any bit string $z_{out}\in \{0,1\}^{4n}$ which
 appears with nonzero probability when the circuit Fig.~\ref{fig:1d} is run with input $z_{in}$, i.e.,
 any $z_{out}$ satisfying
 \begin{align}
 p_{z_{in}}(z_{out})=|\bra{z_{out}}\ConeD_{z_{in}}\ket{0^{4n}}|^2>0\ .\label{eq:magicsquareonedrelation}
 \end{align}
 Any pair $(z_{in},z_{out})$ satisfying~\eqref{eq:magicsquareonedrelation} will be said to satisfy the 1D Magic Square Relation.
\end{definition}
In other words, the 1D Magic Square Problem is a relation problem: for a typical problem instance~$z_{in}$, the set of valid solutions~$z_{out}$ is non-unique. By construction, the problem can be solved with certainty by running the circuit in Fig.~\ref{fig:1d}. Thus, we immediately obtain the following.
\begin{theorem}
The 1D Magic Square Problem can be solved with certainty by a depth-$4$ quantum circuit in which each gate is a controlled Clifford gate with at most $4$ control bits and $4$ target qubits. The gates in this circuit are geometrically local in~1D.
\label{thm:1dq}
\end{theorem}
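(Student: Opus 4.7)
The plan is to verify directly that the circuit $\ConeD_{z_{in}}$ depicted in Fig.~\ref{fig:1d} satisfies all four properties asserted by the theorem; since the 1D Magic Square Problem is defined by exactly the input-output statistics of this circuit, the bulk of the argument is structural bookkeeping rather than any nontrivial computation. The correctness claim is immediate: the 1D Magic Square Relation consists of pairs $(z_{in},z_{out})$ with $|\langle z_{out}|\ConeD_{z_{in}}|0^{4n}\rangle|^2>0$, so running the circuit on input $z_{in}$ and measuring in the computational basis produces a valid $z_{out}$ with probability one.

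For the depth and per-gate constraints, I would partition the gates of $\ConeD_{z_{in}}$ into four parallelizable layers. Layer~1 applies a Hadamard to each qubit $p_{2j-1}$ and $q_{2j-1}$ for $j\in\{1,\ldots,n\}$. Layer~2 applies the entangling CNOTs $p_{2j-1}\to p_{2j}$ and $q_{2j-1}\to q_{2j}$, which prepare the two Bell pairs shared between ``Alice at position $j$'' and ``Bob at position $j$'' (these CNOTs appear in two visual columns of Fig.~\ref{fig:1d} but act on disjoint qubits and can therefore be executed in a single parallel round). Layer~3 applies the classically controlled Cliffords $U(\malpha_j)$ on $(p_{2j-1},q_{2j-1})$ and $V(\mbeta_j)$ on $(p_{2j},q_{2j})$. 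Layer~4 applies $W(\mbeta_j,\malpha_{j+1})$ on $(p_{2j},q_{2j},p_{2j+1},q_{2j+1})$ for $j\in\{1,\ldots,n-1\}$. Within each layer the supports are pairwise disjoint, so the depth is four. The structural bounds then follow by enumeration: $H$, CNOT, the tabulated values of $U(\gamma)$ and $V(\gamma)$, and the entanglement-swap $M_{13}\otimes M_{24}$ used by $W(00,00)$ are all Clifford unitaries, and the widest gate is $W$, which has $4$ classical control bits and $4$ target qubits.

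For 1D geometric locality, I would arrange the $4n$ data qubits on a line in the order $p_1,q_1,p_2,q_2,\ldots,p_{2n},q_{2n}$. In this ordering the two-qubit $U(\malpha_j)$ and $V(\mbeta_j)$ gates act on adjacent pairs, each Bell-pair CNOT acts on qubits at distance $2$, and each $W(\mbeta_j,\malpha_{j+1})$ acts on four consecutive sites. Every gate is therefore supported inside a window of constant width, establishing geometric locality in 1D with constant interaction range. There is no real obstacle in the argument—the only care required is being explicit about the qubit ordering and the layer assignment—but the existence of this linear ordering is precisely the novelty of the theorem compared to the 2D construction of~\cite{bragokoe18}.
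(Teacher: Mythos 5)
Your proof is correct and follows essentially the same route as the paper: since the 1D Magic Square Problem is defined by the output distribution of $\ConeD_{z_{in}}$, running the circuit solves it with certainty, and the depth, gate-width, and 1D-locality claims are read directly off Fig.~\ref{fig:1d}. The paper leaves most of the bookkeeping implicit (``By construction...''), whereas you spell it out — in particular, the observation that the two CNOT columns in the figure act on disjoint qubits and so compress into a single layer, which is exactly what gets the depth down to $4$.
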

Below we restrict our attention to a particular set~$S$ of problem instances (see Eqs.~\eqref{eq:tuple} and~\eqref{eq:inputs}) where 
$\alpha=0$ iff $\beta=0$. In particular, according to the definition~\eqref{eq:Wbetaalphadef} of $W(\beta,\alpha)$, we can realize  such a gate by a pair of successively applied $2$-qubit  gates which are classically controlled by~$\alpha$. Thus we  may equivalently use a depth-$5$-circuit with controlled Clifford gates having at most~$2$~classical control  bits and $2$~target qubits for the 1D Magic Square Problem.

The fact that the quantum circuit for the 1D Magic Square Problem only requires geometrically local gates in 1D allows us to establish  a separation between noisy constant-depth quantum circuits with geometrically local gates in~3D, and general noise-free shallow classical circuits, as discussed in Section \ref{sec:3D}.

We will show that the 1D Magic Square Problem cannot be solved with high probability by a constant-depth classical circuit composed of bounded fan-in gates. Moreover, we show that any such circuit  must falter on a certain polynomial-sized subset of instances. In particular, we will be interested in the subset $S\subseteq \{0,1\}^{4n}$ of inputs  of the following form. Each member of the subset $S$ can be described by a tuple
\begin{equation}
(j,k,\alpha,\beta) \qquad \text{where} \quad 1\leq j<k\leq n \quad \text{ and } \alpha,\beta\in \{01,10, 11\}.
\label{eq:tuple}
\end{equation}
The associated input bits are given by
\begin{equation}
\malpha_i=\begin{cases} \alpha, & i=j\\
00, & i\neq j.
\end{cases}\qquad 
\mbeta_i=\begin{cases} \beta, & i=k\\
00, & i\neq k.
\end{cases}
\label{eq:inputs}
\end{equation}

Before establishing this hardness for classical constant-depth circuits, let us briefly comment on the way in which the described circuit $\ConeD_{z_{in}}$ from Fig.~\ref{fig:1d} achieves the  claimed quantum advantage. To this end, we show that the circuit~$\ConeD_{z_{in}}$ 
essentially executes the quantum winning strategy of the generalized magic square game. In other words, it uses quantum non-locality as a resource. 

In more detail, we argue that the output of the quantum circuit on an input from the set $S$ (i.e., of the form \eqref{eq:inputs}) obeys the winning condition of the generalized magic square game. This may be phrased as a necessary condition on pairs $(z_{in},z_{out})$ satisfying the 1D Magic Square Relation, as follows:
\begin{lemma}
Consider an instance $z_{in}=(\alpha_1,\ldots,\alpha_n,\beta_1,\ldots,\beta_n)$ of the 1D Magic Square Problem from the set~$S$, i.e., specified by~\eqref{eq:inputs} in terms of a tuple~$(j,k,\alpha,\beta)$  as in~\eqref{eq:tuple}. 
Let $z_{out}=(\mathbf{x}_1,\ldots,\mathbf{x}_n,\mathbf{y}_1,\ldots,\mathbf{y}_n)$ be any tuple such that $(z_{in},z_{out})$ satisfies the 1D Magic Square Relation. Then  the tuple
$(\malpha,\mbeta,\mathbf{x}_j,\mathbf{y}_k)$ satisfies  
\begin{align}
x^{\iota(\beta)}_jy^{\iota(\alpha)}_k&=f_{\alpha,\beta}(s,t,s',t')\qquad\textrm{ for all }\alpha,\beta\in \{01,10,11\} \label{eq:winconditionnecessary}
\end{align}
 with parameters
\begin{align}
s= \prod_{i=j}^{k-1} y_i^1 \qquad  t=\prod_{i=j}^{k-1} x_{i+1}^1 \qquad s'= \prod_{i=j}^{k-1} y_i^2  \qquad  t'=\prod_{i=j}^{k-1}x_{i+1}^2 \ . \label{eq:st}
\end{align}
Here we use $\mathbf{x}_i=(x_i^1,x_i^2)\in \{-1,+1\}^2$ (and similarly for $\mathbf{y}_i$) to denote the entries of~$\mathbf{x}_i$, and
set
\begin{align}
x^3_{j}=-x^1_jx^2_j\qquad\textrm{ and }\qquad y^3_k=y^1_ky^2_k\ .\label{eq:tripledefxyk}
\end{align}
\label{lem:stst}
\end{lemma}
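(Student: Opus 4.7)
The plan is to trace the action of the circuit $\ConeD_{z_{in}}$ on an input $z_{in}\in S$ specified by a tuple $(j,k,\alpha,\beta)$ as in~\eqref{eq:tuple}, and to identify the final portion as the quantum winning strategy for the generalized magic square game with parameters $(s,t,s',t')$ given by~\eqref{eq:st}. The argument decomposes into three steps.

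First, I would dissect the circuit for inputs in $S$. By~\eqref{eq:inputs}, the only nontrivial single-site unitaries are $U(\alpha)$ acting on $A_j:=(p_{2j-1},q_{2j-1})$ and $V(\beta)$ on $B_k:=(p_{2k},q_{2k})$. By~\eqref{eq:Wbetaalphadef}, the controlled Clifford $W(\beta_i,\alpha_{i+1})$ performs a Bell-basis measurement of $(B_i,A_{i+1})$ precisely when $\beta_i=\alpha_{i+1}=00$, which holds for every $i\in\{1,\ldots,n-1\}\setminus\{j-1,k\}$; it is the identity otherwise. These Bell measurements partition the $n$ sites into three chains---$\{1,\ldots,j-1\}$, $\{j,\ldots,k\}$, and $\{k+1,\ldots,n\}$---and only the middle chain is relevant to the outputs $(\mathbf{x}_j,\mathbf{y}_k)$.

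Second, I would analyze the middle chain by entanglement swapping. Since $W(00,00)=M_{13}\otimes M_{24}$ factorizes across the $p$- and $q$-qubits, the middle chain decouples into two identical 1D chains of Bell pairs with Bell measurements between consecutive sites; I focus on the $p$-chain, the $q$-chain being analogous. Using that the basis change $M$ followed by computational-basis readout on $(p_{2i},p_{2i+1})$ is equivalent to jointly measuring $X_{p_{2i}}X_{p_{2i+1}}$ and $Z_{p_{2i}}Z_{p_{2i+1}}$ with eigenvalues $y_i^1$ and $x_{i+1}^1$ respectively, a stabilizer calculation shows that, conditioned on the intermediate outcomes, the state on the end-points $(p_{2j-1},p_{2k})$ is stabilized by $s\,X_{p_{2j-1}}X_{p_{2k}}$ and $t\,Z_{p_{2j-1}}Z_{p_{2k}}$, and therefore equals $|\Phi_{s,t}\rangle$ up to a global phase, with $s,t$ as in~\eqref{eq:st}. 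The same argument applied to the $q$-chain yields $|\Phi_{s',t'}\rangle$ on $(q_{2j-1},q_{2k})$. Thus the conditional joint state on $(A_j,B_k)$ is $|\Phi_{s,t}\rangle\otimes|\Phi_{s',t'}\rangle$.

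Third, I would invoke the generalized magic square game analysis from the previous subsection. The unitaries $U(\alpha)$ and $V(\beta)$ for $\alpha,\beta\in\{01,10,11\}$ are by construction the basis changes realizing the standard magic-square observables of Fig.~\ref{fig:magic}, so the final measurements constitute exactly the quantum winning strategy on the resource state $|\Phi_{s,t}\rangle\otimes|\Phi_{s',t'}\rangle$, which by~\eqref{eq:win2} wins deterministically. Hence the outputs obey $x_j^{\iota(\beta)}y_k^{\iota(\alpha)}=f_{\alpha,\beta}(s,t,s',t')$ for every $(\alpha,\beta)\in\{01,10,11\}$, with $x_j^3,y_k^3$ defined via~\eqref{eq:tripledefxyk}. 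Consequently any $z_{out}$ with nonzero probability under $\ConeD_{z_{in}}$ satisfies~\eqref{eq:winconditionnecessary}, as claimed. The main obstacle is the sign bookkeeping in the second step: one must verify that the cumulative Pauli phases accumulated across the chained Bell measurements match the precise conventions of~\eqref{eq:varphistzerozero} and~\eqref{eq:st}, without extraneous sign flips.
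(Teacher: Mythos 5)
Your proof is correct and follows essentially the same route as the paper: decompose the circuit for inputs in $S$ into Bell-pair preparation, chained entanglement-swapping via the $W(00,00)$ gates, and the final $U(\alpha), V(\beta)$ measurements, then argue that Alice's and Bob's end-point qubits are left in $\ket{\Phi_{s,t}}\otimes\ket{\Phi_{s',t'}}$ with $s,t,s',t'$ given by~\eqref{eq:st}, so the outputs obey the generalized magic-square winning condition. You are right to flag the sign bookkeeping (the $X^{\phantom{.}}_AX^{\phantom{.}}_B$ eigenvalue of $\ket{\Phi_{s,t}}$ as literally defined in~\eqref{eq:varphistzerozero} works out to $-s$ rather than $s$, but a compensating convention in the definition of Alice's modified observables in Fig.~\ref{fig:modmag}(a) makes the final identity~\eqref{eq:winconditionnecessary} come out correctly, as one can check directly on a small instance), and the paper also treats this somewhat loosely.
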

We recognize  the winning condition  Eq.~\eqref{eq:win2}  for the generalized magic square game in the identity~\eqref{eq:winconditionnecessary}. 
In other words, when running the quantum circuit $\ConeD_{z_{in}}$ on an instance from~$S$, a winning output for the generalized magic square game is given by the measurement results from  qubits $p_{2j-1}q_{2j-1}$ and $p_{2k},q_{2k}$, whereby the particular generalized magic square game considered (i.e., the parameters $s,t,s',t'$)  is determined by the measurement outcomes of qubits between these sites, see Fig.~\ref{fig:2rows}.   We will later show that satisfying the necessary condition of Lemma~\ref{lem:stst} is infeasible for shallow classical circuits. The particular functional dependence captured by~\eqref{eq:st} will figure prominently in a reduction to the ``standard'' magic square game, whose classical value is~$8/9$ (see Lemma~\ref{lem:modifiedgamemain} below).

\begin{proof}
Consider the circuit from Fig.~\ref{fig:1d} applied to the input Eq.~\eqref{eq:inputs}.  We begin by describing how the input-output statistics are related to the generalized magic square game. The role of the gate $W(00,00)$ here is to perform \textit{entanglement swapping measurements}. 
Entanglement swapping refers to the fact that if a four-partite system is in a product~$\Phi_{AC_1}\otimes \Phi_{C_2B}$ of two Bell states, and the (Bell) observables~$X_1X_2$ and $Z_1Z_2$ are measured on systems~$C_1C_2$, then the outcomes $s,t\in \{-1,+1\}$ respectively are uniformly distributed and the associated postmeasurement state on $AB$ is (up to a sign) the Bell state $|\Phi_{s,t}\rangle$ defined in Eq.~\eqref{eq:varphistzerozero}.

Now imagine organizing the $4n$ data qubits in the circuit into two rows of $2n$ qubits each, as shown in Fig. \ref{fig:2rows}. Here the first row from left to right contains the qubits labeled $p_1,p_2,p_3,\ldots, p_{2n}$ in Fig.~\ref{fig:1d} and the bottom row contains the qubits $q_1,q_2,\ldots, q_{2n}$. It may be helpful to imagine that Alice holds the two data qubits $p_{2j-1},q_{2j-1}$ and receives input $\alpha\in \{01,10,11\}$ and Bob holds the two qubits $p_{2k},q_{2k}$ and receives input $\beta\in \{01,10,11\}$.     

All $4n$ qubits begin in the state $|0\rangle$. Then a layer of Hadamard gates and a layer of CNOT gates is applied. These gates prepare entangled states $|\Phi\rangle$ between adjacent pairs of qubits as shown in Fig.~\ref{fig:2rows}. The remainder of the circuit consists of: the $W(00,00)$ gates between pairs of adjacent qubits, Alice's Clifford gate $U(\malpha_j)$ on her qubits $p_{2j-1},q_{2j-1}$ and Bob's Clifford gate $V(\mbeta_k)$ acting on his qubits $p_{2k},q_{2k}$, and measurement of all data qubits. Since the remaining gates in the circuit commute, we can imagine that they are performed in two steps. In the first step, we apply the $W(00,00)$ gates and measure all data qubits \textit{except} $p_{2j-1},q_{2j-1}, p_{2k},q_{2k}$. This step performs entanglement swapping measurement between adjacent pairs of qubits in Fig.~\ref{fig:2rows} except those held by Alice or Bob. In particular, entanglement swapping measurements are performed on all qubit pairs
\begin{equation}
(p_{2i}, p_{2i+1}) \quad \text{and } \quad (q_{2i}, q_{2i+1}) \qquad i\in \{1,2,\ldots, n-1\}\setminus\{j-1,k\},
\end{equation}
resulting in uniformly random measurement outcomes 
\begin{equation}
s_i, t_i\in \{-1,+1\} \ \quad \text{and} \quad s'_i, t'_i\in \{-1,+1\} \quad  \qquad i\in \{1,2,\ldots, n-1\}\setminus\{j-1,k\}
\end{equation}
respectively. 
Here each of the outcomes 
\begin{align}
\begin{matrix}
s_i&=&y_i^1\\
t_i&=& x_{i+1}^1
\end{matrix}\qquad\textrm{ and }\qquad 
\begin{matrix}s'_i&=&y_i^2\\
t'_i&=&x_{i+1}^2 
\end{matrix}
\end{align}
is a $\pm 1$ valued variable determined by one of the output bits of the circuit in Fig.~\ref{fig:1d}. The crucial point is that after this step, Alice and Bob's 4 qubits~$(p_{2j-1},p_{2k},q_{2j-1},q_{2k})$ are in the state $|\Phi_{s,t}\rangle\otimes |\Phi_{s',t'}\rangle$ with  $s, s', t, t'$ given in 
\begin{align}
s= \prod_{i=j}^{k-1} s_i \qquad  t=\prod_{i=j}^{k-1} t_i \qquad s'= \prod_{i=j}^{k-1} s_i'  \qquad  t'=\prod_{i=j}^{k-1}t_i'
\end{align}
that is, by Eq.~\eqref{eq:st}.

In the second step, starting from the postmeasurement state $|\Phi_{s,t}\rangle\otimes |\Phi_{s',t'}\rangle$, Alice and Bob apply $U(\malpha_j)$ and $V(\mbeta_j)$ to their qubits and then measure to produce outputs $\mathbf{x}_j, \mathbf{y}_k$, respectively. In other words, they play the winning strategy for the generalized magic square game
 with parameters~$(s,t,s',t')$. This implies the claim. 
\end{proof}

\subsection{Hardness of the 1D Magic Square Problem for constant-depth classical circuits}
\label{sec:hardnessoneDsquare}
We use the following standard notions: A classical circuit~$\cC$
computes a function $F:\{0,1\}^N\rightarrow\{0,1\}^M$ (sometimes we use $\{-1,+1\}$ instead of $\{0,1\}$). It is given by a directed acyclic graph. There are $N$ vertices with in-degree~$0$ corresponding to the input bits and $M$ vertices with out-degree $0$ (corresponding to outputs). Every vertex with in-degree $k>0$  and out-degree $L$ is associated with a {\em gate}, that is,  a boolean function  $f:\{0,1\}^k\rightarrow\{0,1\}$. The output when applying a gate is copied to all~$L$ outgoing edges. Finally, the depth of the circuit, denoted {\em $\mathrm{depth}(\cC)$}, is the maximal number of gates along a path from an input to an output.  Here we consider classical circuits with the property that all gates have constant {\em fan-in}, i.e., the associated in-degree is at most some constant~$K=O(1)$  independent of the problem size. 
When solving a computational problem, we will sometimes only be interested in a subset of the $N$ input bits (encoding a problem instance) and a subset of the $M$~output bits (providing the computed solution).  The remaining in- and output-bits play no role in the following arguments. To model  probabilistic circuits, we proceed similarly: here some subset of the $N$~input bits may be randomly distributed according to an arbitrary distribution independent of the problem instance. 

Recall that $S\subseteq \{0,1\}^{4n}$ is the set of problem instances (inputs) of the 1D Magic Square Problem defined by~\eqref{eq:inputs} in terms of 
tuples $(j,k,\alpha,\beta)$. We prove the following circuit depth lower bound for classical circuits 
solving the 1D Magic Square Problem with constant probability. 
\begin{theorem} Suppose that $\mathcal{C}$ is a classical probabilistic circuit composed of gates of fan-in at most $K$ which, given a randomly chosen input from the set $S$, produces a solution to the corresponding instance of the 1D Magic Square Problem with probability at least $9/10$. Then 
\begin{equation}
\mathrm{depth}(\mathcal{C})\geq \frac{\log(0.00001n)}{2\log(K)}.
\label{eq:depthbound}
\end{equation}
\label{thm:lowerbound}
\end{theorem}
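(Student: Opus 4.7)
The plan is to reduce, via a backward lightcone argument, the task of solving the 1D~Magic Square Problem by a constant-depth classical circuit to the task of classically winning the generalized magic square game of Lemma~\ref{lem:modifiedgamemain} (which, by Lemma~\ref{lem:modifiedgamemain}, has classical value $8/9$). The key ingredients are (i)~the standard lightcone bound for bounded-fan-in circuits, namely that in a depth-$d$, fan-in-$K$ circuit each output bit depends on at most $K^d$ input bits and hence on at most $K^d$ of the sites $i\in[n]$; and (ii)~the necessary condition from Lemma~\ref{lem:stst} relating outputs at positions $j$ and $k$ via $f_{\alpha,\beta}$ and auxiliary parameters $(s,t,s',t')$.

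Concretely, suppose $\mathcal{C}$ is a depth-$d$, fan-in-$K$ circuit solving the 1D~Magic Square Problem on a uniformly random input from~$S$ with probability $\geq 9/10$. Absorbing the internal coins of $\mathcal{C}$ into shared randomness, I would construct the following strategy for Alice and Bob, who additionally share uniformly random $(j^*,k^*)$ with $j^*<k^*$. Given $\alpha\in\{01,10,11\}$, Alice simulates $\mathcal{C}$ with $\alpha_{j^*}=\alpha$ and all other sites set to $00$, and reads off $(x^1_{j^*},x^2_{j^*})$ together with products $s_A,t_A,s'_A,t'_A$ formed from those intermediate outputs $y^b_i,x^b_{i+1}$ (for $i\in[j^*,k^*-1]$) whose lightcones contain $j^*$; she returns $a^i=x^i_{j^*}\cdot f_{\alpha,\iota^{-1}(i)}(s_A,t_A,s'_A,t'_A)$ for $i=1,2$ and $a^3=-a^1a^2$. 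Bob performs the symmetric procedure with $\beta$ and $k^*$, returning $b^i=y^i_{k^*}\cdot f_{\iota^{-1}(i),\beta}(s_B,t_B,s'_B,t'_B)$ and $b^3=b^1b^2$. The parity conditions $a^1a^2a^3=-1$, $b^1b^2b^3=1$ thereby hold by construction.

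To analyze the strategy, call an output bit ``bad'' when its lightcone contains both $j^*$ and $k^*$. A union bound over the at most $4n$ relevant outputs gives
\begin{align}
\Pr[\text{some bad bit}]\;\leq\; 4n\cdot \binom{K^d}{2}\Big/\binom{n}{2}\;\leq\; \frac{4K^{2d}}{n-1}\,.
\end{align}
Conditioned on no bad output, each relevant output depends on at most one of $\alpha_{j^*},\beta_{k^*}$, so Alice's and Bob's simulations jointly reproduce the true circuit output and in particular $s=s_As_B$, $t=t_At_B$, $s'=s'_As'_B$, $t'=t'_At'_B$. Assuming in addition that $\mathcal{C}$ succeeds on the instance, Lemma~\ref{lem:stst} gives $x^{\iota(\beta)}_{j^*}y^{\iota(\alpha)}_{k^*}=f_{\alpha,\beta}(s,t,s',t')$, and the factorization property~\eqref{eq:factorizationproperty} yields
$a^{\iota(\beta)}b^{\iota(\alpha)}=(f_{\alpha,\beta}(s_A,t_A,s'_A,t'_A))^2(f_{\alpha,\beta}(s_B,t_B,s'_B,t'_B))^2=1$. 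Hence the strategy wins the generalized game with probability at least $9/10-4K^{2d}/(n-1)$; comparing with the $8/9$ upper bound of Lemma~\ref{lem:modifiedgamemain} and tracking constants yields $K^{2d}\geq 0.00001\,n$, equivalently~\eqref{eq:depthbound}.

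The main obstacle is the verification step: showing, conditional on no bad output, that the multiplicative adjustments $f_{\alpha,\iota^{-1}(i)}(s_A,\ldots)$ and $f_{\iota^{-1}(i),\beta}(s_B,\ldots)$ turn any valid output of the 1D~Magic Square Problem into a winning transcript for the generalized game. This combines the factorization identity~\eqref{eq:factorizationproperty} (to split $f_{\alpha,\beta}(s,t,s',t')$ into Alice's and Bob's pieces) with the product identities~\eqref{eq:stprod} and~\eqref{eq:tripledefxyk} (to secure the required parities of the output triples). The reduction from probabilistic to deterministic circuits via shared randomness, and careful identification of which output bits actually matter for the winning condition, are routine but need to be stated explicitly to justify the $4n$ union-bound factor above.
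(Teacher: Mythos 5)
Your approach is essentially the paper's: bound the probability of a ``bad'' lightcone configuration, then, conditioned on its complement, extract a classical strategy for the (standard) magic square game from the circuit and invoke the $8/9$ classical value. The algebraic manipulations -- dividing out the $f_{\alpha,\iota^{-1}(i)}$ factors via the factorization property \eqref{eq:factorizationproperty} and checking the parities via \eqref{eq:stprod} and \eqref{eq:tripledefxyk} -- mirror the paper's Lemma~\ref{lem:modifiedgamemain} exactly.

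There is, however, a gap in the definition of the ``bad'' event, and it matters for the correctness of your strategy. You call an output bit bad only when its backward lightcone intersects both $\alpha_{j^*}$ and $\beta_{k^*}$. But your strategy commits Alice to using \emph{her} simulated value of $\mathbf{x}_{j^*}$ (computed with $\beta_{k^*}$ set to $00$) and Bob to using \emph{his} simulated $\mathbf{y}_{k^*}$. If $\mathbf{x}_{j^*}$ happens to depend on $\beta_{k^*}$ but not on $\alpha_{j^*}$ -- which is perfectly consistent with ``no bad bit'' as you defined it -- then Alice's simulated $\mathbf{x}_{j^*}$ need not equal the true value of $\mathbf{x}_{j^*}$ when both $\alpha$ and $\beta$ are set, and the chain $a^{\iota(\beta)}b^{\iota(\alpha)}=x^{\iota(\beta)}_{j^*}y^{\iota(\alpha)}_{k^*}f_{\alpha,\beta}(s_A,\ldots)f_{\alpha,\beta}(s_B,\ldots)$ breaks because the $x^{\iota(\beta)}_{j^*}$ you substitute is not the one that Lemma~\ref{lem:stst} gives information about. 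The claim that ``Alice's and Bob's simulations jointly reproduce the true circuit output'' only says that for each bit, \emph{one} of the two parties has it right, but the strategy does not get to choose who. The paper's event~$E_{\mathcal{C}}$ (see \eqref{eq:eventdefinitionproperty}) includes two extra conditions, $\mathbf{y}_k \cap \flightcone_{\mathcal{C}}(\alpha_j)=\emptyset$ and $\mathbf{x}_j \cap \flightcone_{\mathcal{C}}(\beta_k)=\emptyset$, precisely to rule this out; these contribute an additional $O(K^D/n)$ to the failure probability, which does not affect the final constant. You should enlarge your bad event accordingly. A further, smaller, bookkeeping point: in defining $s_A,t_A,s_A',t_A'$ you include only the intermediate bits whose lightcone contains $j^*$; the bits that depend on \emph{neither} $\alpha_{j^*}$ nor $\beta_{k^*}$ must be attributed to exactly one side (either will do, but not neither), or else $s_A s_B$ will differ from $s$ by the product of those constant factors.
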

In the above, the probability which is at least $9/10$ is taken over all the randomness including both the choice of the input from $S$ as well as the randomness in the classical probabilistic circuit~$\mathcal{C}$.
\begin{figure}
\centering
\begin{tikzpicture}[xscale=0.7, yscale=0.75]
\begin{scope}[shift={(-7.5,0)}]
\draw (0,0)--(3,0);
\draw (0,1)--(3,1);
\draw (1.5, 1.45) node {$|\Phi\rangle$};
\draw (1.5, -0.55) node {$|\Phi\rangle$};
\draw (0,0) node[circle, fill=black, scale=0.5]{};
\draw (3,0) node[circle, fill=black, scale=0.5]{};

\draw (0,1) node[circle, fill=black, scale=0.5]{};
\draw (3,1) node[circle, fill=black, scale=0.5]{};

\draw (0,-0.4) node {$q_1$};
\draw (0,0.6) node {$p_1$};
\draw (3,-0.4) node {$q_2$};
\draw (3,0.6) node {$p_2$};

\draw (3.5,0) node {$\ldots$};

\draw (3.5,1) node {$\ldots$};
\end{scope}

\begin{scope}[shift={(14.5,0)}]
\draw (0,0)--(3,0);
\draw (0,1)--(3,1);
\draw (1.5, 1.45) node {$|\Phi\rangle$};
\draw (1.5, -0.55) node {$|\Phi\rangle$};
\draw (0,0) node[circle, fill=black, scale=0.5]{};
\draw (3,0) node[circle, fill=black, scale=0.5]{};

\draw (0,1) node[circle, fill=black, scale=0.5]{};
\draw (3,1) node[circle, fill=black, scale=0.5]{};

\draw (3,-0.4) node {$q_{2n}$};
\draw (3,0.6) node {$p_{2n}$};

\end{scope}

\draw (0,0)--(3,0);
\draw (0,1)--(3,1);
\draw (1.5, 1.45) node {$|\Phi\rangle$};
\draw (1.5, -0.55) node {$|\Phi\rangle$};
\draw (0,0) node[circle, fill=black, scale=0.5]{};
\draw (3,0) node[circle, fill=black, scale=0.5]{};

\draw (0.35,-0.4) node {$q_{2j-1}$};
\draw (0.35,0.6) node {$p_{2j-1}$};

\draw (0,1) node[circle, fill=black, scale=0.5]{};
\draw (3,1) node[circle, fill=black, scale=0.5]{};

\begin{scope}[shift={(3.5,0)}]
\draw (0,0)--(3,0);
\draw (0,1)--(3,1);
\draw (1.5, 1.45) node {$|\Phi\rangle$};
\draw (1.5, -0.55) node {$|\Phi\rangle$};
\draw (0,0) node[circle, fill=black, scale=0.5]{};
\draw (3,0) node[circle, fill=black, scale=0.5]{};

\draw (0,1) node[circle, fill=black, scale=0.5]{};
\draw (3,1) node[circle, fill=black, scale=0.5]{};
\end{scope}
\begin{scope}[shift={(10.5,0)}]
\draw (0,0)--(3,0);
\draw (0,1)--(3,1);
\draw (1.5, 1.45) node {$|\Phi\rangle$};
\draw (1.5, -0.55) node {$|\Phi\rangle$};
\draw (0,0) node[circle, fill=black, scale=0.5]{};
\draw (3,0) node[circle, fill=black, scale=0.5]{};

\draw (0,1) node[circle, fill=black, scale=0.5]{};
\draw (3,1) node[circle, fill=black, scale=0.5]{};

\draw (3.5,0) node {$\ldots$};

\draw (3.5,1) node {$\ldots$};
\end{scope}

\begin{scope}[shift={(-3.5,0)}]
\draw (0,0)--(3,0);
\draw (0,1)--(3,1);
\draw (1.5, 1.45) node {$|\Phi\rangle$};
\draw (1.5, -0.55) node {$|\Phi\rangle$};
\draw (0,0) node[circle, fill=black, scale=0.5]{};
\draw (3,0) node[circle, fill=black, scale=0.5]{};

\draw (0,1) node[circle, fill=black, scale=0.5]{};
\draw (3,1) node[circle, fill=black, scale=0.5]{};
\end{scope}

\draw (1.5, -1.3) node {};
\begin{scope}[shift={(7,0)}]
\draw (0.5,0)--(3,0);
\draw (0.5,1)--(3,1);
\draw (1.5, 1.45) node {$|\Phi\rangle$};
\draw (1.5, -0.55) node {$|\Phi\rangle$};
\draw (0.5,0) node[circle, fill=black, scale=0.5]{};
\draw (3,0) node[circle, fill=black, scale=0.5]{};

\draw (0,0) node {$\ldots$};

\draw (0,1) node {$\ldots$};

\draw (2.9,-0.4) node {$q_{2k}$};
\draw (2.9,0.6) node {$p_{2k}$};

\draw (0.5,1) node[circle, fill=black, scale=0.5]{};
\draw (3,1) node[circle, fill=black, scale=0.5]{};
\end{scope}
\draw (0.4, 1.8) node {\textbf{Alice}};
\draw[thick, dotted, fill=blue, opacity=0.1] (-0.35,-1) rectangle (1.1,2.2);
\draw [line width=0.5mm, ->] (0.4, 3)--(0.4,2.4);
\draw (0.4, 3.2) node {$\malpha_j=\alpha$};
\draw[line width=0.5mm, ->] (0.4, -1.2)--(0.4, -1.8);
\draw (0.4, -2.25) node {$\mathbf{x}_j$};
\begin{scope}[shift={(7, 0)}]
\draw (2.6, 1.8) node {\textbf{Bob}};
\draw[thick, dotted, fill=red, opacity=0.1] (1.9,-1) rectangle (3.35,2.2);
\draw [line width=0.5mm, ->] (2.5, 3)--(2.5,2.4);
\draw (2.5, 3.2) node {$\mbeta_k=\beta$};
\draw[line width=0.5mm, ->] (2.5, -1.2)--(2.5, -1.8);
\draw (2.5, -2.25) node {$\mathbf{y}_k$};
\end{scope}

\end{tikzpicture}
\caption{The measurement statistics of the circuit in Fig.~\ref{fig:1d}, when applied to an input of the form Eq.~\eqref{eq:inputs}, are related to those of the generalized magic square game played by Alice holding qubits $p_{2j-1},q_{2j-1}$ and Bob holding qubits $p_{2k},q_{2k}$. 
The measurement outcomes of the qubits located between Alice and Bob determine the particular generalized magic square game, i.e., the parameters $s,t,s',t'\in \{-1,+1\}$. 
\label{fig:2rows}}
\end{figure}
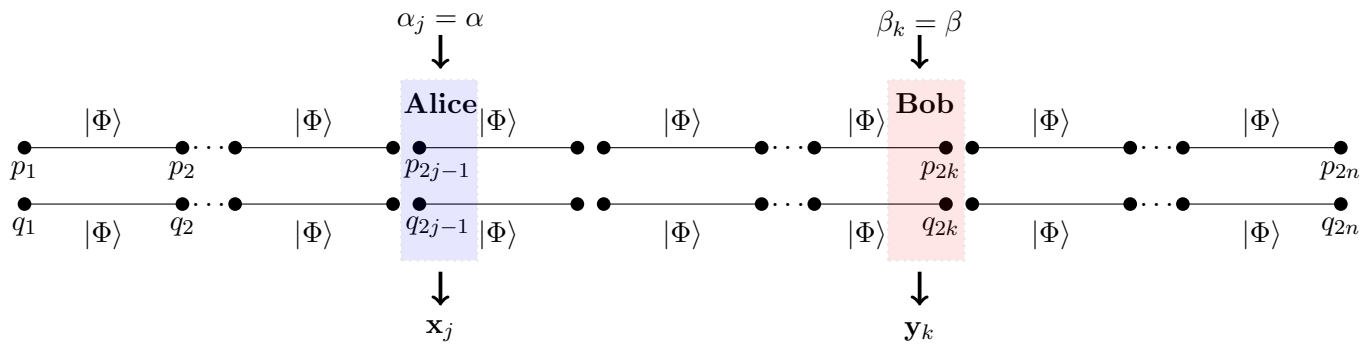
The proof of Theorem \ref{thm:lowerbound} is comprised of two main steps, which we will prove separately below. 
\begin{enumerate}
\item We show that with high probability the lightcones of the inputs and outputs of the magic square game do not intersect in a certain way (Lemma~\ref{lem:lightcones2}).

\item Finally, we show that if the lightcones do not intersect in a certain way, then a classical circuit can succeed with probability at most $8/9$ at solving the 1D Magic Square Problem (Lemma~\ref{lem:modifiedgamemain}).
\end{enumerate}


We shall consider circuits in which some of the input bits are random; they can be drawn from any probability distribution which is independent of the problem instance. Since we are interested in upper bounding the average success probability when solving a computational problem, by convexity it suffices to assume that the circuit is in fact deterministic and that the input bits are assigned fixed values independent of the instance which maximize the average success probability.  Thus our results apply to  probabilistic circuits, but we will not explicitly deal with randomness in the following arguments.

\subsubsection{Signaling properties of constant-depth circuits: lightcones}
The connectivity of the acyclic graph underlying a classical circuit~$\cC$ determines its capability to signal, ultimately restricting the set of functions $F:\{0,1\}^N\rightarrow\ \{0,1\}^M$ it can compute and thus the range of problems a given circuit can solve.  In the following, we will use variables  $x_j$ (with $1\leq j\leq N$) and $z_k$ (with $1\leq k\leq M$) 
to denote individual input- and output-bits of a circuit~$\cC$ computing $z=F(x)$. 
If there is a string $x\in \{0,1\}^N$ such that the value of the $k$-th bit of the output $F(x)$ changes when flipping the $j$-th bit of $x$, we say that $\{x_j,z_k\}$ are correlated.
 The following definitions are convenient:
 \begin{definition}
 Given every input bit $x_j$, the {\em forward lightcone} $\flightcone_{\cC}(x_j)$ is the set of output bits~$z_k$ such that $\{x_j,z_k\}$ are correlated.  Similarly, for every output bit $z_k$, the {\em backward lightcone} $\blightcone_{\cC}(z_k)$ is the set of input bits~$x_j$ such that $\{x_j,z_k\}$ are correlated. More generally, for any set $I$ of input bits and any set~$O$ of output bits, we set
\begin{align}
\flightcone_{\cC}(I)=\bigcup_{x\in I}\flightcone_{\cC}(x)\qquad\textrm{and }\qquad
\blightcone_{\cC}(O)=\bigcup_{z\in O}\blightcone_{\cC}(z)\ .
\end{align}
 \end{definition}

A circuit~$\cC$ of depth~$D$ and gates with fan-in upper bounded by $K=O(1)$ has the following crucial property:
the backward lightcone of each output bit~$z$ satisfies
\begin{equation}
|\blightcone_{\mathcal{C}}(z)| \leq K^D\ . \label{eq:lcone}
\end{equation}

Eq.~\eqref{eq:lcone} also imposes limitations on the size of ``most'' forward lightcones, as well as the restrictions on 
the intersections of forward lightcones associated with distinct input bits. We will use the following probabilistic statements:
\begin{lemma}
Let $\mathcal{C}$ be a classical circuit consisting of gates with fan-in upper bounded by~$K$, depth~$D$, and $M$ output bits.
Then the following holds:
\begin{enumerate}[(i)]
\item\label{it:claimonelightcone}
Let $O$ be a fixed subset of output bits and suppose $I$ is a randomly chosen subset of input bits such that 
\begin{align}
\Pr[v\in I]&\leq q
\end{align}
for every input bit $v$. Then
\begin{align}
\Pr[O\cap \flightcone_{\cC}(I)\neq \emptyset ]&\leq q|O| 2^{|O|} K^{D}\ .
\end{align}

\item\label{it:claimtwolightcone}
 Suppose $I$ and $J$ are randomly chosen disjoint subsets of input bits such that, for any two  input bits $v,w$ we have
\begin{equation}
\mathrm{Pr}[v\in I \text{ and } w\in J]\leq p.
\label{eq:pupperbound}
\end{equation}
Then 
\[
\mathrm{Pr}\left[\flightcone_{\cC}(I)\cap \flightcone_{\cC}(J)\neq \emptyset\right]\leq pM K^{2D}.
\]
\end{enumerate}
\label{lem:lightcones}
\end{lemma}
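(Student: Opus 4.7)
The plan is to reduce both statements to straightforward union bounds, building on the elementary size estimate $|\blightcone_\cC(z)|\leq K^D$ for each output bit $z$. This size bound holds because tracing paths backward from $z$ through at most $D$ layers of gates, each with fan-in at most $K$, reaches at most $K^D$ input bits, and the correlation-based backward lightcone is contained in this graph-reachable set. Throughout the argument I would exploit the symmetry of the correlation relation: $v\in\blightcone_\cC(z)$ if and only if $z\in\flightcone_\cC(v)$.

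For part~\eqref{it:claimonelightcone}, I would reformulate the event $O\cap\flightcone_\cC(I)\neq\emptyset$ as the existence of some pair $(v,z)$ with $v\in I$, $z\in O$, and $v\in\blightcone_\cC(z)$. A union bound over $z\in O$ yields
\begin{align}
\Pr[O\cap\flightcone_\cC(I)\neq\emptyset]\leq\sum_{z\in O}\Pr[\blightcone_\cC(z)\cap I\neq\emptyset]\ .
\end{align}
A further union bound controls each term: $\Pr[\blightcone_\cC(z)\cap I\neq\emptyset]\leq\sum_{v\in\blightcone_\cC(z)}\Pr[v\in I]\leq qK^D$. The desired bound then follows (the crude factor $2^{|O|}$ in the statement is, in fact, slack relative to this direct argument, which already produces the tighter estimate $q|O|K^D$).

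For part~\eqref{it:claimtwolightcone}, the event $\flightcone_\cC(I)\cap\flightcone_\cC(J)\neq\emptyset$ occurs if and only if some output bit $z$ has $\blightcone_\cC(z)$ meeting both $I$ and $J$. Union-bounding first over the $M$ output bits, each contributes at most
\begin{align}
\Pr[\exists\,v\in I\cap\blightcone_\cC(z),\ w\in J\cap\blightcone_\cC(z)]\leq\sum_{v,w\in\blightcone_\cC(z)}\Pr[v\in I\text{ and }w\in J]\leq K^{2D}p\ ,
\end{align}
where we used $|\blightcone_\cC(z)|^2\leq K^{2D}$ and the hypothesis~\eqref{eq:pupperbound}. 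Summing over $z$ yields the claimed bound $pMK^{2D}$.

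The arguments are essentially mechanical and I anticipate no serious obstacle. The only point requiring care is to distinguish the correlation-based lightcone from graph connectivity: since correlation requires a corresponding path in the circuit, the graph-theoretic estimate $|\blightcone_\cC(z)|\leq K^D$ is the appropriate crude upper bound on lightcone sizes and is all that the union-bound arguments need.
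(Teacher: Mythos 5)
Your proof is correct and, for part~(\ref{it:claimtwolightcone}), is essentially the same as the paper's: union-bound over the $M$ output bits $z$, then over pairs $(v,w)\in\blightcone_{\cC}(z)^2$, using the hypothesis~\eqref{eq:pupperbound} and the size bound $|\blightcone_{\cC}(z)|\le K^D$. For part~(\ref{it:claimonelightcone}), however, you take a cleaner route. The paper decomposes $\Pr[O\cap\flightcone_{\cC}(I)\neq\emptyset]$ by summing over every nonempty subset $P\subseteq O$, writing $\Pr[O\cap\flightcone_{\cC}(I)=P]\le\Pr[I\cap\blightcone_{\cC}(P)\neq\emptyset]$ and then union-bounding over $v\in\blightcone_{\cC}(P)$; the subset enumeration is what produces the $2^{|O|}$ factor. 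You instead observe directly that $O\cap\flightcone_{\cC}(I)\neq\emptyset$ iff some $z\in O$ has $\blightcone_{\cC}(z)\cap I\neq\emptyset$, union-bound over $z\in O$ and then over $v\in\blightcone_{\cC}(z)$, arriving at $q|O|K^D$. This yields a strictly tighter bound (without the $2^{|O|}$ slack), which of course still implies the stated inequality, and you correctly flagged the paper's constant as loose. Both approaches rest on the same two ingredients — the duality $v\in\blightcone_{\cC}(z)\Leftrightarrow z\in\flightcone_{\cC}(v)$ and the crude size estimate $|\blightcone_{\cC}(z)|\le K^D$ — so what your version buys is simply avoiding an unnecessary exponential factor in $|O|$; in the application within Lemma~\ref{lem:lightcones2} the set $O$ has constant size so this makes no asymptotic difference, but the streamlined argument is preferable.
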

\begin{proof}
We have
\begin{align}
\Pr[O\cap \flightcone_{\cC}(I)\neq \emptyset ]&=\sum_{\substack{P\subseteq O\\
P\neq \emptyset}} \Pr\left[O\cap \flightcone_{\cC}(I)=P\right]\\
&\leq \sum_{\substack{P\subseteq O\\
P\neq \emptyset}} \Pr\left[I\cap \blightcone_{\cC}(P)\neq\emptyset\right]\\
&\leq \sum_{\substack{P\subseteq O\\
P\neq \emptyset}} \sum_{v\in \blightcone_{\cC}(P)}\Pr\left[v\in I\right]\\
&\leq 2^{|O|} |O|K^D q\ .
\end{align}
Here we have used~\eqref{eq:lcone}, which implies that  $|\blightcone_{\cC}(P)|\leq |P| K^D\leq |O| K^D$ 
for $P\subseteq O$. This shows the claim~\eqref{it:claimonelightcone}.

For the proof of claim~\eqref{it:claimtwolightcone}, let $V_{out}$ be the set of all output bits, so that $|V_{out}|=M$. A union bound gives
\begin{align}
\mathrm{Pr}\left[\flightcone_{\cC}(I)\cap \flightcone_{\cC}(J)\neq \emptyset\right] & \leq \sum_{z\in V_{out}} \mathrm{Pr}\left[I\cap  \blightcone_{\cC}(z)\neq \emptyset \textrm{ and } J \cap  \blightcone_{\cC}(z)\neq \emptyset\right] \\
& \leq  \sum_{z\in V_{out}} \; \sum_{v,w\in  \blightcone_{\cC}(z)}\mathrm{Pr}[v\in I \text{ and } w\in J]\\\
& \leq  \sum_{z\in V_{out}} \sum_{v,w\in  \blightcone_{\cC}(z)}  p\\
&\leq pMK^{2D}.
\end{align}
where in the last line we used Eq.~\eqref{eq:lcone}.
\end{proof}

\subsubsection{Proof of hardness using lightcones}
Let us now give and prove the formal statements corresponding to the steps outlined after the statement of Theorem~\ref{thm:lowerbound}. We consider a classical circuit~$\cC$ for the 1D Magic Square Problem. Such a circuit has  
inputs
 $(\alpha_1,\ldots,\alpha_n,\beta_1,\ldots,\beta_n)\in (\{0,1\}^2)^{2n}\equiv \{0,1\}^{4n}$ and outputs
 $(\mathbf{x}_1,\ldots,\mathbf{x}_n,\mathbf{y}_1,\ldots,\mathbf{y}_n)\in ((\{-1,+1\})^2)^{2n}\equiv \{-1,+1\}^{4n}$. 
 We first define an event which occurs with high probability and ensures that the lightcones of certain input/output bits do not intersect. 
\begin{lemma}
Consider a classical probabilistic circuit $\cC$ of depth $D$, with $4n$ output bits, and composed of gates of fan-in at most $K$.
Define the event $E_{\cC} \subset S$ in which the input parameters $1\leq j<k\leq n$ in Eq.~\eqref{eq:inputs} satisfy
\begin{align}
	\flightcone_{\cC}(\malpha_j)\cap \flightcone_{\cC}(\mbeta_k) =\emptyset \quad \text{ and }\quad  \mathbf{y}_k \cap \flightcone_{\cC}(\malpha_j)= \emptyset \quad \text{and} \quad \mathbf{x}_j \cap \flightcone_{\cC}(\mbeta_k)=\emptyset\ .\label{eq:eventdefinitionproperty}
\end{align}
Under a uniform choice of input from $S$, the event $E_{\cC}$ occurs with probability $\mathrm{Pr}[E_{\cC}] \geq 1-\frac{80K^{2D}}{n}$.
\label{lem:lightcones2}
\end{lemma}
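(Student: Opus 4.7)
The plan is to bound the probability of each of the three sub-events in~\eqref{eq:eventdefinitionproperty} separately via Lemma~\ref{lem:lightcones} and conclude by a union bound. The key structural observation is that under the uniform distribution on $S$, the data $(j,k,\alpha,\beta)$ are independent, and the three lightcone conditions depend only on the positions $(j,k)$, not on the bit values $(\alpha,\beta)$. Moreover, $(j,k)$ is uniform over the $\binom{n}{2}$ ordered pairs with $j<k$, so for any fixed input-slot indices $i_v,i_w$,
\begin{align}
\Pr[j=i_v \text{ and }k=i_w]\le \frac{2}{n(n-1)},\qquad \Pr[j=i_v]\le \frac{2}{n}, \qquad \Pr[k=i_w]\le \frac{2}{n}.
\end{align}
This will let us plug directly into Lemma~\ref{lem:lightcones}.

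For the first event $\flightcone_{\cC}(\alpha_j)\cap\flightcone_{\cC}(\beta_k)=\emptyset$, I would apply Lemma~\ref{lem:lightcones}(\ref{it:claimtwolightcone}) with $I=\alpha_j$ and $J=\beta_k$. These are disjoint (they sit in disjoint $\alpha$- and $\beta$-bit regions of the input), and the joint inclusion probability for any pair of input bits is bounded by $p=2/(n(n-1))$ from the observation above. With $M=4n$ output bits, this yields $\Pr[\text{event (a) fails}]\le pMK^{2D}=8K^{2D}/(n-1)$.

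For the events $\mathbf{y}_k\cap\flightcone_{\cC}(\alpha_j)=\emptyset$ and $\mathbf{x}_j\cap\flightcone_{\cC}(\beta_k)=\emptyset$, I would condition on $k$ (respectively $j$) and apply Lemma~\ref{lem:lightcones}(\ref{it:claimonelightcone}). Concretely, fixing $k=k_0$ makes the output set $O=\mathbf{y}_{k_0}$ deterministic with $|O|=2$, while $I=\alpha_j$ is governed by $j$ uniform on $\{1,\ldots,k_0-1\}$, so every individual input bit lies in $I$ with probability at most $q=1/(k_0-1)$. Lemma~\ref{lem:lightcones}(\ref{it:claimonelightcone}) then gives a conditional failure probability of at most $8K^D/(k_0-1)$. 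Multiplying by the marginal $\Pr[k=k_0]=2(k_0-1)/(n(n-1))$ causes the $(k_0-1)$ to cancel, and summing over $k_0$ yields $16K^D/n$. The symmetric argument, conditioning on $j=j_0$ and exchanging the roles of $\alpha$ and $\beta$, controls the third event by the same quantity.

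Assembling the three bounds via the union bound gives $\Pr[E_{\cC}]\ge 1-8K^{2D}/(n-1)-32K^D/n$, which is bounded below by $1-80K^{2D}/n$ for all $n\ge 2$ (using $K\ge 1$ and $1/(n-1)\le 2/n$). I expect no real obstacle here: the only substantive step is recognizing that the hypothesis of Lemma~\ref{lem:lightcones}(\ref{it:claimtwolightcone}) is met with $p=2/(n(n-1))$ because the input bits of $\alpha_j$ and $\beta_k$ automatically sit in disjoint parts of the input string, so no separate disjointness argument is required. The rest is a careful bookkeeping of constants to match the stated factor of $80$.
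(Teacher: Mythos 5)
Your proof is correct and follows essentially the same approach as the paper: decompose $E_{\cC}^c$ into the three lightcone sub-events, bound the first via Lemma~\ref{lem:lightcones}(\ref{it:claimtwolightcone}) with $p=2/(n(n-1))$, bound the second and third via Lemma~\ref{lem:lightcones}(\ref{it:claimonelightcone}), and union-bound. The only difference is a minor bookkeeping choice in how one passes from the full grid $\{1,\ldots,n\}^2$ to the constrained set $j<k$ for the second and third sub-events --- you condition on $k$ (resp.\ $j$) and sum, whereas the paper bounds the unconditional probability and multiplies by $n^2/\binom{n}{2}\leq 4$ --- and both land comfortably within the stated $80K^{2D}/n$.
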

\begin{proof}
Consider a random input from the set $S$. In other words, suppose $(j,k,\alpha,\beta)$ is a uniformly random tuple satisfying Eq.~\eqref{eq:tuple} and consider the associated input Eq.~\eqref{eq:inputs}. We claim that with high probability the lightcones of the input bits $\malpha_j \in \{0,1\}^2$ and $\mbeta_k\in \{0,1\}^2$ do not intersect. In particular, we may apply Lemma \ref{lem:lightcones} with subsets $I=\malpha_j$ and $J=\mbeta_k$ each containing two input bits, for $j<k$ chosen uniformly at random. Note that by definition, these sets are disjoint. 
Fix any two (distinct)  input bits~$v,w$. Since the sets $\{\alpha_j\}_j\cup \{\beta_k\}_k$ form a disjoint partition of the set of input bits, it is clear that~$\mathrm{Pr}\left[v\in \malpha_j \text{ and } w\in \mbeta_k\right]$ (for randomly chosen $j<k$) vanishes unless $v\in \malpha_{j_*}$ and $w\in \mbeta_{k_*}$ for some $j_*<k_*$. In the latter case, we have
\begin{align}
\mathrm{Pr}\left[v\in \malpha_j \text{ and } w\in \mbeta_k\right]&= 
\Pr\left[(j,k)=(j_*,k_*)\right]=\frac{2}{n(n-1)}
\end{align}
since there are ${n}\choose{2}$ pairs $1\leq j<k\leq n$. Applying part~\eqref{it:claimtwolightcone} of the Lemma with $p=2/(n(n-1))$ and $M=4n$ gives
\begin{equation}
\mathrm{Pr}\left[ \flightcone_{\cC}(\malpha_j)\cap \flightcone_{\cC}(\mbeta_k) \neq \emptyset\right]\leq \frac{8K^{2D}}{n-1}.
\label{eq:intersectbound}
\end{equation}

Consider a set $O=\mathbf{x}_{j}$ consisting of two output bits for a fixed $j\in \{1,\ldots,n\}$. Let $I=\beta_{k}$ for a uniformly chosen~$k\in \{1,\ldots,n\}$. Applying~\eqref{it:claimonelightcone} of Lemma~\ref{lem:lightcones}
to the random subset $I$ (consisting of two input bits) with  $q=1/n$ gives 
\begin{align}
\Pr_{k}\left[\mathbf{x}_{j}\cap \flightcone_{\cC}(\beta_{k})\neq \emptyset \right]\leq \frac{8 K^D}{n}\ .
\end{align}
Since this holds for any $1\leq j\leq n$, the probability that~$\mathbf{x}_{j}\cap \flightcone_{\cC}(\beta_{k})\neq \emptyset$ for uniformly chosen $(j,k)\in \{1,\ldots,n\}^2$ is upper bounded by~$8K^D/n$. This implies  that choosing $(j,k)$ uniformly at random subject to $j<k$, we have
that
\begin{align}
\Pr\left[\mathbf{x}_{j}\cap \flightcone_{\cC}(\beta_{k})\neq \emptyset \right]\leq \frac{32 K^D}{n}\ \label{eq:eventpart2}
\end{align}
because the number~$\binom{n}{2}$ of such pairs satisfies $n^2/\binom{n}{2}\leq 4$. By the same reasoning, we have
\begin{equation}
\mathrm{Pr}\left[ \mathbf{y}_k \cap \flightcone_{\cC}(\malpha_j)\neq \emptyset\right]\leq \frac{32K^D}{n}.
\label{eq:eventpart3}
\end{equation}
when $(j,k)$  is part of a uniformly chosen tuple~$(j,k,\alpha,\beta)$ from $S$. 

Applying the union bound and Eqs.~\eqref{eq:intersectbound}, \eqref{eq:eventpart2}, \eqref{eq:eventpart3} we get $\mathrm{Pr}[E_{\cC}] \geq 1-\frac{8K^{2D}}{n-1}-\frac{64K^{D}}{n}\geq 1-\frac{80K^{2D}}{n}.$
\end{proof}

We remark that for a uniform choice of input~$(j,k,\alpha,\beta)$ from~$S$, the marginal distribution of $(\alpha,\beta)$ conditioned on the event $E_{\cC}$ is uniform on the set $\{01,10,11\}^2$. This is because the event~$E_{\cC}$ only depends on $(j,k)$ by definition, and the uniform distribution over $S$ is of product form.  
 In the following, we consider such fixed values of $(j,k)$ and establish an upper bound on the probability that the output of the classical circuit~$\cC$ is a valid solution of the 1D~Magic Square Problem. To this end, we show that the necessary condition of Lemma~\ref{lem:stst} is only satisfied with probability at most~$8/9$ for uniformly chosen~$(\alpha,\beta)$:  
 \begin{lemma}\label{lem:modifiedgamemain}
Consider a classical circuit $\cC$ as in Lemma~\ref{lem:lightcones2}, and let $1\leq j<k\leq n$ 
be such that the event~$E_{\cC}$ occurs. Suppose $(\alpha,\beta)\in \{01,10,11\}^2$ are chosen uniformly at random,  and the input~$z_{in}=(\alpha_1,\ldots,\alpha_n,\beta_1,\ldots,\beta_n)$  of~$S$  specified by $(j,k,\alpha,\beta)$ is 
fed to~$\cC$. Then the average probability that $\cC$ outputs~$z_{out}=(\mathbf{x}_1,\ldots,\mathbf{x}_n,\mathbf{y}_1,\ldots,\mathbf{y}_n)$ such that $(z_{in},z_{out})$ satisfy the condition of Lemma~\ref{lem:stst} is at most~$8/9$.
\end{lemma}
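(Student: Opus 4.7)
The plan is to reduce the lemma to the classical $8/9$ bound for the standard magic square game by converting the circuit~$\cC$ (restricted to inputs from~$S$ with the fixed pair~$(j,k)$) into an explicit deterministic classical strategy for that game. By the convexity/derandomization argument noted before the statement, it suffices to treat~$\cC$ as deterministic.

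First I would exploit the event~$E_{\cC}$. For inputs from~$S$ specified by $(j,k,\alpha,\beta)$, the only nonzero input bits are in $\alpha_j$ and $\beta_k$, so every output bit is fully determined by whether it lies in $\flightcone_{\cC}(\alpha_j)$, in $\flightcone_{\cC}(\beta_k)$, or in neither. Under $E_{\cC}$ these two lightcones are disjoint, so each output bit is either constant, a function of $\alpha$ alone, or a function of $\beta$ alone. Moreover $\mathbf{x}_j\cap \flightcone_{\cC}(\beta_k)=\emptyset$ and $\mathbf{y}_k\cap \flightcone_{\cC}(\alpha_j)=\emptyset$, so $(x_j^1,x_j^2)$ depends only on $\alpha$ and $(y_k^1,y_k^2)$ depends only on $\beta$. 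Applied to the output bits appearing in $s,t,s',t'$ of Eq.~\eqref{eq:st}, this yields factorizations $s=s_A(\alpha)s_B(\beta)$, $t=t_A(\alpha)t_B(\beta)$, $s'=s'_A(\alpha)s'_B(\beta)$, $t'=t'_A(\alpha)t'_B(\beta)$, where any purely constant contribution is absorbed into the $A$-factor. Applying Eq.~\eqref{eq:factorizationproperty} then gives
\begin{equation}
f_{\alpha,\beta}(s,t,s',t')=f_{\alpha,\beta}(s_A,t_A,s'_A,t'_A)\cdot f_{\alpha,\beta}(s_B,t_B,s'_B,t'_B).
\end{equation}

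Next I would define a deterministic strategy for the standard magic square game by using these multipliers to twist Alice's and Bob's outputs:
\begin{equation}
\tilde{x}_j^i(\alpha)=x_j^i(\alpha)\cdot f_{\alpha,\iota^{-1}(i)}(s_A(\alpha),t_A(\alpha),s'_A(\alpha),t'_A(\alpha)),\qquad \tilde{y}_k^i(\beta)=y_k^i(\beta)\cdot f_{\iota^{-1}(i),\beta}(s_B(\beta),t_B(\beta),s'_B(\beta),t'_B(\beta)),
\end{equation}
for $i=1,2,3$, with the convention $x_j^3=-x_j^1x_j^2$ and $y_k^3=y_k^1y_k^2$ from Eq.~\eqref{eq:tripledefxyk}. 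Each multiplier depends only on the respective player's input, so this is a valid local strategy. The column/row parity constraints of the standard game follow from $x_j^1x_j^2x_j^3=-1$ and $y_k^1y_k^2y_k^3=1$ combined with the two product identities in Eq.~\eqref{eq:stprod} (applied to $(s_A,t_A,s'_A,t'_A)$ and $(s_B,t_B,s'_B,t'_B)$ respectively). Finally, whenever the necessary condition Eq.~\eqref{eq:winconditionnecessary} of Lemma~\ref{lem:stst} holds, we obtain $\tilde{x}_j^{\iota(\beta)}\,\tilde{y}_k^{\iota(\alpha)}=x_j^{\iota(\beta)}y_k^{\iota(\alpha)}\cdot f_{\alpha,\beta}(s,t,s',t')=f_{\alpha,\beta}(s,t,s',t')^2=1$, which is exactly the agreement condition of the standard magic square game.

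Thus every input $(\alpha,\beta)$ for which Eq.~\eqref{eq:winconditionnecessary} is satisfied is a winning round for the constructed classical strategy. Since the classical value of the standard magic square game under uniform $(\alpha,\beta)\in\{01,10,11\}^2$ is $8/9$, the probability that~$\cC$ outputs $z_{out}$ satisfying the condition of Lemma~\ref{lem:stst} is at most $8/9$, proving the claim. The only real work beyond invoking the standard bound is the bookkeeping to verify that the lightcone hypotheses actually produce the $A/B$ factorization of $(s,t,s',t')$, and that the parity constraints are preserved by the twist; both steps are immediate once the factorization property Eq.~\eqref{eq:factorizationproperty} and the product identities Eq.~\eqref{eq:stprod} are in hand.
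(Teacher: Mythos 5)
Your proposal is correct and follows essentially the same route as the paper: exploit $E_{\cC}$ to get the $A/B$ factorization of $(s,t,s',t')$ and the local dependence of $(x_j^i,y_k^i)$, define the twisted strategy $(\tilde{x},\tilde{y})$, verify the parity constraints via Eq.~\eqref{eq:stprod}, and reduce to the $8/9$ classical value of the magic square game. The only cosmetic difference is that you argue directly (the strategy wins exactly when the necessary condition of Lemma~\ref{lem:stst} holds) rather than by contradiction as in the paper, which is a slightly cleaner phrasing of the identical argument.
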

\begin{proof}
For $(j,k)$ as described, the definition of~$E_{\cC}$ implies that  any output bit of the circuit 
is either independent of both $\alpha$ and $\beta$, or depends  on $\malpha_j=\alpha$ or $\mbeta_k=\beta$ only (but not both). To check the condition
of Lemma~\ref{lem:stst}, we should focus on the output bits $\mathbf{x}_\ell=(x_\ell^1,x_\ell^2)$ and  $\mathbf{y}_{\ell}=(y_\ell^1,y_\ell^2)$ for $\ell=j,\ldots,k$; these define
the variables  $s,t,s',t'$ as described in Eq.~\eqref{eq:st}, as well as the triples $(x^1,x^2,x^3)\equiv (x^1_j,x^2_j,x^3_j)$ and $(y^1,y^2,y^3)\equiv (y^1_k,y^2_k,y^3_k)$ by~\eqref{eq:tripledefxyk}.

Since no output bit can depend on both $\alpha$ and $\beta$, and the parameters $s,s',t,t'$ are computed by taking products of output bits of $\cC$, their dependence on $(\alpha,\beta)$ has  the functional form (suppressing their dependence on all other inputs):
\begin{align}
\begin{matrix}
s(\alpha,\beta)&=&s_A(\alpha)s_B(\beta)\\
t(\alpha,\beta)&=&t_A(\alpha)t_B(\beta)\\
s'(\alpha,\beta)&=&s_A'(\alpha)s_B'(\beta)\\
t'(\alpha,\beta)&=&t_A'(\alpha)t_B'(\beta)
\end{matrix}\ \qquad\textrm{ for all }\alpha,\beta\in \{01,10,11\}\ .
\label{eq:sjtjfcts}
\end{align}
That is, the circuit~$\cC$ gives rise to certain functions $s_A,s_B,t_A,t_B,s_A',s_B',t_A',t_B':\{01,10,11\}\rightarrow\{-1,+1\}$ such that~\eqref{eq:sjtjfcts} is satisfied. Moreover, since the event $E_{\cC}$ occurs we also have the functional (in)dependence
\begin{align}
x^1&=x^1(\alpha), \qquad x^2=x^2(\alpha), \qquad y^1=y^1(\beta), \qquad y^2=y^2(\beta) \ ,
\label{eq:xy}
\end{align}
that is, the circuit defines functions $x^1,x^2,x^3,y^1,y^2,y^3:\{01,10,11\}\rightarrow\{-1,+1\}$ where
\begin{align}
x^3(\alpha)&\equiv -x^1(\alpha)x^2(\alpha) \qquad\textrm{ and }\qquad y^3(\alpha)=y^1(\alpha)y^2(\alpha)\qquad\textrm{ for  }\alpha\in \{01,10,11\}\ .\label{eq:x3alphay3alpahdef}
\end{align}
Now note that under the restrictions expressed by~\eqref{eq:sjtjfcts} the 
necessary condition~\eqref{eq:winconditionnecessary}  takes on the form
\begin{align}
  x^{\iota(\beta)}y^{\iota(\alpha)} = f_{\alpha,\beta} ( s_{A}(\alpha), t_A(\alpha), s'_A(\alpha), t'_A(\alpha))  \cdot f_{\alpha,\beta} (s_B(\beta), t_B(\beta), s'_B(\beta), t'_B(\beta)) \,. \label{eq:necessaryconditionecevent}
\end{align}
where we used the factorization property~\eqref{eq:factorizationproperty} of the functions $f_{\alpha,\beta}$.

Suppose for the sake of contradiction that the outputs produced by the circuit on a random input $(\alpha,\beta)$ satisfy the condition~\eqref{eq:necessaryconditionecevent} with probability greater than~$8/9$. 
Using the functions introduced above, let us define the functions 
$\tilde{x}^i,\tilde{y}^j:\{01,10,11\}\rightarrow \{-1,+1\}$
for $i,j=1,2,3$ by 
\begin{align}
\begin{matrix}
\tilde{x}^i(\alpha)&=&x^i (\alpha) f_{\alpha,\iota^{-1}(i)} \big( s_{A}(\alpha), t_A(\alpha), s'_A(\alpha), t'_A(\alpha) \big)\qquad&\textrm{ for all }&\alpha\in \{01,10,11\}\ \ \\
\tilde{y}^j(\beta) &=&y^j (\beta) f_{\iota^{-1}(j),\beta} \big( s_{B}(\beta), t_B(\beta), s'_B(\beta), t'_B(\beta) \big)\qquad&\textrm{ for all }&\beta\in \{01,10,11\}\ .
\end{matrix}
\end{align}
We note that since $(\tilde{x}^1(\alpha),\tilde{x}^2(\alpha),\tilde{x}^3(\alpha))$ and 
$(\tilde{y}^1(\beta),\tilde{y}^2(\beta),\tilde{y}^3(\beta))$ can be computed from~$\alpha$ respectively~$\beta$ only, these functions constitute a classical strategy for the magic square game. 

 We claim that this strategy satisfies  the winning condition for the magic square game
 with probability exceeding~$8/9$. To verify this, note that  because of property
  Eq.~\eqref{eq:stprod} of the functions~$f_{\alpha,\beta}$ and Definition~\eqref{eq:x3alphay3alpahdef}  we have 
\begin{align}
 \prod_{i=1}^3 \tilde{x}^i(\alpha) = \prod_{i=1}^3 x^i(\alpha) \prod_{i=1}^3 f_{\alpha,\iota^{-1}(i)} \big( s_{A}(\alpha), t_A(\alpha), s'_A(\alpha), t'_A(\alpha) \big)=  \prod_{i=1}^3 x^i(a) = -1\ ,
\end{align}
and similarly $\prod_{j=1}^3 \tilde{y}^j(b) = 1$. Thus the parity constraints in the magic square game are satisfied with probability one.  On the other hand, the equality constraint $\tilde{x}^{\iota(\beta)}=\tilde{y}^{\iota(\alpha)}$ of the game, that is, $\tilde{x}^{\iota(\beta)}(\alpha)=\tilde{y}^{\iota(\alpha)}(\beta)$, is equivalent to the condition~\eqref{eq:necessaryconditionecevent} by definition of the strategy, hence satisfied with probability greater than~$8/9$ by assumption.

Since we know that the magic square game cannot be won using a classical strategy with probability exceeding $8/9$, this contradicts our assumption and concludes the proof.
\end{proof}

We can now combine the above three lemmas to prove the theorem.

\begin{proof}[Proof of Theorem \ref{thm:lowerbound}]
Because Lemma~\ref{lem:modifiedgamemain} holds for all pairs $(j,k)$ constituting the event $E_{\cC}$ (cf.~\eqref{eq:eventdefinitionproperty}), and because of
the necessity of satisfying the generalized Magic Square Relation when solving the 1D Magic Square Problem (see Lemma~\ref{lem:stst}), we conclude that the success probability of such a circuit~$\cC$ conditioned on the event~$E_{\cC}$ is bounded by
\begin{align}
\Pr\left[\cC\textrm{ succeeds}\ |\ E_{\cC}\right]\leq 8/9\ .
\end{align}
 Using this fact and Lemma~\ref{lem:lightcones2} we get
\begin{align}
\mathrm{Pr}\left[\cC \text{ succeeds}\right]& \leq \mathrm{Pr}\left[\cC \text{ succeeds}\big| E_{\cC}\right]+\left(1-\mathrm{Pr}[E_{\cC}]\right) \leq \frac{8}{9}+\frac{80K^{2D}}{n}.
\end{align}
Now suppose that the circuit succeeds with probability at least $9/10$ as stated in the theorem. Bounding the right hand side in this way and rearranging gives
\[
K^{2D}\geq \left(9/10-8/9\right)\frac{n}{80}\geq 0.00001n \,,
\]
and taking logarithms we arrive at the bound Eq.~\eqref{eq:depthbound}.
\end{proof}

\section{Noisy quantum circuits versus noiseless classical circuits\label{sec:noisyadvantageconstruction}}
So far, we have  considered the case where our quantum circuit is noise-free. We note that the quantum circuit for the 1D Magic Square Problem presented above is not fault-tolerant. In particular, in the limit of large problem sizes, it does not permit to observe a quantum advantage under any physically reasonable noise model:  for a constant error-rate per qubit, the probability  of producing an output satisfying the relation quickly falls  below the classical threshold value of~$8/9$  in this limit. This can be seen for example from the necessary condition Eq.~\eqref{eq:tripledefxyk} in Lemma~\ref{lem:stst}: for typical problem instances, this involves the parity of a number of output bits which scales linearly in~$n$. We note that the quantum circuit for the Hidden Linear Function Problem in~\cite{bragokoe18} suffers from the same issue in the presence of noise. 

In this section, we address this problem and establish a separation between noisy constant-depth quantum circuits and ideal constant-depth classical circuits. To this end, we construct new relation problems: these incorporate fault-tolerance mechanisms allowing for a solution by noisy constant-depth quantum circuits. We emphasize that these constructions do not proceed by ``amplification'' of the classical threshold (or the ``soundness'') towards~$1$ (as considered in~\cite{coudronetal18,legall18}). Indeed, as our relational problems involve an extensive number of output bits such amplification techniques do not appear to be suitable for this purpose. Rather,  we use quantum error-correcting codes.

Specifically, we show the following: given a certain relation problem providing a separation between constant-depth quantum circuits and classical circuits of sublogarithmic depth, we present a new relation problem which (a) can be solved with high probability with constant-depth noisy quantum circuit, and (b) which is still hard for classical circuits of a certain depth. The main underlying idea is that a quantum error-correction procedure can be folded into the relation.

This section is structured as follows: In Section~\ref{sec:noise}, we present the  details of the noise model we consider. In Section~\ref{subs:codes}, we list the required properties of quantum error-correcting codes used in our construction. In Section~\ref{sec:noisetolerantrelation}, we give the construction of a noise-tolerant relation, and give an (ideal) quantum circuit   solving the relation. In Section~\ref{sec:noisetoleranceqcircuit}, we prove that  
this circuit is noise-tolerant: it still produces a valid solution with high probability under local stochastic noise. In Section~\ref{sec:classicalhardnessnoisetolerant}, we argue that the noise-tolerant relation retains its hardness for classical circuits. Finally, in Section~\ref{sec:ftquantumadvantage}, we  instantiate this construction using the 1D Magic Square Problem, obtaining a quantum advantage using noisy constant-depth quantum circuits.

\subsection{The local stochastic quantum noise model}
\label{sec:noise}
Noise in a quantum computation can affect initial states, the execution of gates (which may include identities or ``wait locations'') and  measurement operations. Here we adopt a standard model to describe noise occuring during the execution of a quantum circuit. We refer to it simply as {\em local stochastic noise}, following the recent work~\cite{fawzi2018constant}.  The model has also been referred to as the {\em simplified model}, and is related to a more general {\em basic model} of fault-tolerance  in~\cite[Section 7]{Gottesman2013}.  In the simplified model, errors  occur in each time step on the physical qubits, and additionally,  the results of measurements can be erroneous.  In other words, both  physical qubits and classical measurement outcomes are affected by noise.

Below we consider random $n$-qubit Pauli errors $E\in \{I,X,Y,Z\}^{\otimes n}$.
Let $\mathrm{Supp}(E)\subseteq [n]$  be the support of $E$, that is, the
subset of qubits acted upon by either $X$, $Y$, or $Z$.
\begin{definition}
Let $p\in [0,1]$. A random $n$-qubit Pauli error $E$  is called {\em  $p$-local stochastic noise} if
\begin{equation}
\mathrm{Pr}\left[F\subseteq \mathrm{Supp}(E)\right]\leq p^{|F|}
\qquad \mbox{for all $F\subseteq [n]$}.
\label{eq:prbound}
\end{equation}
We write $E\sim \mathcal{N}(p)$ to denote random variables which are $p$-local stochastic noise.
\end{definition}
We will assume that each layer of gates in the ideal circuit is followed by
a random Pauli error $E\sim \calN(p)$ for some noise rate $p$.
Errors occurring after each layer may or may not be independent.
Namely, if $E_j$ is the error occurring after the $j$-th level of gates,
we only require that the marginal distribution of $E_j$ belongs to $\calN(p)$,
that is, $E_j\sim \calN(p)$.  No further assumptions are made about the
joint distribution of the errors $E_j$.
For simplicity we shall assume that the noise rate $p$ is identical
for each layer of gates.

A noisy preparation of the initial state $|0^n\ra$ will be modeled by 
the ideal state preparation followed by a random Pauli error $E_{in}\sim \calN(p_{in})$
for some noise rate $p_{in}$.
It produces a random basis vector $|x\ra$, where $x_i=1$ if $E_{in}$
acts on the $i$-th qubit by $X$ or $Y$, and $x_i=0$ otherwise.

Likewise, a noisy measurement of $n$ qubits in the computational basis
will be modeled by the ideal measurement preceded by a random Pauli error
$E_{out} \sim \calN(p_{out})$ for some noise rate $p_{out}$.
A noisy measurement of a state $\psi$
produces an outcome $z\in \{0,1\}^n$ with probability
$|\langle z|E_{out}|\psi\rangle|^2 = |\langle z\oplus x|\psi\rangle|^2$,
were $x_i=1$ if $E_{out}$ acts on the $i$-th qubit by 
$X$ or $Y$, and $x_i=0$ otherwise. Thus the random bit string $x$
determines positions of faulty measurement outcome bits. 

Let us now formally define what we mean by a noisy implementation of a quantum circuit. 
\begin{definition}[\textbf{Noisy implementation}]
Consider a circuit $U=U_D\cdots U_1$ of depth $D$, where
$U_j$ is a depth-$1$ circuit applied in the $j$-th time step/layer,  with the
initial state~$\ket{0^n}$ and a computational basis measurement  at the end. 
A noisy implementation of the circuit $U$ with noise rates $p_{in},p,p_{out}$ 
produces an output $z_{out}\in \{0,1\}^n$ according to the conditional distribution 
\begin{align}
\mathrm{Pr}(z_{out}|E_{in},E_1,\ldots,E_D,E_{out})&=|\bra{z_{out}}E_{out}E_DU_D\cdots 
E_1 U_1E_{in}\ket{0^n}|^2.
\label{eq:noisemodelgeneralconditionaloutput}
\end{align}
Here $E_{in},E_1,\ldots,E_D,E_{out}$ are random $n$-qubit Pauli errors 
drawn from some joint distribution such that
$E_{in}\sim \calN(p_{in})$, $E_j\sim \calN(p)$ for $1\le j\le D$, and $E_{out}\sim \calN(p_{out})$.
\end{definition}
To simplify the notations, below we assume that all noise rates
are identical, i.e. $p=p_{in}=p_{out}$.
This noise model is motivated by the concept of locally decaying and ``adversarial stochastic'' noise, where {\em every fault path} 
of $k$ locations in the circuit occurs with probability bounded by $p^k$, see
e.g.,~\cite{AliferisGottesmanPreskill2006,aliferisgottesmanpreskill2008}. In particular,  it does not assume independence of noise processes acting on different qubits or regions of the circuit. 
 Likewise, it does not assume independence of errors that occur at different time steps.
Local stochastic noise has the following basic features. 
\begin{lemma}[\textbf{Basic properties of local stochastic noise}]
\leavevmode
\begin{enumerate}[(i)]
\item{Suppose $E\sim \mathcal{N}(p)$, and $E'$ is a random Pauli such that $\mathrm{Supp}(E')\subseteq \mathrm{Supp}(E)$ with probability 1. Then $E'\sim \mathcal{N}(p)$.\label{it:claimone}}
\item{Suppose $E\sim \mathcal{N}(p)$ and $E'\sim \mathcal{N}(q)$ are independent random Paulis. Then $E\cdot E' \sim \mathcal{N}(p+q)$.\label{it:claimtwo}}
\item{Suppose $E\sim \mathcal{N}(p)$ and $E'\sim \mathcal{N}(q)$ are random Paulis which may be dependent. Then $E\cdot E'\sim \mathcal{N}(q')$ where $q'=2\max\{\sqrt{p},\sqrt{q}\}$.\label{it:claimthree}}
\item{Suppose $E\sim \mathcal{N}(p)$ is a random Pauli and $C$ is a depth-1 Clifford circuit composed of one- and two-qubit gates. Then $C E C^{\dagger}\sim \mathcal{N}(\sqrt{2p})$.\label{it:claimfour}}
\end{enumerate}
\label{claim:basic}
\end{lemma}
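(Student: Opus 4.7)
The plan is to verify each of the four properties directly from the defining inequality $\Pr[F \subseteq \operatorname{Supp}(E)] \leq p^{|F|}$, using only monotonicity, union bounds, Cauchy--Schwarz, and the geometric structure of a depth-$1$ Clifford layer. All four reduce to bounding $\Pr[F \subseteq \operatorname{Supp}(\tilde E)]$ for the transformed error $\tilde E$ and arbitrary $F \subseteq [n]$.

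Claim~\eqref{it:claimone} is immediate: since $\operatorname{Supp}(E') \subseteq \operatorname{Supp}(E)$ almost surely, the event $\{F \subseteq \operatorname{Supp}(E')\}$ is contained in $\{F \subseteq \operatorname{Supp}(E)\}$, so $\Pr[F \subseteq \operatorname{Supp}(E')] \leq p^{|F|}$. For claim~\eqref{it:claimtwo}, I would use $\operatorname{Supp}(E E') \subseteq \operatorname{Supp}(E) \cup \operatorname{Supp}(E')$ and sum over partitions $F = F_1 \sqcup (F \setminus F_1)$ with $F_1 \subseteq \operatorname{Supp}(E)$, $F \setminus F_1 \subseteq \operatorname{Supp}(E')$. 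Independence factorizes each term, giving
\begin{equation}
\Pr[F \subseteq \operatorname{Supp}(EE')] \leq \sum_{F_1 \subseteq F} p^{|F_1|} q^{|F|-|F_1|} = (p+q)^{|F|}.
\end{equation}
For claim~\eqref{it:claimthree}, the same partition argument applies but without factorization. I would apply Cauchy--Schwarz to bound $\Pr[F_1 \subseteq \operatorname{Supp}(E),\ F \setminus F_1 \subseteq \operatorname{Supp}(E')] \leq p^{|F_1|/2} q^{(|F|-|F_1|)/2}$, and summing over $F_1 \subseteq F$ yields $(\sqrt{p}+\sqrt{q})^{|F|} \leq (2\max\{\sqrt{p},\sqrt{q}\})^{|F|}$, as required.

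Claim~\eqref{it:claimfour} is the most involved and is where I expect the main conceptual obstacle. The key geometric observation is that in a depth-$1$ Clifford circuit $C$ with one- and two-qubit gates, each qubit is acted on by at most one gate, and each gate has support of size at most $2$; hence $\operatorname{Supp}(CEC^\dagger)$ is contained in the union of the supports of gates that touch $\operatorname{Supp}(E)$. Given $F \subseteq \operatorname{Supp}(CEC^\dagger)$, let $g_1,\dots,g_k$ be the distinct gates covering the qubits of $F$; then $k \geq \lceil |F|/2 \rceil$, and each $g_i$ must contain at least one qubit in $\operatorname{Supp}(E)$. Picking one such representative per gate produces a set $F' \subseteq \operatorname{Supp}(E)$ of size $k$, with at most $2^k$ choices. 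A union bound then gives
\begin{equation}
\Pr[F \subseteq \operatorname{Supp}(CEC^\dagger)] \leq 2^k\, p^k = (2p)^k \leq (2p)^{|F|/2} = (\sqrt{2p})^{|F|},
\end{equation}
assuming $2p \leq 1$ (the statement is trivial otherwise). The subtlety here is to avoid overcounting in the union bound and to correctly track the factor of $2$ coming from each two-qubit gate, which is what produces the tight $\sqrt{2p}$ rate rather than the loose $2p$.
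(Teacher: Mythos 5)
Your proof is correct, and the overall structure mirrors the paper's closely: parts~(i), (ii), and (iv) use the same reductions (direct monotonicity, partition-sum with independence, and union bound over one representative per gate that touches~$F$). Your bound $(2p)^k$ with $k$ being the number of distinct gates covering~$F$ is in fact identical to the paper's $(2p)^{|F|-m_2}$, since both counts equal (number of single-qubit gates meeting $F$) plus (number of two-qubit gates meeting $F$); and as you note the bound is vacuous when $2p>1$. The one genuinely different step is part~(iii): the paper bounds the joint probability by $\min\{p^{|F_1|},q^{|F_2|}\}\le \max\{p,q\}^{|F|/2}$ before summing, whereas you invoke Cauchy--Schwarz to get $p^{|F_1|/2}q^{|F_2|/2}$ and then sum to $(\sqrt p+\sqrt q)^{|F|}$. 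Both recover $q'=2\max\{\sqrt p,\sqrt q\}$; your route gives the marginally sharper intermediate bound $(\sqrt p+\sqrt q)^{|F|}$ at the cost of an application of Cauchy--Schwarz where the paper uses only the trivial estimate $\Pr[A\cap B]\le\min\{\Pr[A],\Pr[B]\}$. One presentational caveat in~(iv): the statement that $\operatorname{Supp}(CEC^{\dagger})$ lies in the union of supports of gates meeting $\operatorname{Supp}(E)$ is only true if idle qubits are counted as carrying a trivial single-qubit gate; the paper avoids this by splitting off $F'$ (the qubits of $F$ not in any two-qubit gate) explicitly, but your convention is a standard one and the argument is sound.
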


\begin{proof}
For convenience in the following we identify $n$-qubit Pauli operators with the $n$-bit string describing its support (that is, we omit the $\mathrm{Supp}$ notation).

Part~\eqref{it:claimone} follows directly from the definition of local stochastic noise and the fact that
\[
\mathrm{Pr}\left[F\subseteq E' \right]\leq \mathrm{Pr}\left[F\subseteq E\right]
\]
since $E' \subseteq E$. 

For part~\eqref{it:claimtwo}, note that
\[
\mathrm{Pr}\left[F\subseteq E\cdot E'\right]\leq \sum_{F=F_1F_2} \mathrm{Pr}\left[F_1 \subseteq E\right] \mathrm{Pr}\left[F_2 \subseteq E'\right] 
\]
where the right hand side is the sum over partitions of $F$ into two disjoint bit strings $F_1, F_2$. We arrive at (i) by plugging in Eq.\eqref{eq:prbound} and performing the sum.

For part~\eqref{it:claimthree} we can again sum over partitions of $F$ into two disjoint bit strings $F_1, F_2$:
\begin{equation}
\mathrm{Pr}\left[F\subseteq E\cdot E'\right]\leq \sum_{F=F_1F_2} \mathrm{Pr}\left[F_1 \subseteq E \text{ and } F_2 \subseteq E'\right]
\label{eq:ee1}
\end{equation}
Now we use the fact that
\begin{equation}
\mathrm{Pr}\left[F_1 \subseteq E \text{ and } F_2 \subseteq E'\right] \leq \min\{\mathrm{Pr}\left[F_1 \subseteq E\right], \mathrm{Pr}\left[F_2 \subseteq E'\right]\}\leq \min\{p^{|F_1|}, q^{|F_2|}\}\leq \max\{p,q\}^{|F|/2}.
\label{eq:ee2}
\end{equation}
where in the last line we used the fact that $|F_1|+|F_2|=|F|$. Plugging Eq.~\eqref{eq:ee2} into Eq.~\eqref{eq:ee1} we get
\[
\mathrm{Pr}\left[F\subseteq E\cdot E'\right]\leq 2^{|F|}\max\{p,q\}^{|F|/2},
\]
which establishes part (iii).

For part~\eqref{it:claimfour} we need to show that
\be
\mathrm{Pr}[F\subseteq CEC^\dag]\le (2p)^{|F|/2}.
\ee
Recall that $C$ is a depth-$1$ circuit
composed of one- and two-qubit Clifford gates. 
Let  $F'\subseteq F$ be the subset of qubits that
do not participate in any  two-qubit gate.
Note that $F'\subseteq CEC^\dag$ iff $F'\subseteq E$.
Suppose $C$ contains $m$ two-qubit gates that act non-trivially on $F$. Let these gates be $G_1,\ldots,G_m$ and $Q_i=\{q_i(0),q_i(1)\}\subseteq [n]$
be the qubits acted upon by $G_i$. By definition,
$F\cap Q_i\ne \emptyset$ for all $i$.
For each bit string $x\in \{0,1\}^m$ define a subset $F_x\subseteq [n]$ as 
\[
F_x = F' \cup \{q_1(x_1)\} \cup \{q_2(x_2)\} \cup \ldots \cup \{q_m(x_m)\}.
\]
Here all unions are disjoint.
We claim that 
\be
\label{claim_proof_eq1}
\mathrm{Pr}[F\subseteq CEC^\dag]\le \sum_{x\in \{0,1\}^m}\; \mathrm{Pr}[F_x\subseteq E].
\ee
Indeed, suppose $F\subseteq CEC^\dag$.
Then $F'\subseteq CEC^\dag$ and thus $F'\subseteq E$.
From $F\cap Q_i\ne \emptyset$ and
$F\subseteq CEC^\dag$ one infers that 
$CEC^\dag$ acts non-trivially on $Q_i$.
By assumption, $C$ is a depth-$1$ circuit. Thus
$CEC^\dag$ and $G_i E G_i^\dag$ have the same action on $Q_i$.
Since $G_i G_i^\dag =I$, we conclude that 
$E$ must act non-trivially on $Q_i$.
Thus $q_i(0)\in E$ and/or $q_i(1)\in E$ for each $i=1,\ldots,m$. 
The above shows that $F_x\subseteq E$ for at least one $x$.
The union bound now gives Eq.~\eqref{claim_proof_eq1}.

By assumption, $E\sim \calN(p)$ and thus $\mathrm{Pr}[F_x\subseteq E]
\le p^{|F_x|}$.
Note that $|F_x|=|F|-m_2$, where $m_2$
is the number of gates $G_i$ such that $Q_i\subseteq F$.
Substituting this into  Eq.~\eqref{claim_proof_eq1}  gives
\be
\label{claim_proof_eq2}
\mathrm{Pr}[F\subseteq CEC^\dag]\le \sum_{x\in \{0,1\}^m}\;  p^{|F_x|}
=2^m p^{|F|-m_2}\le (2p)^{|F|-m_2}.
\ee
Here the last inequality uses the bound $m=|F_x|-|F'|\le |F_x|=|F|-m_2$.
It remains to notice that $m_2\le |F|/2$ and thus
$(2p)^{|F|-m_2}\le (2p)^{|F|/2}$ assuming that $2p\le 1$.
\end{proof}

Lemma~\ref{claim:basic}  allows us to rewrite the error model defined by the conditional distribution~\eqref{eq:noisemodelgeneralconditionaloutput} in the case of 
constant depth Clifford circuits. That is, we have the following:
\begin{lemma}\label{lem:noisemodelcliffordcircuit}
Suppose $U=C_D\cdots C_2 C_1$, where $C_j$ are  depth-$1$ Clifford circuits composed of one- and two-qubit gates.  
Then a noisy implementation of $U$ with the noise rate
$p$  produces an output $z_{out}\in \{0,1\}^n$ according to the conditional distribution 
 \begin{align}
 \mathrm{Pr}(z_{out}|E)= |\bra{z_{out}}EU\ket{0^n}|^2,\label{eq:outputdistrpzoute}
 \end{align}
where  $E\sim \calN(4p^{4^{-D-1}})$ is a random $n$-qubit Pauli error.
\end{lemma}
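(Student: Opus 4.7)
The plan is to commute all the Pauli errors in the noisy evolution past the Clifford layers so that they can be collected into a single random Pauli applied just before the ideal unitary $U$. Since each $U_j$ is Clifford, the conjugate $U_j P U_j^{\dagger}$ of any Pauli $P$ is again a Pauli; and Paulis pairwise commute up to a global phase, which is irrelevant for the final measurement probabilities. Iteratively moving the rightmost error past the next Clifford layer will therefore yield
\begin{equation*}
E_{out} E_D U_D E_{D-1} U_{D-1} \cdots E_1 U_1 E_{in}\ket{0^n}\;\propto\; E\, U\ket{0^n}\,,
\end{equation*}
with $E = E_{out}\cdot E_D \cdot E_{D-1}^{(1)} \cdot E_{D-2}^{(2)}\cdots E_1^{(D-1)}\cdot E_{in}^{(D)}$, where $E_j^{(k)} := (U_{j+k}\cdots U_{j+1})E_j(U_{j+1}^{\dagger}\cdots U_{j+k}^{\dagger})$ denotes the conjugation of $E_j$ by the $k$ subsequent depth-$1$ Clifford layers. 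This immediately yields the form~\eqref{eq:outputdistrpzoute} for the output distribution, and reduces the problem to bounding the noise rate of the single Pauli $E$.

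For the rate bound I would invoke parts (iii) and (iv) of Lemma~\ref{claim:basic}. First, because each $U_j$ is depth-$1$, iterating part (iv) $k$ times on the original Pauli $E_j\sim \calN(p)$ shows that the conjugated Pauli $E_j^{(k)}\sim \calN(f^k(p))$ with $f(q):=\sqrt{2q}$; a short induction gives $f^k(p)\le 2\,p^{1/2^k}$. In particular, the most heavily conjugated contribution $E_{in}^{(D)}$ has rate at most $2\,p^{1/2^D}$, so all $D+2$ Paulis whose product defines $E$ have rate at most $q:=2\,p^{1/2^D}$.

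Second, since the local stochastic noise model permits arbitrary correlations between errors at different time steps, the $D+2$ factors making up $E$ must be combined using part (iii) of Lemma~\ref{claim:basic} (which allows dependent noise) rather than part (ii). A straightforward induction on the recursion $a_{N+1}=2\sqrt{a_N}$ with $a_1=q$ yields $a_N\le 4\,q^{1/2^{N-1}}$ for the combined rate after $N-1$ applications of (iii). Plugging in $N=D+2$ and $q=2\,p^{1/2^D}$ and simplifying gives a combined rate of at most $4\,p^{1/2^{2D+1}}$, which is in turn bounded by $4\,p^{4^{-D-1}}=4\,p^{1/2^{2D+2}}$ for $p\le 1$ (using that $p^a$ is decreasing in $a$ for $p\in[0,1]$). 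The main obstacle is purely one of bookkeeping: keeping careful track of exponents and leading constants through the iterated applications of (iii) and (iv), and verifying the final simplification to the claimed form; no additional conceptual ingredient beyond Lemma~\ref{claim:basic} is required.
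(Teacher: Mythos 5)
Your high-level strategy -- commuting all Pauli errors past the Clifford layers (up to phases) so they collect into a single effective Pauli~$E$ applied before the ideal~$U$ -- is correct and matches the paper's, but your bookkeeping scheme is genuinely different. You first conjugate each $E_j$ through all of the later layers (iterating part~(iv) of Lemma~\ref{claim:basic} up to $D$ times) and only then merge the $D+2$ resulting Paulis with repeated applications of part~(iii); the paper instead interleaves one conjugation and one merge per layer, which incurs a clean fourth-root loss $q\mapsto 2(2q)^{1/4}$ each step and yields the exponent $4^{-D-1}$ directly. Both schemes are legitimate and both give a constant noise rate for $D=O(1)$, but the final simplification in your write-up has an arithmetic slip. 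From $a_{D+2}\le 4q^{1/2^{D+1}}$ with $q=2p^{1/2^D}$, the correct consequence is
\begin{align}
a_{D+2}\le 4\left(2p^{1/2^D}\right)^{1/2^{D+1}}=4\cdot 2^{1/2^{D+1}}\,p^{1/2^{2D+1}},
\end{align}
not $4p^{1/2^{2D+1}}$ -- the factor $2^{1/2^{D+1}}>1$ does not vanish. Writing $p^{1/2^{2D+1}}=\bigl(p^{1/2^{2D+2}}\bigr)^2$, the target $4p^{4^{-D-1}}=4p^{1/2^{2D+2}}$ would follow only when $2^{1/2^{D+1}}p^{1/2^{2D+2}}\le 1$, i.e.\ for $p\le 2^{-2^{D+1}}$, so the claimed constant is not established for all $p\in[0,1]$.

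The gap is repairable within your framework. If you keep the tighter rates instead of the rounded ones -- the $k$-fold conjugate of an $\calN(p)$ error has rate exactly $2^{1-2^{-k}}p^{2^{-k}}$, and the merge recursion gives exactly $a_N=2^{2-2^{2-N}}q^{1/2^{N-1}}$ -- then the combined rate works out to $2^{2-2^{-(D+1)}-2^{-(2D+1)}}p^{1/2^{2D+1}}<4p^{1/2^{2D+1}}\le 4p^{4^{-D-1}}$, which does give the lemma. Alternatively, the paper's interleaved commute-and-merge scheme produces $q_{j+1}=p^{4^{-j}}2^{5\sum_{i=1}^j 4^{-i}}\le 2^{5/3}p^{4^{-j}}<4p^{4^{-j}}$, sidestepping the issue because the accumulated prefactor stays strictly below~$4$ at every step.
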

In particular, if we consider  a noisy implementation of a  Clifford circuit of constant depth $D=O(1)$, we may without loss of generality assume that the output distribution is of the form~\eqref{eq:outputdistrpzoute} with $E\sim \mathcal{N}(q)$  for some constant $q\in (0,1]$. 
\begin{proof}
Suppose $C$ is a depth-$1$ Clifford circuit composed of one- and two-qubit gates.
Let $E\sim \calN(p)$ and $E'\sim \calN(q)$ be random Pauli errors
which may be dependent.
We claim that 
\be
\label{E'E''}
ECE' =E''C \quad \mbox{where} \quad E''\sim \calN(r), \quad r=2\max{\{ (2q)^{1/4},p^{1/2}\}}.
\ee
Indeed, part~(iv) of Lemma~\ref{claim:basic} implies
that $CE'=\tilde{E}C$, where $\tilde{E}\sim \calN(\sqrt{2q})$.
Merging $E$ and $\tilde{E}$ using
part~\eqref{it:claimthree} of Lemma~\ref{claim:basic} 
one gets Eq.~\eqref{E'E''}.
Let $E_{in},E_1,\ldots,E_{D},E_{out}$ be the random Pauli errors 
that occur in the noisy implementation of $U$, see Eq.~(\ref{eq:noisemodelgeneralconditionaloutput}).
By assumption, $E_{in}\sim \calN(p)$, $E_{out}\sim \calN(p)$, and $E_j \sim \calN(p)$
for all $1\le j\le D$. 
For a given realization of noise, the condition 
distribution of $z_{\mathrm{out}}$  is  $|\bra{z_{out}}U_{\textrm{noisy}}\ket{0^n}|^2$,
where
\[
U_{\textrm{noisy}}=E_{out} E_D C_D\cdots  E_2 C_2 E_1 C_1 E_{in}.
\]
Let us insert a dummy depth-$1$ circuit $C_{D+1}=I$ between
$E_{out}$ and $E_D$.
Repeatedly using Eq.~\eqref{E'E''} with $C\in \{C_1,\ldots,C_{D+1}\}$
one can commute all errors that appear in $U_{\textrm{noisy}}$  to the left
and merge them into a single Pauli error.
One arrives at $U_{\textrm{noisy}}= EU$,
where $E\sim \calN(q_{D+2})$ and 
$q_{D+2}$ is given by a recursive equation
\[
q_{j+1} = 2\max{\{ (2q_{j})^{1/4},p^{1/2}\}}, \quad j=1,\ldots,D+1
\]
with $q_1=p$. A simple algebra shows that 
$(2q_{j})^{1/4} \ge p^{1/2}$ for all $j$.
Thus $q_{j+1} = 2(2q_{j})^{1/4}$ for $j\ge 1$, that is,
\[
q_{j+1} =p^{4^{-j}} 2^{5\sum_{i=1}^j 4^{-i}}
\le 2^{5/3}p^{4^{-j}}  \le 4p^{4^{-j}}.
\]
We conclude that  $U_{\textrm{noisy}}= EU$
with $E\sim \calN(q_{D+2})$, $q_{D+2}\le 4p^{4^{-D-1}}$.
\end{proof}

\subsection{Quantum code properties}
\label{subs:codes}
In the following we shall make use of a 
CSS-type~\cite{calderbank1996good,steane1996multiple}
quantum error correcting code $\mathcal{Q}_m$ 
encoding one logical qubit into $m$ physical qubits. Such a code has logical basis states 
\begin{equation}
\label{logical01}
|\overline{0}\ra = \gamma \sum_{x\in \calB} |x\rangle
\quad \mbox{and} \quad
 |\overline{1}\ra = \gamma \sum_{x\in \calB} |x\oplus \beta\rangle,
\end{equation}
where $\calB\subseteq \{0,1\}^m$ is a linear subspace,
$\beta \notin \calB$ is some vector, and $\gamma = |\calB|^{-1/2}$
is a normalizing coefficient. 
Given a bit string $v\in \{0,1\}^m$ let $X(v)$ and $Z(v)$ be the 
products of Pauli $X$ and $Z$ respectively over all qubits $j\in [m]$ with $v_j=1$.
Logical Pauli operators  of $\calQ_m$ can be chosen as 
\begin{equation}
\label{eq:logicalZX}
\overline{Z} = Z(\alpha)  \quad \mbox{and} \quad \overline{X} = X(\beta),
\end{equation}
where $\beta$ is given in Eq.~\eqref{logical01} and $\alpha\in \{0,1\}^m$ must have odd overlap with $\beta$
(to ensure that $\overline{Z}$ and $\overline{X}$ anti-commute)
and $\alpha \in \calB^\perp$ (to ensure that
the logical states $|\overline{0}\ra$, $|\overline{1}\ra$ are eigenvectors
of  $\overline{Z}$).

We shall need an infinite family of codes $\mathcal{Q}_m$ as above 
for some diverging sequence of $m$'s that
obey the following conditions.
\begin{cond}
The logical Hadamard and the phase gate $S=\mathrm{diag}(1,i)$
can each be implemented  by a depth-1 Clifford circuit composed of one- and two-qubit Clifford gates.
\label{cond:logical}
\end{cond}
Recall that the logical CNOT  can be implemented transversally
for any CSS code~\cite{calderbank1996good,steane1996multiple}.
Thus any logical depth-$d$ Clifford circuit composed of $H,S,\CNOT$ gates
can be implemented by a physical depth-$d$ Clifford circuit composed of two-qubit Clifford gates.

Next, we  require that $\mathcal{Q}_m$ satisfies  a certain single-shot state preparation property~\cite{bombin2015single}. Namely, we assume that the logical basis state $|\overline{0}\rangle$ 
can be prepared by the procedure shown in Fig.~\ref{fig:singleshotstateprep}. In addition to $m$ qubits which hold the final logical state, it uses~$m_{\textrm{anc}}$ ancilla qubits.  The procedure involves initializing all  qubits in the state $|0\rangle$, applying a constant depth
Clifford circuit $W$, and  measuring all ancillas in the computational basis.
Let $s\in\{0,1\}^{m_{\textrm{anc}}}$ be the measurement outcome.
Finally, a suitable
Pauli recovery operator $\rec(s)$ is applied to the remaining $m$ qubits.

In order for this procedure to prepare the logical state~$\ket{\overline{0}}$ in the absence of noise, we require that $m_{\textrm{anc}}, W$ and $\rec(s)$ obey 
\begin{align}
(\rec(s)\otimes \ket{s}\bra{s})
W (\ket{0^m}\otimes\ket{0^{m_{\textrm{anc}}}})
&=
\gamma_s\ket{\overline{0}}\otimes \ket{s} \label{eq:noisefreestatepreparation}
\end{align}
for all $s\in \{0,1\}^{m_{\textrm{anc}}}$. Here 
$\gamma_s\in \mathbb{C}$ is a normalization coefficient and
the tensor product separates the $m$-qubit register used by the code $\calQ_m$
and a register of $m_{\mathrm{anc}}$ ancillary qubits.

\begin{figure}[h]
\subcaptionbox{State preparation algorithm \label{fig:stateprepalgo}}[0.55\textwidth]{
\begin{minipage}[t]{0.55\textwidth}
\centering
\begin{mdframed}[style=mystyle]
\begin{enumerate}
\item
Prepare $m+m_{\textrm{anc}}$ qubits  in the state~$\ket{0^m}\otimes\ket{0^{m_{\textrm{anc}}}}$.  
\item
Apply a constant-depth Clifford circuit~$W$.
\item Measure each ancilla qubit in the $Z$-basis, giving an outcome $s\in \{0,1\}^{m_{\textrm{anc}}}$.
\item Depending on the outcome~$s$, apply a suitable Pauli recovery~$\rec(s)$ to the state of the~$m$ unmeasured qubits.
\end{enumerate}
\end{mdframed}
\end{minipage}}
\hspace{0.25cm}
\subcaptionbox{Single shot state preparation circuit.\label{fig:stateprepcircuit}}[0.4\textwidth]{
\centering
\begin{minipage}{0.3\textwidth}
\begin{align}
  \Qcircuit @C=1.0em @R=0.7em {
      \lstick{\ket{0^m}}                  & \multigate{1}{W}  & \qw   & \gate{\rec(s)} &\qw \\ 
       \lstick{\ket{0^{m_{\textrm{anc}}}}} & \ghost{W} & \meter  & \cctrl{-1}\cw
 } 
\end{align}
 \vspace{0.25cm}
\end{minipage}}
\caption{The single-shot state preparation procedure. 
Below we will  incorporate the Pauli correction~$\rec(s)$ into
our computational problem, eliminating the need to evaluate~$\rec(s)$ by a quantum circuit.
\label{fig:singleshotstateprep}} 
\end{figure}
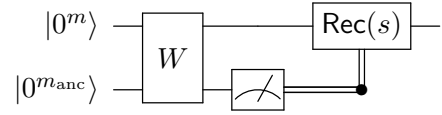

We require a stronger property in order to guarantee that even a noisy implementation of the circuit in Fig.~\ref{fig:stateprepcircuit} prepares the logical state~$\ket{\overline{0}}$ up certain ``manageable'' errors. 
A noisy implementation of the circuit 
outputs
the measured string $s$ and a state
\[
( \rec{(s)}\otimes \ket{s}\bra{s}) EW(\ket{0^m}\otimes\ket{0^{m_{\textrm{anc}}}}),
\]
where $E$  is a random Pauli error acting on all  $m+m_{\mathrm{anc}}$ qubits. 
 Indeed,  as in Lemma~\ref{lem:noisemodelcliffordcircuit}, all qubit initialization errors and
gate errors in the execution of~$W$ can be absorbed into $E$
by commuting them forward in time towards
the last gate of $W$. A Pauli error in the execution of $\mathsf{Rec}(s)$
either commutes or anti-commutes with $\mathsf{Rec}(s)$. Since the global phase
of a state is irrelevant, such error can be commuted backwards towards $W$
and incorporated into $E$. Likewise, a measurement 
error on some ancillary qubit~$u$ is equivalent to 
a Pauli error $X_u$ immediately preceding the measurement. 
Such errors can be absorbed into $E$.

The difference between the final state of the noisy circuit 
and the desired logical state~$\ket{\overline{0}}$  can be quantified using a {\em repair}
operator $\repair{(E)}$ which is an $m$-qubit Pauli operator 
satisfying
\be
\label{RecRepx}
(\rec{(s)}\otimes\ket{s}\bra{s}) EW(\ket{0^m}\otimes\ket{0^{m_{\textrm{anc}}}}) = \gamma_s 
(\repair{(E)}\ket{\overline{0}})\otimes\ket{s}
\ee
for all $s$ and $E$. Thus~$\repair{(E)}$ can be thought of as the residual error in the prepared state when using a noisy implementation of the state preparation circuit. Recall that a random Pauli operator $E$ is said to be $p$-local stochastic noise, denoted $E\sim \mathcal{N}(p)$, if Eq.~\eqref{eq:prbound} is satisfied. For the error~$\repair{(E)}$ to be ``manageable'', we require that it is local stochastic whenever~$E$ is. This leads to the following requirement for the error-correcting code~$\mathcal{Q}_m$.  Here and below we write $\pauli(m)$ for the $m$-qubit Pauli group.
\begin{cond}[\textbf{Single-shot basis state preparation}]
Let  $c,c',c'',d$ be some universal constants. 
For each code $\calQ_m$ in the family
there must exist an integer $m_{\mathrm{anc}}\leq m^{c}$, a 
depth-$d$ Clifford circuit $W$ acting on $m+m_{\mathrm{anc}}$ qubits, recovery and repair functions
\begin{align}
\begin{matrix}
\rec: &\{0,1\}^{m_{\mathrm{anc}}}&\rightarrow &\mathrm{Pauli}(m)\\
\repair: & \mathrm{Pauli}(m+m_{\mathrm{anc}})&\rightarrow &\mathrm{Pauli}(m)
\end{matrix}
\end{align}
such that 
\begin{equation}
\left(\rec(s)\otimes |s\rangle\langle s|\right) EW(|0^{m}\rangle\otimes |0^{m_{\mathrm{anc}}}\rangle)= \gamma_s (\repair(E) |\overline{0}\rangle)\otimes |s\rangle
\label{eq:F}
\end{equation}
for all $s\in \{0,1\}^{m_{\mathrm{anc}}}$ and $E\in \mathrm{Pauli}(m+m_{\mathrm{anc}})$. 
Here $\gamma_s \in \mathbb{C}$ is a normalization factor.
Furthermore, for all noise rates $p\in [0,1]$, we must have that
 $E\sim \mathcal{N}(p)$ implies $\repair(E)\sim \mathcal{N}(c'p^{c''})$. 
\label{cond:shot}
\end{cond}
\noindent 
In the noise-free case $p=0$ one has $E=I$ with certainty and 
$\repair(E)\sim \mathcal{N}(0)$, that is, $\repair(I)=I$.
Thus condition~\eqref{eq:F} specializes to its noise-free version~\eqref{eq:noisefreestatepreparation}. We emphasize that in this definition, we make no assumptions about how efficiently the recovery
function~$\rec$ can be computed, or whether or not it can be computed by a constant-depth circuit. In fact, the quantum circuits we construct will not  apply the recovery~$\rec(s)$ to physical qubits. Rather, this recovery is incorporated into the  computational problem  such  that only the efficiency of {\em verifying} the validity of a solution depends on~$\rec$ (see Section~\ref{sec:noisetolerantrelation}). Likewise, the repair function $\repair(E)$ does not have to be efficiently computable.

Our final requirement  is that
the logical qubit encoded by 
$\mathcal{Q}_m$ can be measured in the $Z$-basis  in a manner which is robust to local stochastic noise.  
Recall that the logical-$Z$ operator of the code $\mathcal{Q}_m$ is chosen
as $\overline{Z}=Z(\alpha)$
for some $\alpha \in \{0,1\}^m$.
Define a function
\[
\parity(x) = \sum_{j=1}^m x_j \alpha_j {\pmod 2},
\]
where $x\in \{0,1\}^m$. The eigenvalue of $\overline{Z}$ can be measured using the following protocol:
\begin{enumerate}
\item
Measure each of the $m$~qubits in the $Z$-basis, obtaining an outcome $x\in \{0,1\}^m$.
\item
Compute the value $\dec(x)\in \{0,1\}$ of a certain decoding function~$\dec:\{0,1\}^m\rightarrow\{0,1\}$. 
\item Output $(-1)^{\dec(x)}$.
\end{enumerate}
Let us first consider the noiseless case. Suppose this procedure is applied to 
a logical basis state $|\overline{b}\ra$, where $b\in \{0,1\}$.
From Eq.~\eqref{logical01} one infers that the outcome $x$ 
obtained at Step~1 always
belongs to a linear subspace 
\begin{equation}
\label{Lsubspace}
\calL = \mathrm{span}(\beta, \calB)\subseteq \{0,1\}^m.
\end{equation}
This subspace includes all $m$-qubit basis states that appear
in the logical states $|\overline{0}\rangle$, $|\overline{1}\rangle$, see Eq.~\eqref{logical01}
(equivalently, $\calL$ includes all basis vectors that obey
$Z$-type stabilizers of the code $\calQ_m$).
To ensure that $(-1)^{\dec(x)}=(-1)^b$ for each possible outcome $x$,
the decoding function must obey 
\begin{align}
\dec(x)=
\parity(x) \quad \mbox{for all $x\in \calL$}\ 
\end{align}
Indeed, this guarantees that $(-1)^{\dec(x)}=\la x|\overline{Z}|x\ra$ for all 
possible outcomes $x$. By linearity, the noiseless measurement
also works for any superposition of the logical basis states.

To deal with noise, we require a stronger property for the decoding function. It must produce -- with high probability -- a correct output even in the case of a noisy implementation of the above procedure. 
Here the noise may include errors in the input logical state
as well as faulty measurements of physical qubits.
As in Lemma~\ref{lem:noisemodelcliffordcircuit},
such errors can be merged into a single Pauli error $E$
preceding the ideal $m$-qubit measurement.
Furthermore, only the $X$-part of the error $E$ matters
since every qubit is measured in the $Z$-basis.
Thus we can assume without loss of generality that $E=X(v)$ for some random bit string $v\in \{0,1\}^m$.
This leads to the following requirement:
\begin{cond}[\textbf{Error threshold}]
Let $c,c',q_{\mathrm{th}}>0$ be some universal constants.
For each code $\calQ_m$ in the family there must exist 
a function  $\dec:\{0,1\}^m\rightarrow \{0,1\}$ such that 
the following holds. First,
\begin{equation}
\dec(x)=\parity(x)\quad \mbox{for all $x\in \calL$}.
\label{eq:paritydecode}
\end{equation}
Secondly, suppose $q<q_{\mathrm{th}}$ and
$v\in \{0,1\}^m$ is a random bit string such that $X(v) \sim \calN(q)$. Then 
\begin{equation}
\mathrm{Pr} \left[\dec(x\oplus v)=\parity(x)\right]\geq  1- \exp(-c' m^c) 
\label{eq:threshold}
\end{equation}
for all $x\in \calL$.
\label{cond:meas}
\end{cond}
This condition ensures that the logical~$\overline{Z}$ measurement 
can be realized by the above algorithm even if the physical measurements
as well as the input logical state are noisy,
provided that the noise rate is below a certain constant threshold value $q_{\mathrm{th}}$.
The threshold value~$q_{th}$  is a key figure of merit in our scheme: it determines how much noise can be tolerated while still guaranteeing that a noisy implementation produces correct outputs. Akin to fault-tolerance threshold theorems, we provide rigorous but rather  pessimistic analytical bounds on this quantity.

In Section~\ref{sec:codes} we show that the standard 2D surface code~\cite{bravyi1998quantum}
equipped with a suitable single-shot state preparation scheme
satisfies Conditions~1, 2 and~3.

\subsection{A noise tolerant relation from any controlled Clifford circuit}
\label{sec:noisetolerantrelation}
Recall that a relation~$R$   is defined by a subset of valid input-output pairs,
 $R:\{0,1\}^v\times \{0,1\}^n\rightarrow \{0,1\}$. An input-output pair $(b,z)\in \{0,1\}^v\times \{0,1\}^n$ is said to satisfy the relation if and only if $R(b,z)=1$. We say that a (classical or quantum) circuit {\em solves the relation problem defined by $R$ on input~$b\in \{0,1\}^v$} if it outputs $z\in \{0,1\}^n$  such that~$R(b,z)=1$.

 We will consider relations (and associated  problems)  defined by certain (ideal) quantum circuits: the 1D Magic Square Problem and the HLF problem considered in~\cite{bragokoe18} are examples. Suppose that $U$ is a depth-$D$ quantum circuit which acts on two registers, a data register of $n$ qubits and an input register of $v$ qubits. We specialize to the case where $U$ is a \textit{controlled Clifford circuit}. That is, every gate in the circuit is  a classically controlled Clifford gate which acts as $|\phi\rangle|b\rangle\rightarrow (C_b|\phi\rangle)|b\rangle$ for some Clifford unitary $C_b$. We also assume each gate acts nontrivially on at most $k=O(1)$ qubits. Thus, for any input $b\in \{0,1\}^v$,  a depth-$D$ Clifford unitary $C_b$ is applied to the data register:
\[
U|\phi \rangle |b\rangle=\left(C_{b}|\phi\rangle\right)|b\rangle \qquad \qquad b\in \{0,1\}^v.
\]
Now consider a quantum computation in which the circuit $U$ is applied to the initial state $|0^{n}\rangle|b\rangle$ and then all data qubits are measured in the computational basis, resulting in a bit string $z\in \{0,1\}^n$ sampled from the distribution
\begin{align}
p_b(z)=|\langle z|C_b|0^n\rangle|^2\ .\label{eq:pbzv}
\end{align}
A corresponding circuit diagram in shown in Fig.~\ref{fig:ideal}. We define the following relation:
\begin{definition}[\bf{Bare relation}]
Let $R_U:\{0,1\}^v\times \{0,1\}^n\rightarrow \{0,1\}$ be the relation 
\[
R_U(b,z)={\begin{cases} 1, & p_b(z)>0\\ 0, & \text{otherwise}.\end{cases}}.
\]
We will call this the {\em bare relation} associated with $U$.
\label{def:ideal}
\end{definition}That is, pairs $(b,z)$ satisfying~$R_U$ have the property that $z$~occurs with non-zero probability in the distribution~\eqref{eq:pbzv} over outputs in the above computation.
In particular, this definition trivially implies the following, for any classically controlled Clifford circuit~$U$. 
\begin{lemma}
For every input $b\in \{0,1\}^v$, the circuit $U$ solves the relation problem defined by $R_U$ with probability~$1$.
\end{lemma}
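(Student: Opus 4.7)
The plan is to observe that this lemma is essentially a tautological consequence of how the relation $R_U$ was defined in Definition~\ref{def:ideal}. There is no real work to do; the proof is simply a matter of unpacking definitions carefully and making sure the quantifiers line up.

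First I would fix an arbitrary input $b\in \{0,1\}^v$ and describe what ``the circuit $U$ solves $R_U$'' means operationally: one prepares the initial state $|0^n\rangle|b\rangle$, applies $U$ (which by the controlled-Clifford assumption acts as $C_b$ on the data register while leaving the input register untouched), and then measures the data register in the computational basis, obtaining a string $z\in \{0,1\}^n$ distributed according to $p_b(z)=|\langle z|C_b|0^n\rangle|^2$ as in~\eqref{eq:pbzv}. The circuit ``succeeds'' on this run if and only if $R_U(b,z)=1$.

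The success probability is therefore
\begin{equation}
\sum_{z\in \{0,1\}^n}\, p_b(z)\, R_U(b,z)=\sum_{z\,:\,p_b(z)>0}p_b(z)=1,
\end{equation}
where the first equality uses the definition $R_U(b,z)=1\iff p_b(z)>0$ and the second uses the fact that the support of a probability distribution always carries total mass one. Since $b$ was arbitrary, the claim follows.

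I do not anticipate any obstacle here. The only subtlety worth flagging in the write-up is that the definition of $R_U$ is tailored precisely to this statement: any $z$ that can appear as an output is declared valid, so the probability of failure on any single run is zero. The lemma is included mainly to make the later (non-trivial) comparison with the noisy/encoded version of the relation clean, since one wants a baseline statement that the ``bare'' problem is solved deterministically by the Clifford circuit before discussing what happens after folding in state preparation errors and decoding.
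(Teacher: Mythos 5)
Your proof is correct and matches the paper's (implicit) reasoning: the paper gives no explicit proof, noting only that the claim ``trivially'' follows from the definition of $R_U$, and your unpacking of the definition is exactly the trivial argument being invoked.
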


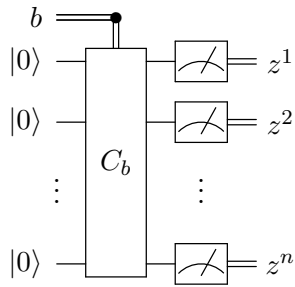
\begin{figure}
  \[
  \Qcircuit @C=1.0em @R=0.7em {
      \lstick{b} & \cctrl{1}\cw \\
      \lstick{\ket{0}}   & \multigate{6}{{C_b}} & \meter & \cw & \!\!z^1\\ 
      \lstick{\ket{0}}  & \ghost{{C_b}} & \meter & \cw & \!\!z^2\\ 
                                     &               & \\
   \vdots &  & \vdots  \\                     
                                  &                           & \\
                                 & &  \\
      \lstick{\ket{0}}  & \ghost{{C_b}} & \meter & \cw & \!\!z^n\\ 
   } \]
      \caption{(Ideal) circuit  $U$ defining the bare relation~$R_U$: It takes as input $b\in \{0,1\}^v$ and applies a classically controlled Clifford gate~$C_b$.\label{fig:ideal}}
\end{figure}

While the input/output of an \textit{ideal implementation} of $U$ satisfies the relation $R_U$, this may no longer hold for a noisy implementation. Moreover, if we use standard quantum fault-tolerance techniques
to protect the computation from noise, we would incur an undesirable super-constant overhead in circuit depth
when a  typical constant-depth circuit is recompiled into a fault-tolerant one,
see e.g. Ref.~\cite{raussendorf2007fault}.

 In the following we show that this overhead can be avoided by modifying the relation, that is, the computational problem, rather than the circuit computing solutions. In particular, for any constant-depth controlled Clifford circuit $U$ we describe a \emph{noise-tolerant relation} relation $\mathcal{R}_U$ with the following properties (informally): (a) The input/output of a constant-depth quantum circuit satisfies $\mathcal{R}_U$ with high probability, even in the presence of noise,  and (b) If a classical probabilistic circuit satisfies $\mathcal{R}_U$ then there is another classical probabilistic circuit with comparable depth that satisfies the bare relation $R_U$.

 The definition of~$\mathcal{R}_U$ relies on a quantum error-correcting code~$\mathcal{Q}_m$ with the properties outlined in Section~\ref{subs:codes}. Recall that this involves  recovery- and decoding functions
 \begin{align}
 \begin{matrix}
 \rec: &\{0,1\}^{m_{\mathrm{anc}}}&\rightarrow &\mathrm{Pauli}(m)\\
 \dec:&\{0,1\}^m & \rightarrow & \{0,1\}\ .
 \end{matrix}
 \end{align}
Below we consider $n$ copies of the code $\calQ_m$, where each copy encodes
one of the qubits acted upon by the circuit $C_b$. Accordingly, we shall use $n$-tuples
of syndromes $s=(s^1,\ldots,s^n)\in \{0,1\}^{nm_{\mathrm{anc}}}$. Let $\overline{C}_b$ be the encoded version of the Clifford circuit $C_b$,
where each qubit of $C_b$ is encoded into $m$ qubits using the code $\calQ_m$.
Note that  $\overline{C}_b$  is a Clifford circuit acting on $nm$ qubits.
For the definition of the relation~$\mathcal{R}_U$, we need to know how the 
tensor product of  Pauli
recovery operators $\rec(s^j)$  propagates through the Clifford circuit $\overline{C}_b$.
We have
\begin{equation}
\overline{C}_b \left( \rec(s^1)\otimes \cdots \otimes \rec(s^n)\right) \overline{C}_b^{\dagger}\sim X(f) Z(h)
\label{eq:Q}
\end{equation}
for some $f,h\in \{0,1\}^{mn}$ depending on $s$ and $b$. We write 
$f=f^1f^2\ldots f^n$, where $f^i$ is the restriction of $f$ onto the $i$-th codeblock.
Note that $f^i=f^i(s,b)$ since Eq.~\eqref{eq:Q} uniquely defines
$f^i$ for each $s$ and $b$. This can be described by functions
$f^i:\{0,1\}^{nm_{\textrm{anc}}}\times \{0,1\}^{v}\rightarrow \{0,1\}^m$ for $1\leq i\leq n$.
\begin{definition}[\bf{Noise-tolerant relation}]
   The  noise-tolerant relation
   \begin{align}
   \mathcal{R}_U: \{0,1\}^v\times \big(\{0,1\}^{nm_{\textrm{anc}}}\times \{0,1\}^{nm} \big)\rightarrow \{0,1\}
   \end{align}
    defined by $U$   is given by
 \[
\mathcal{R}_{U} (b,(s,y))={\begin{cases} 1, & \text{if } R_U(b,z)=1, \text{where } z_i=\dec(y^i\oplus f^i(s,b)))\textrm{ for }1\leq i\leq n\\ 0, & \text{otherwise}.
\end{cases}}
\label{definition:enc}
\]
\end{definition}
Note that $\mathcal{R}_U$  depends on the choice of recovery and decoding functions $\rec,\dec$ associated with the code $\mathcal{Q}_m$.  To motivate this definition, we present a quantum algorithm which solves the relation problem defined by $\mathcal{R}_U$ with certainty for any input.  We consider a system of $mn$ physical qubits partitioned into $n$ codeblocks $[mn]=B^1B^2\cdots B^n$, where each codeblock encodes a single logical qubit using a code $\mathcal{Q}_m$ of the type described in the previous section. Each codeblock will also be associated with an additional $m_{anc}$ ancilla qubits which are used for state preparation. Here and below we use superscripts to index codeblocks and subscripts to index individual bits.  We use an overbar to denote logical operators and states.  Suppose we are given an input bit string $b\in \{0,1\}^v$.  In the following we imagine that $b$ is held in a perfect classical memory. Consider the procedure described in Algorithm 1 (see Fig.~\ref{fig:algo} and the circuit realization shown in  Fig.~\ref{fig:algcircuit}). The output of the algorithm is the pair  $(s,y)\in \{0,1\}^{nm_{\textrm{anc}}}\times \{0,1\}^{nm}$.  We show the following:

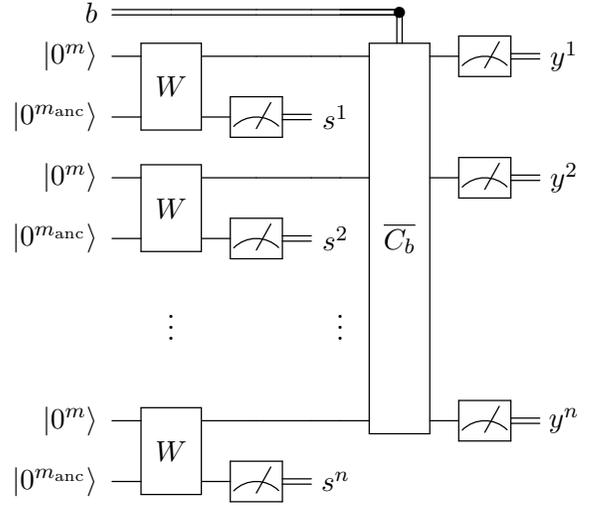
\begin{figure}
\subcaptionbox{Description of the algorithm.\label{fig:algo}}[0.5\textwidth]{
\begin{minipage}[t]{0.5\textwidth}
\centering
\begin{mdframed}[style=mystyle]
\begin{enumerate}
\item {\textbf{Single-shot state preparation:} For each codeblock $j=1,\ldots, n$, prepare $|0^m\rangle\otimes |0^{m_{\mathrm{anc}}}\rangle$, apply the constant-depth Clifford unitary $W$ from Condition \ref{cond:shot} and measure the $m_{\mathrm{anc}}$ ancilla qubits to get an outcome $s^j \in \{0,1\}^{m_{\mathrm{anc}}}$. Write $s=s^1s^2\cdots s^n$ for all measurement outcomes obtained in this step.}
\item{\textbf{Logical circuit: }Apply the logical Clifford circuit $\overline{C}_b$ as a sequence of depth-1 logical gates. }
\item{\textbf{Measurement: }For each $j=1, \ldots, n$, measure all physical qubits in $B^j$ in the computational basis. Write the measured bits as $y=y^1y^2\cdots y^n$ where $y^j\in \{0,1\}^m$.
}
\end{enumerate}
\end{mdframed}
\end{minipage}}
\hspace{1.5cm}
\subcaptionbox{Circuit realization.\label{fig:algcircuit}}[0.4\textwidth]{
\begin{minipage}[t]{0.4\textwidth}
\centering
\[
  \Qcircuit @C=1.0em @R=0.7em {
    \lstick{b} & \cw & \cw & \cw & \cw & \cctrl{1}\cw \\
      \lstick{\ket{0^m}}                  & \multigate{1}{W}  & \qw  & \qw & \qw & \multigate{10}{\overline{C_b}} & \meter & \cw & \!\!y^1\\ 
       \lstick{\ket{0^{m_{\textrm{anc}}}}} & \ghost{W} & \meter & \cw & \!\!s^1  \\
      \lstick{\ket{0^m}}                  & \multigate{1}{W}  & \qw  & \qw & \qw & \ghost{\overline{C_b}} & \meter & \cw & \!\!y^2\\ 
       \lstick{\ket{0^{m_{\textrm{anc}}}}} & \ghost{W} & \meter & \cw & \!\!s^2 & \\
                                     &                &              & \\
                               &                &              &\\
   & \vdots &  &  & \vdots \\                     
                                  &                &              & \\
                               &                &              & \\
                                 & & & \\
      \lstick{\ket{0^m}}                  & \multigate{1}{W}  & \qw  & \qw & \qw & \ghost{\overline{C_b}} & \meter & \cw & \!\!y^n\\ 
       \lstick{\ket{0^{m_{\textrm{anc}}}}} & \ghost{W} & \meter & \cw & \!\!s^n & 
   } \]
   \vspace{0.25cm}
\end{minipage}}
\caption{Algorithm~1 and its realization as a circuit.
\label{fig:algonecircuit}} 
\end{figure}

\begin{lemma}\label{lem:idealcircuitnoisetolerant}
The circuit in Fig.~\ref{fig:algonecircuit} solves the relation~$\mathcal{R}_U$ with certainty for any input~$b\in \{0,1\}^v$. 
\end{lemma}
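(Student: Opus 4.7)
The plan is to trace the noise-free execution of Algorithm~1 step by step and verify, using the defining property of the decoding map, that the output $(s,y)$ satisfies $\mathcal{R}_U(b,(s,y))=1$ with certainty.

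First, I would specialize the single-shot preparation identity~\eqref{eq:F} to the noise-free case $E=I$ (equivalently, invoke~\eqref{eq:noisefreestatepreparation}) separately on each of the $n$ codeblocks. After step~1, conditioned on obtaining syndromes $s=(s^1,\ldots,s^n)$, the joint state of the code registers is (up to normalization) $\bigotimes_{j=1}^n \rec(s^j)\ket{\overline{0}} = \big(\rec(s^1)\otimes\cdots\otimes\rec(s^n)\big)\ket{\overline{0}}^{\otimes n}$. Step~2 then applies $\overline{C}_b$; inserting $\overline{C}_b^\dagger \overline{C}_b = I$ and using the Clifford propagation rule~\eqref{eq:Q}, the state becomes $\pm X(f)Z(h)\,\overline{C}_b\ket{\overline{0}}^{\otimes n}$, with $f=f^1\cdots f^n$ and $h$ the Pauli labels determined by $(s,b)$ via~\eqref{eq:Q}.

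Next, I would expand $C_b\ket{0^n}=\sum_z \alpha_z\ket{z}$ with $|\alpha_z|^2 = p_b(z)$; since $\overline{C}_b$ is the transversal/CSS encoded version of $C_b$, this lifts to $\overline{C}_b\ket{\overline{0}}^{\otimes n}=\sum_z\alpha_z\ket{\overline{z}}$ where $\ket{\overline{z}}=\bigotimes_i\ket{\overline{z_i}}$. The probability of measurement outcome $y=y^1\cdots y^n$ in step~3 is then
\begin{equation}
\Big| \langle y | X(f)Z(h)\sum_z\alpha_z\ket{\overline{z}}\Big|^2 \;=\; \Big|\sum_z (-1)^{h\cdot(y\oplus f)}\alpha_z \, \langle y\oplus f \,|\, \overline{z}\rangle\Big|^2.
\end{equation}
By~\eqref{logical01}, the logical basis vector $\ket{\overline{z_i}}$ is supported on the coset $z_i\beta+\calB\subseteq\calL$, whose elements all have $\parity$ equal to $z_i$ (with respect to the logical-$Z$ string $\alpha$). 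Hence the overlap $\langle y^i\oplus f^i | \overline{z_i}\rangle$ is nonzero only when $y^i\oplus f^i \in \calL$ and $\parity(y^i\oplus f^i) = z_i$; moreover, since the supports of $\ket{\overline{0}}$ and $\ket{\overline{1}}$ are disjoint cosets, this forces a \emph{unique} value $z_i\in\{0,1\}$ per codeblock.

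Finally, I would invoke the decoding property~\eqref{eq:paritydecode} of Condition~\ref{cond:meas}: for $y^i\oplus f^i\in\calL$ we have $\dec(y^i\oplus f^i(s,b))=\parity(y^i\oplus f^i)=z_i$. Thus whenever $y$ is a possible outcome of the noise-free procedure, the string $z\in\{0,1\}^n$ defined by $z_i:=\dec(y^i\oplus f^i(s,b))$ matches one of the $z$'s with $\alpha_z\neq 0$, so $p_b(z)>0$ and $R_U(b,z)=1$, which is exactly $\mathcal{R}_U(b,(s,y))=1$. I do not expect any substantive obstacle: the argument is essentially a bookkeeping exercise combining single-shot preparation, Clifford propagation, and the guarantee that the decoder reproduces the parity on codeword-consistent inputs.
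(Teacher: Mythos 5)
Your proof is correct and follows essentially the same approach as the paper: specialize the single-shot preparation identity to the noise-free case, propagate the recovery Pauli through $\overline{C}_b$ via Eq.~\eqref{eq:Q} to obtain $\ket{\phi_{b,s}}=\pm X(f)Z(h)\,\overline{C}_b\ket{\overline{0}^n}$, and conclude that the decoded/shifted measurement string satisfies $R_U$. You simply spell out in more detail the last step — that a $Z$-basis outcome $y$ with nonzero amplitude must lie codeblock-wise in a unique coset of $\calB$, so that $\dec(y^i\oplus f^i)=\parity(y^i\oplus f^i)$ recovers the unique logical bit $z_i$ with $\alpha_z\neq 0$ — which the paper compresses into the single sentence that $X(f)\ket{\phi_{b,s}}$ has the same $Z$-basis statistics as $\overline{C}_b\ket{\overline{0}^n}$.
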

Lemma~\ref{lem:idealcircuitnoisetolerant}  shows how the circuit in Fig.~\ref{fig:algonecircuit} performs in  the absence of noise. In Section~\ref{sec:noisetoleranceqcircuit}, we will show that a noisy implementation of this circuit still satisfies the relation~$\mathcal{R}_U$ with high probability. 
\begin{proof}
 For the state preparation step, this corresponds to noise rate $p=0$ in Condition \ref{cond:shot},  so that $\repair(E)=I$ in Eq. ~\eqref{eq:F}. Therefore the state after step 1. of the algorithm is
\[
\rec(s)|\overline{0}^n\rangle \quad \text{ where }\quad \quad \rec(s)=\bigotimes_{j=1}^{n} \rec(s^j).
\]
After applying the Clifford circuit in step 2. the state is 
\begin{equation}
|\phi_{b,s}\rangle=X(f) Z(h) \overline{C}_b|\overline{0}^n\rangle,
\label{eq:phiout}
\end{equation}
where $f,h\in \{0,1\}^{mn}$ are the functions of $(s,b)$ defined by Eq.~\eqref{eq:Q}.

Note from Eq.~\eqref{eq:phiout} that $X(f)|\phi_{b,s}\rangle$ has the same $Z$-basis measurement statistics as the encoded output state $\overline{C}_b|\overline{0}^n\rangle$ of the ideal controlled Clifford circuit. Therefore the $n$-bit string~$z$ with bits
\[
z_i=\dec(y^i\oplus f^i(s,b)))
\]
satisfies $R_U(b,z)=1$. That is, 
\begin{equation}
|\langle y|\phi_{b,s}\rangle|>0 \quad \text{ implies } \quad \mathcal{R}_U(b,(s,y))=1.
\label{eq:noisefree}
\end{equation}
We have shown that the input/output $(b,(s,y))$ pair of Algorithm 1 satisfies this relation with probability~$1$ in the absence of noise.  This is the claim.
\end{proof}

\subsubsection{Noise tolerance of quantum circuit}
\label{sec:noisetoleranceqcircuit}
We now show that a \textit{noisy implementation} of
the circuit  in Fig.~\ref{fig:algonecircuit}  satisfies $\mathcal{R}_U$ with high probability.  We consider a noise model in which a random Pauli $E\sim \mathcal{N}(p)$ is applied immediately before the measurements, see 
Lemma~\ref{lem:noisemodelcliffordcircuit} for a justification.  In particular, the output $(s,y)$ of the noisy algorithm with input $b$ is sampled from the distribution
\begin{equation}
P_{b}(s,y)=|\langle y\otimes s|E(\overline{C_b}\otimes I)W^{\otimes{n}}\left(|0^{mn}\rangle\otimes |0^{m_{\mathrm{anc}}n}\rangle\right) |^2\qquad \qquad E\sim \mathcal{N}(p).
\label{eq:noisyp}
\end{equation}
Here the tensor product separates the $n$ codeblocks from the $m_{\mathrm{anc}}n$ ancilla qubits used for state preparation. The random error $E$ may act nontrivially on both registers.  

\begin{theorem}[\textbf{Noise tolerance}]
Let $b\in \{0,1\}^v$ be an arbitrary input. Let $(b,(s,y))$ be the input/output of a noisy implementation the 
circuit  in Fig.~\ref{fig:algonecircuit},  i.e.,  $(s,y)$ are sampled from the distribution Eq.~\eqref{eq:noisyp}. We may choose $m=O(\mathrm{poly}(\log(n)))$ and $p_{th}=2^{-2^{O(D)}}$ such that for all $p<p_{th}$ we have 
\begin{equation}
\mathrm{Pr}\left[\mathcal{R}_U(b,(s,y))=1 \right]>0.99
\label{eq:pr}
\end{equation}
\label{thm:tol}
\end{theorem}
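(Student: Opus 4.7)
The plan is to trace how the physical noise propagates through the three segments of the algorithm of Fig.~\ref{fig:algonecircuit}\,---\,the single-shot preparation $W^{\otimes n}$, the logical Clifford circuit $\overline{C}_b$, and the final $Z$-basis measurement\,---\,absorbing all noise into Pauli errors preceding each ideal operation, so that the noisy circuit is equivalent to an ideal one acted on by a single effective Pauli error on the logical register just before measurement.

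First, I would apply Lemma~\ref{lem:noisemodelcliffordcircuit} to the constant-depth preparation circuit $W^{\otimes n}$ (incorporating the noisy ancilla readout), which converts it into ideal preparation preceded by a Pauli $E_1\sim\calN(q_1)$ with $q_1=O(p^{4^{-O(1)}})$. Treating the $n$ codeblocks jointly, Condition~\ref{cond:shot} then implies that the post-preparation state, conditioned on syndrome $s$, equals $\rec(s)\,\repair(E_1)\,|\overline{0}^n\rangle$ with $\repair(E_1)\sim\calN(q_2)$, $q_2=c'q_1^{c''}$. Next, applying Lemma~\ref{lem:noisemodelcliffordcircuit} to $\overline{C}_b$ (which has depth $O(D)$ by Condition~\ref{cond:logical}) together with the final noisy measurement replaces the noisy computation by ideal $\overline{C}_b$ followed by a Pauli $E_2\sim\calN(q_3)$ with $q_3=4p^{4^{-O(D)}}$ preceding an ideal computational-basis measurement.

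I would then use Eq.~\eqref{eq:Q} to commute $\rec(s)$ past $\overline{C}_b$ (producing the factor $X(f)Z(h)$ that appears in the definition of $\mathcal{R}_U$), and iterate Lemma~\ref{claim:basic}(iv) across the $O(D)$ layers of $\overline{C}_b$ to commute $\repair(E_1)$, yielding a conjugated error $E_1'=\overline{C}_b\,\repair(E_1)\,\overline{C}_b^\dagger\sim\calN(q_4)$ whose rate $q_4$ remains small whenever $q_2$ is. Combining $E_1'$, $E_2$ and $X(f)Z(h)$ into one Pauli and discarding the $Z$-part (which is irrelevant for computational-basis outcomes), the measurement outcome obeys $y\oplus f = y_0\oplus v$, where $y_0$ is distributed as an ideal measurement of $\overline{C}_b|\overline{0}^n\rangle$ (so $y_0^i\in\calL$ encodes the ideal logical bits) and $X(v)$ is a local stochastic error obtained from the product $E_2\cdot E_1'$ via Lemma~\ref{claim:basic}(i) and~(iii).

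Finally, I would apply Condition~\ref{cond:meas} codeblock by codeblock: provided the aggregated rate of $X(v)$ lies below $q_{\mathrm{th}}$, each decoded logical bit $\dec((y\oplus f)^i)=\dec(y_0^i\oplus v^i)$ equals the ideal value $\parity(y_0^i)$ with probability at least $1-\exp(-c'm^c)$. Since $\parity(y_0^i)$ coincides with the $i$-th bit of the ideal measurement of $\overline{C}_b|\overline{0}^n\rangle$, a union bound over the $n$ codeblocks together with the definition of $\mathcal{R}_U$ yields $\Pr[\mathcal{R}_U(b,(s,y))=1]\ge 1-n\exp(-c'm^c)$, which exceeds $0.99$ once $m=\Theta(\mathrm{poly}(\log n))$ is chosen appropriately. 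The value $p_{\mathrm{th}}$ is obtained by unwinding the chain $p\to q_1\to q_2\to q_4$ (and separately $p\to q_3$) so that the combined $X$-rate stays below $q_{\mathrm{th}}$; since each reduction loses an iterated root and $\overline{C}_b$ has depth $O(D)$, the admissible noise rate shrinks to $p_{\mathrm{th}}=2^{-2^{O(D)}}$. The main obstacle I anticipate is the careful bookkeeping of error rates through these successive commutations, together with verifying that Condition~\ref{cond:shot}, as stated for a single logical qubit, extends to the joint noise model on $n$ codeblocks with the same local stochastic guarantee\,---\,this likely requires that the per-codeblock repair operator depends only on the restriction of $E$ to that codeblock's qubits and its associated ancillas.
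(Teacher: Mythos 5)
Your proof is correct and follows essentially the same strategy as the paper's: push all physical noise into local stochastic Pauli errors via Lemma~\ref{lem:noisemodelcliffordcircuit}, feed the state-preparation error into Condition~\ref{cond:shot} to obtain a repair operator, conjugate by $\overline{C}_b$ using Lemma~\ref{claim:basic}(iv), combine errors, and apply Condition~\ref{cond:meas} per codeblock with a union bound. The one organizational difference is that you invoke Lemma~\ref{lem:noisemodelcliffordcircuit} twice (once for $W^{\otimes n}$ with the ancilla readout, once for $\overline{C}_b$ with the data readout), whereas the paper applies it once to the entire unitary $(\overline{C}_b\otimes I)W^{\otimes n}$ (with all measurements deferred to the end), then splits the single error as $E=E'\otimes E''$ and feeds only the ancilla part $I\otimes E''$ into Condition~\ref{cond:shot}, keeping the data part $E'$ aside to merge later. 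Your two-application variant works because you correctly fall back on Lemma~\ref{claim:basic}(iii) for possibly dependent errors when merging; the paper's single-application form is slightly tidier bookkeeping that avoids having to reason about dependencies across the $W$-to-$\overline{C}_b$ boundary. The concern you flag at the end\,---\,that Condition~\ref{cond:shot} as stated for one codeblock must be shown to yield a local-stochastic bound for the tensor product $\bigotimes_j\repair(E^j)$ over $n$ codeblocks\,---\,is real and the paper handles it somewhat implicitly; it can be discharged by noting that the per-codeblock repair operators depend only on disjoint restrictions of $E$, so the lifting-property argument of Section~\ref{sec:singleshotlogicalstateprep} carries over to the product repair verbatim.
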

Theorem \ref{thm:tol} shows that a noisy constant-depth quantum circuit satisfies the relation $\mathcal{R}_U$ with high probability. 
\begin{proof}
Let us consider a noisy run of Algorithm 1 with input $b\in \{0,1\}^v$, in which a Pauli error $E\sim \mathcal{N}(p)$ is applied before the measurements. The amplitude for obtaining output $(s,y)$ is:
\begin{equation}
\label{A(s,y)}
A(s,y)=\langle y\otimes s|E(\overline{C_b}\otimes I)W^{\otimes n}|0^{mn}\otimes 0^{m_{\mathrm{anc}}n}\rangle.
\end{equation}
We may write $E=E'\otimes E''$ where $E'$ is an $mn$-qubit Pauli and $E''$ is an $nm_{\mathrm{anc}}$-qubit Pauli. Using part (i) of Claim \ref{claim:basic} we have $E', E''\sim \mathcal{N}(p)$. Applying
Condition \ref{cond:shot} to each of the $n$ codeblocks gives
\begin{equation}
\label{eq:single_shot1}
(\rec(s) \otimes |s\rangle \langle s|)(I\otimes E'')W^{\otimes n} |0^{mn}\otimes 0^{m_{\mathrm{anc}}n}\rangle
=\gamma_s (F |\overline{0}^n\rangle) \otimes |s\rangle,
\end{equation}
where $\gamma_s \in \mathbb{C}$ is a normalization coefficient and
$F=F(E'')$ is a tensor product of $n$ repair operators associated with each
codeblocks. More precisely, 
\[
F=\repair(I\otimes E^1)\otimes \cdots \otimes \repair(I\otimes E^n),
\]
where $E^j$ is the restriction of $E$ onto the ancillary register associated with the $j$-th codeblock.
Multiplying both sides of Eq.~\eqref{eq:single_shot1} by $\rec(s)$
and substituting it into Eq.~\eqref{A(s,y)} gives
\begin{align}
A(s,y)=\langle y|E'\overline{C_b}\langle s|E''W^{\otimes n}|0^{mn}\otimes 0^{m_{\mathrm{anc}}n}\rangle& = 
\pm \gamma_s \langle y|E'\overline{C_b}F\rec(s) |\overline{0}^n\rangle.
\end{align} 
Here we noted that $F$ and $\rec(s)$ are Pauli operators, so that 
commuting them through each other can only change the overall sign.
Since $E^j \sim \calN(p)$, 
Condition \ref{cond:shot} ensures that $\repair(I\otimes E^j) \sim \mathcal{N}(c'p^{c''})$ for some universal 
constants $c',c''$. Therefore $F\sim \mathcal{N}(c'p^{c''})$.

Now define
\[
F'=\overline{C_b} F\overline{C_b}^{\dagger}.
\]
Each layer of gates in $\overline{C_b}$ is a depth-$1$ Clifford circuit with gates acting on $O(1)$ qubits. Each gate acting on $O(1)$ qubits can be decomposed into $O(1)$ Clifford gates acting on $1$- and $2$ qubits. We may then apply part (iv) of Claim \ref{claim:basic}  to get 
\begin{equation}
F'\sim \mathcal{N}(2^{\sum_{j=1}^{O(D)} 2^{-j}}(c'p^{c''})^{2^{-O(D)}})\sim\mathcal{N}(2(c'p^{c''})^{2^{-O(D)}})
\label{eq:Frate}
\end{equation}
and
\begin{equation}
A(s,y)= \pm \gamma_s \langle y|E' F' \overline{C_b}\rec(s)|\overline{0}^n\rangle.
\label{eq:amp}
\end{equation}
Now let $E'F'=G H$ where $G$ is an $X$-type Pauli and $H$ is a $Z$-type Pauli. Write
\begin{equation}
G=G^1G^2\ldots G^n \qquad G^j\sim \mathcal{N}(q) \qquad \qquad q=O((c'p^{c''})^{2^{-O(D)}})
\label{eq:G}
\end{equation}
where we used part (iii) of Claim \ref{claim:basic}, Eq.~\eqref{eq:Frate} and the fact that $E'\sim \mathcal{N}(p)$.  Now we enforce $q<q_{\mathrm{th}}$ where $q_{\mathrm{th}}$ is the constant noise threshold from Condition \ref{cond:meas}. This is achieved by choosing
\[
p<p_{\mathrm{th}} \qquad \mbox{where} \quad p_{\mathrm{th}}=\Omega(q_{\mathrm{th}}^\beta)
\]
with $\beta = (c'')^{-1} 2^{O(D)}=O(1)$.
Using Eq.~\eqref{eq:amp} and the definition of $G$ we arrive at the probability distribution over outputs $(s,y)$:
\begin{equation}
|A(s,y)|^2=| \gamma_s |^2|\langle y\oplus \mathrm{Supp}(G)|\overline{C_b}\rec(s)|\overline{0}^n\rangle|^2= | \gamma_s |^2|\langle y\oplus \mathrm{Supp}(G)|\phi_{b,s}\rangle|^2
\label{eq:noisyprob}
\end{equation}
where $\phi_{b,s}$ is given by Eq.~\eqref{eq:phiout}. A pair $(s,y)$ which is obtained with positive probability satisfies
\begin{equation}
|\langle y\oplus \mathrm{Supp}(G)|\phi_{b,s}\rangle|^2 >0
\label{eq:phibscond}
\end{equation}
and applying Eq.~\eqref{eq:noisefree} gives
\begin{equation}
\mathcal{R}_U(b,(s,y\oplus \mathrm{Supp}(G)))=1.
\label{eq:Renc}
\end{equation}
Moreover for each $j=1,\ldots, n$ we have
\begin{equation}
y^j \oplus \mathrm{Supp}(G^j)\oplus f^j(b,s) \in \mathcal{L}
\label{eq:inV}
\end{equation}
where $f$ is defined in Eq.~\eqref{eq:Q} and $\mathcal{L}\subseteq \{0,1\}^m$ is the set of 
$m$-qubit basis states that appear in the logical states $|\overline{0}\rangle$, $|\overline{1}\rangle$,
see Eq.~\eqref{Lsubspace}.  Indeed, Eq.~\eqref{eq:inV} follows from Eq.~\eqref{eq:phibscond} and Eq.~\eqref{eq:phiout} which implies that $X(f)Z(h)|\phi_{b,s}\rangle$ is a logical state of the code $\mathcal{Q}_m$.

Let $z^j=\dec(y^j \oplus f^j(b,s))$ and $Z^j=\dec(y^j \oplus \mathrm{Supp}(G^j)\oplus f^j(b,s))$ for each $j=1,2,\ldots n$. Then Eq.~\eqref{eq:Renc} and Definition \ref{definition:enc} imply
\begin{equation}
R_U(b,Z)=1.
\label{eq:RZ}
\end{equation}
Now using Eq.~\eqref{eq:inV}, Condition \ref{cond:meas}, and the fact that $G^j\sim \mathcal{N}(q)$ with $q<q_{th}$, we get that for each $j=1,2,\ldots, n$:
\begin{equation}
\mathrm{Pr}\left[Z^j=z^j\right] \geq 1-\exp(-\Omega(m^c))
\label{eq:union}
\end{equation}
for some constant $c>0$.
By a union bound, we may choose $m=O(\mathrm{poly}(\log(n)))$ such that $Z^j=z^j$ for all $j=1,\ldots, n$ with probability at least $0.99$. Combining with Eq.~\eqref{eq:RZ} gives $R_U(b,z)=1$ with probability at least $0.99$ and plugging this into Definition \ref{definition:enc} we arrive at Eq.~\eqref{eq:pr}.
\end{proof}

\subsubsection{Circuit depth lower bound for classical circuits}
\label{sec:classicalhardnessnoisetolerant}
Consider a classical probabilistic circuit which satisfies the relation $\mathcal{R}_U$ with high probability. The following Theorem establishes a lower bound on the depth $\mathcal{D}$ that such a circuit must have, as a function of the depth required to satisfy the bare relation $R_{U}$.

\begin{theorem}[\textbf{Classical depth lower bound}]
Let $U$ be a controlled Clifford circuit of depth $D$ composed of $k$-qubit gates. Suppose there is a classical probabilistic circuit of depth $\mathcal{D}$ and gates of fan-in at most $K$, such that the input/output pairs~$(b,(s,y))$ of the circuit satisfy $\mathcal{R}_U$ with probability at least $1-p_{F}$ for a random input~$b\in S$ uniformly chosen from some subset~$S\subset \{0,1\}^v$. Then there is another classical probabilistic circuit with gates of fan-in at most $K+2k+\max\{m,m_{\mathrm{anc}}\}$ and depth at most
\[
\mathrm{Depth}\leq \mathcal{D}+D+3,
\] 
such that the input/output pairs~$(b,z)$ satisfy $R_U$ with probability at least $1-p_{F}$, for $b\in S$ chosen uniformly at random. 
\label{thm:depth}
\end{theorem}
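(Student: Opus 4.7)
The plan is to construct a new classical circuit~$\mathcal{C}'$ that on input~$b$ first runs $\mathcal{C}$ to obtain $(s,y)$ and then appends $D+3$ layers of deterministic postprocessing which output $z\in \{0,1\}^n$ with $z_i = \dec(y^i \oplus f^i(s,b))$ for every codeblock~$i=1,\ldots,n$. By Definition~\ref{definition:enc}, whenever $\mathcal{R}_U(b,(s,y))=1$ the resulting $z$ automatically satisfies $R_U(b,z)=1$, so $\mathcal{C}'$ solves $R_U$ with success probability at least $1-p_F$; any auxiliary randomness of $\mathcal{C}$ is simply inherited by $\mathcal{C}'$.

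The postprocessing has four stages. Stage~1 (one layer) computes, for each codeblock~$j$ and each of its $m$ physical qubits, the $X$- and $Z$-components of $\rec(s^j)$ as arbitrary Boolean functions of the $m_{\mathrm{anc}}$-bit syndrome~$s^j$; each output bit can be produced by a single gate of fan-in~$m_{\mathrm{anc}}$. Stage~2 ($D$ layers) propagates this Pauli through the encoded circuit~$\overline{C}_b$. By Condition~\ref{cond:logical} together with transversality of~$\CNOT$, each of the $D$ logical layers of $C_b$ lifts to a depth-$1$ physical Clifford circuit of $1$- and $2$-qubit physical gates, so at each physical layer the updated $X$- and $Z$-bits at a physical qubit are determined by the current Pauli bits at the at-most-two qubits of the relevant physical gate (at most four bits) together with the few bits of~$b$ that specify which $k$-qubit logical gate is applied; this bounds the fan-in by $2k$ up to an additive constant. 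Stage~3 (one layer) computes the bitwise XOR $y^i \oplus f^i(s,b)$ with fan-in~$2$, and Stage~4 (one layer) applies $\dec$ to each of the $n$ resulting length-$m$ strings with fan-in~$m$. Summing yields total depth $\mathcal{D}+1+D+1+1=\mathcal{D}+D+3$ and maximum fan-in $\max\{K,\,m_{\mathrm{anc}},\,2k+O(1),\,m\} \leq K + 2k + \max\{m,\,m_{\mathrm{anc}}\}$, as required.

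The main technical point to verify is that conjugating~$\rec(s)$ through~$\overline{C}_b$ really does decompose into $D$ layers of constant-fan-in local classical updates, even though an encoded $k$-qubit logical gate acts globally on up to~$km$ physical qubits. This is exactly what Condition~\ref{cond:logical} provides: the depth-$1$ physical implementation of each encoded logical generator ensures that the symplectic action of each physical layer is block-diagonal with at most $4\times 4$ blocks, so every Pauli-bit update depends on only a constant-size local neighborhood. Aside from this observation, the proof is routine bookkeeping of depths and fan-ins; correctness of~$\mathcal{C}'$ on every input/randomness pair on which~$\mathcal{C}$ is correct is a deterministic consequence of Definition~\ref{definition:enc}.
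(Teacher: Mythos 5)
Your proposal is correct and follows essentially the same route as the paper's proof: compute $\rec(s)$ in depth~$1$ (fan-in~$m_{\mathrm{anc}}$), propagate the Pauli through $\overline{C}_b$ in depth~$D$ via the stabilizer formalism and Condition~\ref{cond:logical}, then one layer for the XOR and one for $\dec$ (fan-in~$m$), giving total depth $\mathcal{D}+D+3$. Your observation that the per-layer fan-in during Stage~2 is ``$2k$ up to an additive constant'' (four Pauli bits at the two physical qubits of a gate plus at most $k$ control bits of~$b$) is actually a slightly more careful accounting than the paper's flat ``$2k$'' claim, but both are absorbed by the slack in the stated bound $K+2k+\max\{m,m_{\mathrm{anc}}\}$.
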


\begin{proof}
It suffices to show that from any pair $(b,(s,y))$ satisfying $\mathcal{R}_U(b,(s,y))=1$, we may compute $z\in \{0,1\}^n$ such that $R_U(b,z)=1$ using a classical circuit of depth at most $D+3$ with gates of fan-in at most $2k+\max\{m,m_{\mathrm{anc}}\}$.

So suppose $(b,(s,y))$ satisfies $\mathcal{R}_U(b,(s,y))=1$. From the definition of $\mathcal{R}_U$ we have that $z\in \{0,1\}^n$ defined by
\[
z_i=\dec(y^i\oplus f^i(s,b))
\]
satisfies $R_U(b,z)=1$. Note that the $\dec$ function maps $m$ bits to $1$ bit, so applying this function in parallel to all codeblocks requires only one layer of gates with fan-in $m$. The $\oplus$ gates likewise only require depth $1$. 

It remains to show that $f^{i}$ (defined by Eq.~\eqref{eq:Q}) can be computed from $(b,s)$ in depth $D+1$ using gates of fan-in at most $2k+m_{\mathrm{anc}}$. First recall that $\rec(s)=\otimes_{j=1}^{n} \rec(s^j)$, where $\rec(s^j)$ is a function that depends only on $m_{\mathrm{anc}}$ bits of $s$. Therefore we can classically compute $\rec(s)$ from $s$ using a depth-1 circuit composed of gates with fan-in $m_{\mathrm{anc}}$. Here and below an $n$-qubit Pauli is represented (up to an overall global phase) by a $2n$-bit string in the usual way.

Next we compute $f^i$ from $\rec(s)$ using a circuit of depth $D$ using gates of fan-in at most $2k$. 
Suppose $P$ is a Pauli operator and $C$ is a depth-1 quantum circuit composed of $k$-qubit Clifford gates.
Using the standard stabilizer formalism one can construct a depth-1 classical
circuit with fan-in $2k$ computing  the function
$P \to CP C^{\dagger}$.
Likewise, if $C$ has depth $D$, the classical circuit computing 
the function $P\to CPC^{\dagger}$ has depth $D$ and fan-in $2k$.
By Condition~\ref{cond:logical}, the  circuit $\overline{C}_b$
has depth at most $D$ and consists of  two-qubit Clifford
gates, where each gate is classically controlled by at most $k$ bits of $b$. Thus a function $P,b \to \overline{C}_b P\overline{C}_b^{\dagger}$
can be computed by a classical  depth-$D$ circuit with fan-in $2k$.
Recall that $f^i(s,b)$ is defined by
\[
X(f)Z(h) = \overline{C}_b \rec(s) \overline{C}_b^{\dagger}.
\]
The above shows that
$f^i(s,b)$ can be computed by a classical circuit
with depth $D+1$ and  fan-in at most $2k+m_{\mathrm{anc}}$.
\end{proof}

\subsection{Fault-tolerant quantum advantage\label{sec:ftquantumadvantage}}
Theorems \ref{thm:tol} and \ref{thm:depth} can be applied to obtain a fault-tolerant quantum advantage with shallow circuits from a constant-depth controlled-Clifford circuit that achieves a quantum advantage in the absence of noise. In particular, we may construct the noise-tolerant relation $\mathcal{R}_U$  from the controlled-Clifford circuit. Theorem \ref{thm:tol} states that the noise-tolerant relation can still be solved by a constant depth quantum circuit, whereas Theorem \ref{thm:depth} can be used to lower bound the circuit depth required by classical probabilistic algorithms which satisfy the relation.  Crucially, the code length $m$ and the number of ancillas $m_{\mathrm{anc}}$ 
needed to achieve a  fault-tolerant quantum advantage in Theorem~\ref{thm:tol} 
scales only poly-logarithmically with the size of input/output strings in the relation problem.
Thus, the lower bound of Theorem~\ref{thm:depth}
applies to classical circuits with a poly-logarithmic fan-in. 
We shall see that a quantum advantage established in Section~\ref{sec:1Dmagicsquare} persists
even if the classical circuit may have poly-logarithmic fan-in.

To make this strategy work we need a controlled Clifford circuit that achieves a quantum advantage in the absence of noise.  Below we will use the relation $R^{MS}_U$ for the 1D Magic Square Problem defined in Section~\ref{sec:1Dmagicsquare}.  Note that we could have instead used other relation problems known to achieve a quantum advantage, such as the one described in Ref.~\cite{bragokoe18}.

Recall from Theorem \ref{thm:1dq} that the quantum circuit $U$ which solves the 1D Magic Square Problem is a depth $D=O(1)$ controlled Clifford circuit composed of $k$-qubit gates with $k=O(1)$. We can therefore define its noise-tolerant version $\mathcal{R}^{MS}_U$  which can be satisfied with probability close to one by the input-output statistics of a noisy constant-depth quantum circuit (by Theorem \ref{thm:tol}). On the other hand, we may use Theorem \ref{thm:depth} to establish the following lower bound on the depth of any classical circuit satisfying this relation.
\begin{theorem}\label{thm:lowerboundcircuitdepthDMS}
Suppose $\mathcal{C}$ is a classical probabilistic circuit composed of gates with fan-in at most $K=O(\mathrm{poly}(\log(n)))$  which satisfies the relation $\mathcal{R}^{MS}_U$ with probability greater than $9/10$ for inputs chosen uniformly at random  from the subset~$S\subset \{0,1\}^{4n}$ defined after Theorem~\ref{thm:1dq}.  Then its depth satisfies
\[
\mathcal{D}\geq \Omega\left(\frac{\log(n)}{\log(\log(n))}\right).
\]
\end{theorem}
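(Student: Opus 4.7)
The plan is to chain together Theorem~\ref{thm:depth} and Theorem~\ref{thm:lowerbound} to transfer the hardness of the bare 1D Magic Square Relation $R^{MS}_U$ to the noise-tolerant relation $\mathcal{R}^{MS}_U$. Recall that by Theorem~\ref{thm:1dq} the controlled-Clifford circuit $U$ solving the 1D Magic Square Problem has depth $D=O(1)$ and is built from $k$-qubit gates with $k=O(1)$. Moreover, as specified in Theorem~\ref{thm:tol}, the ingredients of $\mathcal{R}^{MS}_U$ are drawn from a code family with $m, m_{\mathrm{anc}} = O(\mathrm{poly}(\log n))$. The subset $S\subset \{0,1\}^{4n}$ is precisely the one for which the classical lower bound of Theorem~\ref{thm:lowerbound} has been established.

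First I would take the assumed classical probabilistic circuit $\mathcal{C}$ of depth $\mathcal{D}$ and fan-in $K = O(\mathrm{poly}(\log n))$ that satisfies $\mathcal{R}^{MS}_U$ with probability greater than $9/10$ on a uniformly random input from $S$. Applying Theorem~\ref{thm:depth}, I obtain a new classical probabilistic circuit $\mathcal{C}'$ that satisfies the bare relation $R^{MS}_U$ with the same success probability $>9/10$ on the same input distribution, whose depth is at most $\mathcal{D}+D+3 = \mathcal{D}+O(1)$, and whose fan-in is bounded by
\begin{align}
K' \leq K + 2k + \max\{m, m_{\mathrm{anc}}\} = O(\mathrm{poly}(\log n))\ .
\end{align}

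Next I would feed $\mathcal{C}'$ into Theorem~\ref{thm:lowerbound}, which asserts that any classical probabilistic circuit of fan-in $K'$ satisfying the 1D Magic Square Relation with probability at least $9/10$ on uniformly random inputs from $S$ must have depth at least $\log(0.00001\, n)/(2\log K')$. Since $K' = O(\mathrm{poly}(\log n))$ we have $\log K' = O(\log\log n)$, so the depth of $\mathcal{C}'$ is at least $\Omega(\log n / \log\log n)$. Combining with the depth bound from the previous step yields
\begin{align}
\mathcal{D} + O(1) \geq \Omega\!\left(\frac{\log n}{\log\log n}\right)\ ,
\end{align}
which absorbs the $O(1)$ additive overhead and gives the claimed bound on $\mathcal{D}$.

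The argument is essentially a bookkeeping exercise, since all the technical work has already been done in Theorems~\ref{thm:lowerbound}, \ref{thm:tol}, and \ref{thm:depth}. The only point requiring a little care is verifying that the fan-in blow-up incurred by the reduction of Theorem~\ref{thm:depth}, namely $K\mapsto K + 2k + \max\{m,m_{\mathrm{anc}}\}$, stays within $O(\mathrm{poly}(\log n))$ so that $\log K'$ contributes only a $\log\log n$ factor in the denominator of the final bound; this is guaranteed by the code parameters furnished by Theorem~\ref{thm:tol}.
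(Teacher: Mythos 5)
Your proposal is correct and takes essentially the same approach as the paper: apply Theorem~\ref{thm:depth} to convert the classical circuit for $\mathcal{R}^{MS}_U$ into one for the bare relation $R^{MS}_U$ with additive $O(1)$ depth overhead and fan-in blow-up $K\mapsto K+2k+\max\{m,m_{\mathrm{anc}}\}=O(\mathrm{poly}(\log n))$, then invoke Theorem~\ref{thm:lowerbound} on the resulting circuit and absorb the constants. The paper's proof is identical in substance, just written more tersely.
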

\begin{proof}
Combining Theorems \ref{thm:lowerbound} and \ref{thm:depth} we get
\[
\mathcal{D}\geq \frac{\log(0.0001 n)}{2\log{K'}}-O(1)
\]
where $K'=K+2k+\max\{m,m_{\mathrm{anc}}\}=O(\mathrm{poly}(\log(n)))$.  Here we used the fact that the number of physical qubits $m$ per logical qubit and the number of ancilla qubits $m_{\mathrm{anc}}$ needed for fault-tolerant state preparation are both polylogarithmic in $n$ (cf. Theorem \ref{thm:tol} and Condition \ref{cond:shot}).
\end{proof}

This shows that the noise-tolerant 1D Magic Square relation $\mathcal{R}^{MS}_U$ separates noisy constant-depth quantum circuits from noise-free constant-depth classical circuits with at most polylogarithmic fan-in. At this point we cannot say very much about the geometric locality of the resulting quantum algorithm, however.  Recall that the (ideal) circuit~$U$ for the 1D Magic Square Problem introduced in Section~\ref{sec:1Dmagicsquare} only has gates acting on qubits located within a neighborhood of diameter~$O(1)$ on a 1D line graph; furthermore, in the given setup, every Clifford appearing in the circuit is controlled by input bits located near the qubits it acts on. However, the geometric locality of the ideal controlled-Clifford circuit $U$ solving the bare relation $R^{MS}_U$ is not necessarily inherited by the quantum circuit which solves the noise-tolerant relation $\mathcal{R}^{MS}_{U}$. Indeed, the circuit given in Fig.~\ref{fig:algcircuit}  may  require geometric non-locality in three different ways:
\begin{enumerate}[(a)]
\item
The constant depth Clifford circuit $W$ used for state preparation may be geometrically nonlocal. 
\item\label{it:firstissuegeometric}
The logical Clifford gates in the circuit $\overline{C}_b$ may be geometrically non-local.
\item\label{it:secondissuegeometric}
If a subset of input bits controls a certain gate in the ideal circuit~$U$, this set of input bits ends up potentially controlling unitaries acting on all qubits within one (or several) codeblock(s). Thus the classical control may be geometrically non-local. 
\end{enumerate}
In Section \ref{sec:3D} we will address these potential sources of geometric non-locality.

For this purpose it will be convenient to consider ideal and noise-tolerant relations where the initial basis state $|0^n\ra$ is replaced by an entangled state $|\Phi^{\otimes n/2}\ra$, where $\Phi$ is the Bell state.
Note that initial entangled states are more natural in the context of non-local games.
Accordingly, the bare relation $R_U(b,z)$ is satisfied iff
\[
p_b(z)=|\langle z|C_b|\Phi^{\otimes n/2}\rangle|^2>0.
\]
Here the input $b$, the output $z$, and the controlled Clifford circuit $C_b$
are the same as in Definition~\ref{def:ideal}.
The corresponding  fault-tolerant relation ${\cal R}_U(b,s,y)$ is based
on Algorithm~1 where the first step prepares
$n/2$ copies of the logical Bell state $\overline{\Phi}$ 
instead of the logical basis state. The $i$-th copy of the Bell state  is encoded into codeblocks $B^{2i-1}$ and $B^{2i}$.
The codeblocks $B^{2i-1}B^{2i}$ share the same subset of $m_{anc}$ qubits
and are initialized in the state $|0^{2m}\rangle\otimes |0^{m_{\mathrm{anc}}}\rangle$.
Each pair $B^{2i-1}B^{2i}$ then applies a constant-depth Clifford circuit $W$ satisfying 
a suitable single-shot state preparation property (stated below)
and measures the $m_{\mathrm{anc}}$ ancilla qubits to get an outcome $s^i \in \{0,1\}^{m_{\mathrm{anc}}}$. 
A noise-tolerant relation is then defined according to Definition~\ref{definition:enc},
with $s=s^1s^1s^2s^2\cdots s^{n/2}s^{n/2}$.
The modified version of Condition~2 tailored to preparation of the logical Bell state is as follows.

\setcounter{cond}{1}
\begin{cond}[\textbf{Single-shot Bell state preparation}]
Let  $c,c',c'',d$ be some universal constants. 
For each code $\calQ_m$ in the family
there must exist an integer $m_{\mathrm{anc}}\leq m^{c}$, a 
depth-$d$ Clifford circuit $W$ acting on $2m+m_{\mathrm{anc}}$ qubits, recovery and repair functions
\begin{align}
\begin{matrix}
\rec: &\{0,1\}^{m_{\mathrm{anc}}}&\rightarrow &\mathrm{Pauli}(2m)\\
\repair: & \mathrm{Pauli}(2m+m_{\mathrm{anc}})&\rightarrow &\mathrm{Pauli}(2m)
\end{matrix}
\end{align}
such that 
\begin{equation}
\left(\rec(s)\otimes |s\rangle\langle s|\right) EW(|0^{2m}\rangle\otimes |0^{m_{\mathrm{anc}}}\rangle)= \gamma_s (\repair(E) |\overline{\Phi}\rangle)\otimes |s\rangle
\end{equation}
for all $s\in \{0,1\}^{m_{\mathrm{anc}}}$ and $E\in \mathrm{Pauli}(2m+m_{\mathrm{anc}})$. 
Here $\gamma_s \in \mathbb{C}$ is the normalization.
Furthermore, the following must hold for all noise rates $p\in [0,1]$.
Suppose $E\sim \mathcal{N}(p)$. 
Then $\repair(E)\sim \mathcal{N}(c'p^{c''})$. 
\label{cond:shot}
\end{cond}
\noindent
The above modifications do not alter the proof of Theorems~\ref{thm:tol},\ref{thm:depth}
in any substantial way. Applied to the 1D Magic Square Problem,
 using this single-shot Bell state preparation eliminates the need for the initial layer of $\mathrm{CNOT}$ gates in the ideal circuit~$U$ of Fig.~\ref{fig:1d}. In Section \ref{sec:3D} we will argue that the remaining entangling gates,  stemming from the classically controlled gates  $U(\alpha)$ and $V(\beta)$ determining the measurement bases, as well as the gates  $W(\beta,\alpha)$ responsible for the entanglement SWAPS, can all be implemented in a geometrically local way. Finally, we show how the classical control can also be made geometrically local, if desired, by a suitable modification of the relation.

\section{Quantum code constructions\label{sec:codes}}

The quantum advantage established in  Theorems~\ref{thm:lowerbound}, \ref{thm:tol}, \ref{thm:depth} and~\ref{thm:lowerboundcircuitdepthDMS}
hinges on the existence of quantum error correcting codes 
with suitable properties.
Recall that we need a family of CSS-type codes that enable a depth-$1$ implementation
of the logical gates $S,H$, single-shot logical state preparation,
and a decoding algorithm that can tolerate local stochastic noise
with a small enough rate, see Conditions~1,2 and~3 in Section~\ref{subs:codes} 
for formal statements. In this section we show how to satisfy these conditions
using the standard 2D surface codes~\cite{bravyi1998quantum}. The latter are among the most promising codes for 
quantum fault-tolerance applications 
due to their high error threshold, efficient decoding algorithms,
and simple syndrome extraction circuits~\cite{dennis2002topological,raussendorf2007fault,Fowler2009}.

A distance-$d$ surface code encodes one logical qubit into
$m=d^2+(d-1)^2$ physical qubits located at edges
of a square lattice of size $d\times d$. 
The lattice has smooth top/bottom boundaries and rough
left/right boundaries, as shown at Fig.~\ref{fig:surface_code}.
Let us agree that a qubit is placed at the center of each edge.
We shall use notations $\calV$, $\calE$, and $\calF$
for the sets of vertices, edges, and faces of the surface code
lattice. The code is defined by a set of stabilizer generators
$A_v$ and $B_f$ associated with vertices $v\in \calV$ and 
faces $f\in \calF$. Specifically,
$A_v=\prod_{e\in \delta(v)} X_e$ and $B_f = \prod_{e\in \partial f} Z_e$.
Here $\delta(v)$  is the subset of edges incident to a vertex $v$
and $\partial f$ is the boundary of a face $f$.
Logical Pauli operators  can be chosen as
\begin{equation}
\overline{Z} = \prod_{e\in \calE^{\mathrm{diag}}} Z_e
\quad \mbox{and} \quad
\overline{X} = \prod_{e\in \calE^{\mathrm{diag}}} X_e,
\label{eq:logicalXZ}
\end{equation}
where $\calE^{\mathrm{diag}}\subset \calE$ is the subset of qubits lying
on the main diagonal of the lattice, see Fig.~\ref{fig:diagonal}.
It can be easily verified that $\overline{Z}\, \overline{X}=-\overline{X}\, \overline{Z}$.
Furthermore, $\overline{Z}$ and $\overline{X}$ commute with all stabilizer generators.

\begin{figure}[h]
\centerline{\includegraphics[height=4.5cm]{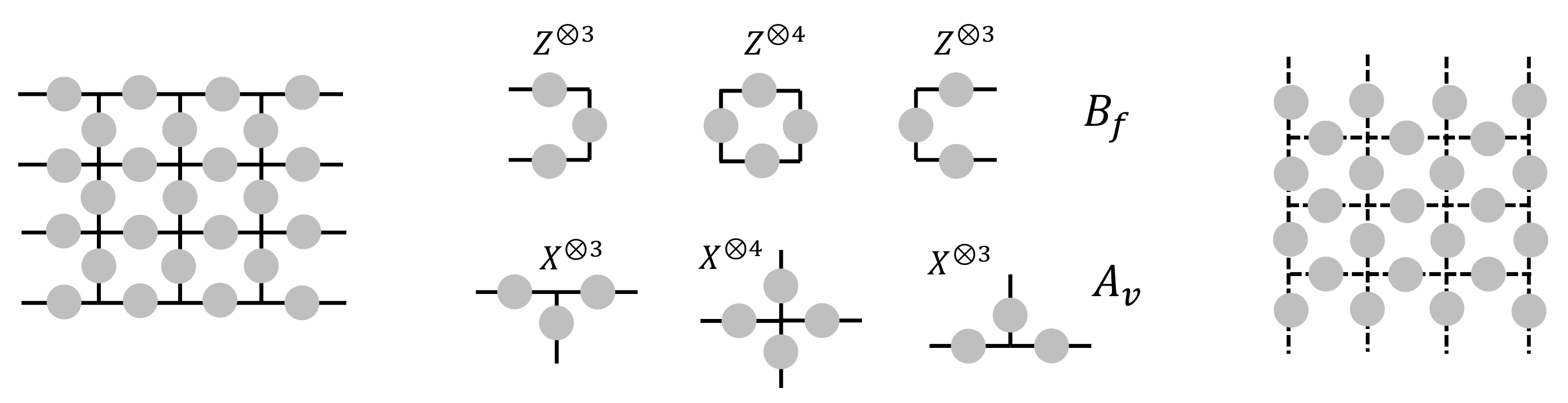}}
\caption{Example of the $d=4$ surface code lattice (left)
and the corresponding dual lattice (right). 
Stabilizer generators are shown at the center.\label{fig:surface_code}}
\end{figure}

\subsection{Geometrically local circuits for logical Clifford gates}
The logical gates $H$, $S$ can be implemented by 
depth-$1$ Clifford circuits using the lattice folding trick of 
Ref.~\cite{moussa2016transversal}.
For completeness, we restate the result of Ref.~\cite{moussa2016transversal} below.
We begin by introducing some extra notations.
Let $\sigma$ be a  reflection of the ambient space  ${\mathbb R}^2$
against the main diagonal of the surface
lattice, see Fig.~\ref{fig:diagonal}. Note that $\sigma$ maps
the surface code lattice to its dual and vice verse.
In other words, $\sigma$ defines bijective maps $\calE \to \calE$, $\calV\to \calF$, and $\calF\to \calV$.
More formally, suppose $v\in \calV$, $f\in \calF$,
and $e,e'\in \calE$.
Set $\sigma(v)=f$ if $\sigma(v)$ is the center of $f$.
Set $\sigma(f)=v$ if $\sigma$ maps the center of $f$ to $v$.
Set $\sigma(e)=e'$ if $\sigma$ maps the center of $e$ to the center of $e'$.

Consider the following operators:
\begin{equation}
\overline{H} = H^{\otimes m} \prod_{e\in \calE^{\mathrm{top}} }  \mathrm{SWAP}_{e,\sigma(e)}
\quad
\mbox{and} 
\quad
\overline{S} = \prod_{e\in \calE^{\mathrm{diag}} } S_e^{\phi(e)} \prod_{e\in \calE^{\mathrm{top}} } 
 \mathrm{CZ}_{e,\sigma(e)}.
\label{eq:logicalHS}
\end{equation}
Here $\calE^{\mathrm{top}} \subset \calE$ denotes  the subset of qubits lying above the main diagonal,
see Fig.~(\ref{fig:diagonal}),
$\phi(e)=+1$ for horizontal edges, and $\phi(e)=-1$ for vertical edges.

\begin{figure}[h]
\centerline{\includegraphics[height=4.5cm]{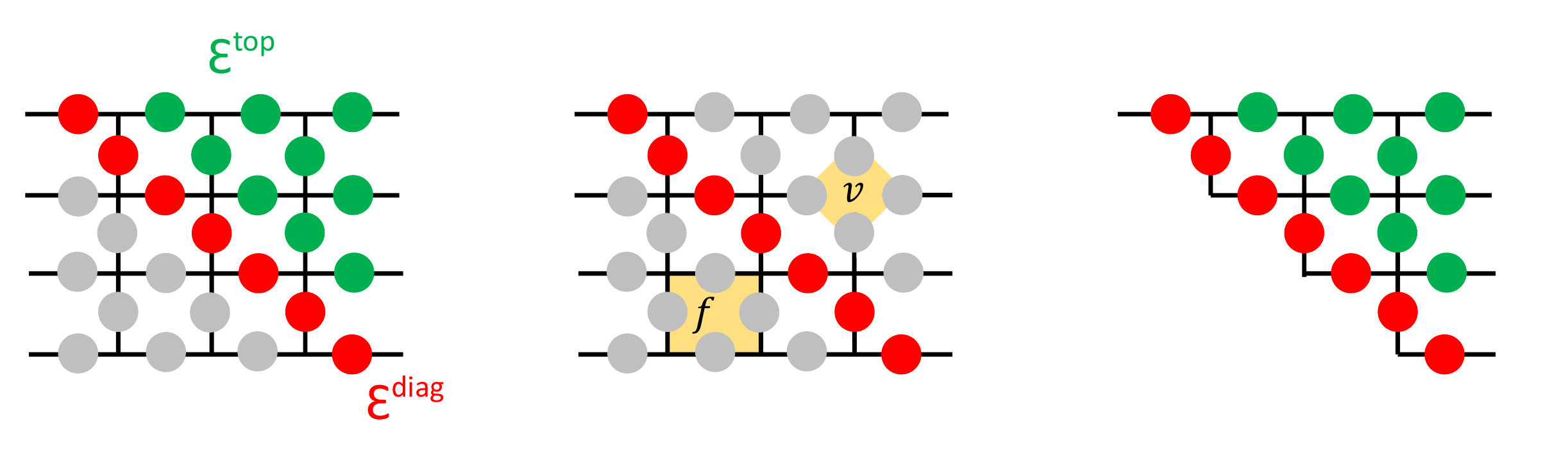}}
\caption{{\em Left:} Subsets of qubits $\calE^{\mathrm{diag}}$
and $\calE^{\mathrm{top}}$.
{\em Center:} Example of a face $f$ and a vertex $v$ mapped
to each other by the reflection against the main diagonal, that is,
$v=\sigma(f)$ and $f=\sigma(v)$.
{\em Right:} Folded surface code. Red and green circles
indicate single qubits and pairs of qubits   respectively.
\label{fig:diagonal}}
\end{figure}

\begin{lemma}[\bf \cite{moussa2016transversal}]
\label{lemma:HS}
The circuits $\overline{H}$ and $\overline{S}$ defined in Eq.~\eqref{eq:logicalHS} implement the logical Clifford
gates $H$ and $S$ respectively.
\end{lemma}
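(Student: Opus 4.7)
The plan is to verify that each of $\overline{H}$ and $\overline{S}$ implements the corresponding logical gate by checking two conditions: (a) the circuit lies in the normalizer of the stabilizer group, i.e., it permutes the stabilizers of $\calQ_m$ under conjugation, and (b) its induced action on the logical operators $\overline{X}$ and $\overline{Z}$ matches the required action of $H$ (exchange $\overline{X}\leftrightarrow\overline{Z}$) or $S$ (fix $\overline{Z}$, send $\overline{X}\mapsto \overline{Y}$). Because both circuits consist entirely of Clifford gates, conjugation acts by Pauli-to-Pauli maps on generators, so it suffices to compute these images explicitly on $A_v$, $B_f$, $\overline{X}$ and $\overline{Z}$.

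For $\overline{H}$, I would first observe that the SWAP layer $\prod_{e\in\calE^{\mathrm{top}}}\mathrm{SWAP}_{e,\sigma(e)}$ implements the reflection $\sigma$ on the qubit register: conjugation sends any Pauli $P_e$ to $P_{\sigma(e)}$, with diagonal qubits fixed by $\sigma$. The key geometric fact is that $\sigma$ exchanges the primal and dual lattices, so $\sigma(\delta(v))=\partial(\sigma(v))$ and $\sigma(\partial f)=\delta(\sigma(f))$. Consequently, after SWAPs the $X$-type generator $A_v=\prod_{e\in\delta(v)} X_e$ becomes $\prod_{e'\in\partial(\sigma(v))}X_{e'}$, and the transversal layer $H^{\otimes m}$ converts it into the $Z$-type face stabilizer $B_{\sigma(v)}$; symmetrically $B_f\mapsto A_{\sigma(f)}$. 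Since diagonal qubits are fixed by $\sigma$, the SWAPs leave $\overline{X}$ and $\overline{Z}$ unchanged, and the transversal Hadamards then swap them. One must also check the boundary conditions: the smooth top/bottom and rough left/right boundaries are exchanged by $\sigma$ but consistent with the primal/dual exchange.

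For $\overline{S}$, I would exploit that the $\mathrm{CZ}$ layer acts only on pairs $(e,\sigma(e))$ with $e\in\calE^{\mathrm{top}}$, $\sigma(e)\in\calE^{\mathrm{bot}}$, and is disjoint from the diagonal. The stabilizer analysis splits into cases. For a vertex $v$ strictly above the diagonal, propagating $X_e\mapsto X_e Z_{\sigma(e)}$ through each $\mathrm{CZ}_{e,\sigma(e)}$ gives $A_v \mapsto A_v\cdot\prod_{e\in\delta(v)}Z_{\sigma(e)} = A_v\cdot B_{\sigma(v)}$, which is again a stabilizer; the analogous computation handles vertices below the diagonal and face stabilizers. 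Vertices whose star $\delta(v)$ meets the diagonal receive additional contributions from the diagonal $S^{\phi(e)}$ factors, and the alternating sign pattern $\phi(e)=\pm 1$ on horizontal versus vertical diagonal edges is precisely what makes the residual $Z$-type phases close up inside the stabilizer group. Finally, the logical action follows: $\overline{Z}=\prod_{e\in\calE^{\mathrm{diag}}} Z_e$ is untouched by $\mathrm{CZ}$ and by the diagonal $S^{\pm 1}$, while $\overline{X}=\prod_{e\in\calE^{\mathrm{diag}}} X_e$ is mapped (up to a stabilizer) to $\prod_{e}\phi(e) Y_e$, which after using $Y_e=iX_eZ_e$ and the choice of $\phi$ equals $\overline{Y}=i\overline{X}\,\overline{Z}$ (possibly after multiplying by a face stabilizer to absorb any diagonal $Z$-residues).

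The hardest part of the argument is the sign bookkeeping for $\overline{S}$: one must verify that the prescribed signs $\phi(e)$ and the phases generated by $\mathrm{CZ}$ propagation combine so that every stabilizer is mapped into the stabilizer group with the correct overall sign, and that $\overline{X}$ acquires exactly the factor of $i$ required for $\overline{Y}$. By contrast, the $\overline{H}$ case is considerably cleaner, since it reduces to the self-duality of the surface-code lattice under the geometric reflection $\sigma$ combined with transversal Hadamards exchanging the two CSS sublattices.
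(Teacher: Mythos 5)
Your proposal is correct and follows essentially the same approach as the paper: verify that conjugation by $\overline{H}$ and $\overline{S}$ preserves the stabilizer group and induces the right map on $\overline{X},\overline{Z}$, using that the SWAP layer implements the reflection $\sigma$ exchanging primal and dual lattices (hence $A_v\mapsto B_{\sigma(v)}$ after the Hadamards), and, for $\overline{S}$, splitting into the case where $\delta(v)$ is disjoint from the diagonal (pure $\mathrm{CZ}$ propagation giving $A_v\mapsto A_v B_{\sigma(v)}$) versus the case where $\delta(v)$ meets the diagonal (where the alternating $S,S^{-1}$ factors supply the needed $Z$-pair). One small remark: your hedge ``possibly after multiplying by a face stabilizer to absorb any diagonal $Z$-residues'' is not needed --- since $\calE^{\mathrm{top}}$ is disjoint from $\calE^{\mathrm{diag}}$, the $\mathrm{CZ}$ layer never touches $\overline{X}$, and the sign count from the $d-1$ vertical diagonal edges already gives $\overline{S}\,\overline{X}\,\overline{S}^{-1}=(-1)^{d-1}\prod_e Y_e = i\,\overline{X}\,\overline{Z}$ on the nose, with no stabilizer correction.
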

\begin{proof}
It suffices to check that the conjugation by $\overline{H}$ and $\overline{S}$
maps stabilizers $A_v, B_f$ to products of stabilizers
and implements the desired transformation of the logical Pauli operators, that is,
\begin{equation}
\label{eq:logicalZX1}
\overline{H} \, \overline{Z}\, \overline{H} = \overline{X},
\quad
\overline{H} \, \overline{X}\, \overline{H} = \overline{Z},
\end{equation}
\begin{equation}
\label{eq:logicalZX2}
\overline{S} \, \overline{Z}\, \overline{S}^{-1} = \overline{Z},
\quad
\overline{S} \, \overline{X}\, \overline{S}^{-1} =i \overline{X}\, \overline{Z}.
\end{equation}
Consider first the circuit $\overline{H}$.
Since $\overline{Z}$ and $\overline{X}$ have support only on the main diagonal,
the product of SWAP gates in Eq.~\eqref{eq:logicalHS} has trivial
action on $\overline{Z}$ and $\overline{X}$.
The bitwise Hadamard in Eq.~\eqref{eq:logicalHS} exchanges
all $X$ and $Z$. This proves
Eq.~\eqref{eq:logicalZX1}.
Next consider some vertex stabilizer $A_v$. 
The product of SWAP gates in Eq.~\eqref{eq:logicalHS} maps
the support of $A_v$ to the face $f=\sigma(v)$,
see Fig.~\ref{fig:diagonal} for an example.
The bitwise Hadamard in Eq.~\eqref{eq:logicalHS}
changes each Pauli  $X$ to $Z$.
Thus $\overline{H}\, A_v \overline{H} = B_{\sigma(v)}$.
A similar argument shows that $\overline{H} B_f \overline{H} = A_{\sigma(f)}$.
We conclude that $\overline{H}$ implements a logical $H$ gate.

Consider now the circuit $\overline{S}$. Clearly, $\overline{S}$ commutes
with $\overline{Z}$ and face stabilizers $B_f$.
Using identities $SXS^{-1} = Y$ and $S^{-1} XS = -Y$ one easily gets
\[
\overline{S} \, \overline{X}\, \overline{S}^{-1}
=(-1)^{d-1} \prod_{e\in \calE^{\mathrm{diag}} } Y_e = i \overline{X}\, \overline{Z}
\]
proving Eq.~\eqref{eq:logicalZX2}.
It remains to check that $\overline{S}$ maps vertex stabilizers $A_v$
to products of stabilizers.
We claim that 
\begin{equation}
\label{eq:SAS}
\overline{S}\, A_v \overline{S}^{-1} = A_v B_{\sigma(v)}
\end{equation}
for all $v\in \calV$. Indeed, suppose first that
$A_v$ has no overlap with the main diagonal. 
Then one can ignore the $S$-gates in Eq.~\eqref{eq:logicalHS}
and $\overline{S} X_e \overline{S}^{-1} = \mathrm{CZ}_{e,\sigma(e)}X_e\mathrm{CZ}_{e,\sigma(e)}^{-1}=X_e Z_{\sigma(e)}$ 
for any $e$ in the support of $A_v$. This proves Eq.~\eqref{eq:SAS}.
In the remaining case, 
$A_v$ overlaps with the main diagonal on some consecutive pair of edges
$e,e'$.  Since the product of $S$-gates in Eq.~\eqref{eq:logicalHS}
alternates between $S$ and $S^{-1}$, one has
\[
\overline{S} (X_e X_{e'}) \overline{S}^{-1} = -Y_e Y_{e'} = (X_e X_{e'}) (Z_e Z_{e'})
=(X_e X_{e'}) (Z_{\sigma(e)} Z_{\sigma(e')}).
\]
Here we noted that $\sigma(e)=e$ for all edges on the main diagonal.
Assume for concreteness that $A_v$ is a weight-4 stabilizer. Then
$A_v = X_e X_{e'} X_g X_h$, where $g,h\in \calE\setminus \calE^{\mathrm{diag}}$.
The product of CZ-gates in Eq.~\eqref{eq:logicalHS} 
maps $X_g $ to $X_g Z_{\sigma(g)}$
and maps $X_h$ to $X_h Z_{\sigma(h)}$.
Thus
\[
\overline{S} A_v \overline{S}^{-1} = (\overline{S} X_e X_{e'} \overline{S}^{-1})
\cdot (\overline{S} X_g X_h \overline{S}^{-1}) = 
(X_e X_{e'}) (Z_{\sigma(e)} Z_{\sigma(e')}) \cdot
(X_g X_h) (Z_{\sigma(g)} Z_{\sigma(h)}) = A_v B_{\sigma(v)}.
\]
The case of weight-3 stabilizers $A_v$ is completely analogous.
We conclude that $\overline{S}$ implements a logical $S$-gate.
\end{proof}

As proposed in Ref.~\cite{moussa2016transversal}, 
quantum circuits implementing the logical gates $\overline{H}$ and $\overline{S}$ 
can be made
geometrically local by folding the surface code lattice against the main diagonal,
as shown in Fig.~\ref{fig:diagonal}.
The folded lattice has a pair of qubits $(e,\sigma(e))$ with
$e\in \calE^{\mathrm{top}}$  located at the 
same edge.
Now  each SWAP and CZ gate in Eq.~\eqref{eq:logicalHS}
acts on qubits located at the same edge,
i.e. both logical gates $\overline{H}$ and $\overline{S}$ can be implemented
by a depth-$1$ circuit with geometrically local gates.
We shall make use of the folded surface code in Section~\ref{sec:3D}
to construct a 3D embedding of the encoded version of the quantum circuit solving the
1D Magic Square Problem.

\subsection{Error threshold\label{sec:singleshotlogicalmeas}}
Next let us construct a decoding function $\dec:\{0,1\}^m\rightarrow \{0,1\}$
that satisfies Condition~\ref{cond:meas} of Section~\ref{subs:codes}.
We shall use the minimum weight  decoder
and show that it has a non-zero error threshold for local stochastic noise.
The proof follows ideas of Refs.~\cite{dennis2002topological,fowler2012proof}. 
 Fix an error correction function $\cor\, : \, \{0,1\}^m\to \{0,1\}^m$
such that $y=\cor(x)$ is a minimum
weight bit string that has the same $Z$-syndrome as $x$,
that is, $\langle y|B_f|y\rangle = \langle x|B_f|x\rangle$ for any face $f$.
Note that $\cor(x)$ depends only on the $Z$-syndrome of $x$.
Define
\begin{equation}
\label{DecSC}
\parity(x) = \sum_{e\in \calE^{\mathrm{diag}}} x_e
\qquad \mbox{and} \qquad
\dec(x) =\parity(\cor(x)\oplus x).
\end{equation}
Here the sum is evaluated modulo two.
If $x$ has the trivial $Z$-syndrome
(i.e. $\langle x|B_f|x\rangle=1$ for all $f$)
 then  $\cor(x)=0^m$ 
and thus $\parity(x) = \dec(x)$ proving
Eq.~\eqref{eq:paritydecode}.
To prove
Eq.~\eqref{eq:threshold} we need 
\begin{lemma}
\label{lemma:threshold}
Consider a random $X$-type error $E\sim \calN(q)$ with $q\le 0.01$.
Let $y\equiv \mathrm{Supp}(E)$. Then
\begin{equation}
\mathrm{Pr}_{E} \left[\parity(\cor(y) \oplus y)=1\right]\leq
3d(6q^{1/2})^d.
\label{eq:threshold1}
\end{equation}
\end{lemma}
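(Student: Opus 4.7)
The plan is to carry out the standard minimum-weight decoder analysis in the spirit of Dennis--Kitaev--Landahl--Preskill: reduce a decoding failure to the existence of a ``bad'' self-avoiding path across the lattice on which at least half of the edges are corrupted, then bound the number of such paths combinatorially. Concretely, first I would show that $\parity(\cor(y)\oplus y)=1$ is equivalent to $F:=\cor(y)\oplus y$ being a nontrivial representative of the logical $\overline X$ class. Since $\cor(y)$ has the same $Z$-syndrome as $y$, the chain $F\in\{0,1\}^{\calE}$ commutes with every $B_f$ and hence is a $1$-cycle relative to the rough boundaries; the condition $\parity(F)=1$ says precisely that $F$ anti-commutes with $\overline Z=Z(\alpha)$, which in homological terms means $F$ is not in the image of the stabilizer/boundary group.

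Next I would extract a self-avoiding path. A nontrivial $1$-cycle $F$ (relative to the rough boundaries) must contain a self-avoiding subpath $P\subseteq F$ with both endpoints on the rough boundaries, obtained by successively removing contractible cycles from $F$ until only disjoint self-avoiding components remain and picking a component spanning left to right. Such a $P$ satisfies $|P|\geq d$ by the definition of the code distance, and is itself homologically a ``pure'' logical $\overline X$ representative, so has trivial $Z$-syndrome. Consequently $\cor(y)\oplus P$ has the same syndrome as $\cor(y)$, and the minimum-weight property of $\cor$ gives
\[
|\cor(y)|\leq |\cor(y)\oplus P|=|\cor(y)|+|P|-2|\cor(y)\cap P|,
\]
so $|\cor(y)\cap P|\leq |P|/2$. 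Combined with $|y\cap P|+|\cor(y)\cap P|=|P|$, which follows from $F_e=1$ for all $e\in P$, this yields $|y\cap P|\geq |P|/2$: at least half of the edges of $P$ are hit by the error.

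Then I would apply a union bound. For any fixed self-avoiding path $P$ of length $\ell$, the hypothesis $E\sim \calN(q)$ applied to $y=\mathrm{Supp}(E)$ gives
\[
\Pr\bigl[\,|y\cap P|\geq \ell/2\,\bigr]\leq \binom{\ell}{\lceil\ell/2\rceil}q^{\lceil\ell/2\rceil}\leq (2\sqrt{q})^\ell,
\]
and the number of self-avoiding paths of length $\ell$ starting from the left rough boundary is at most $d\cdot 3^\ell$ (the square lattice admits at most $3$ non-backtracking extensions per step, and there are $O(d)$ starting vertices on that boundary). Summing the geometric series over $\ell\geq d$, using $6\sqrt{q}\leq 0.6$ when $q\leq 0.01$, gives
\[
\Pr[\text{fail}]\leq \sum_{\ell\geq d}d\cdot(6\sqrt{q})^\ell \leq \frac{d}{1-0.6}\,(6\sqrt{q})^d\leq 3d(6\sqrt{q})^d,
\]
which is the claimed bound.

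The step I expect to be the main obstacle is the topological extraction of $P\subseteq F$ with $|P|\geq d$ in a form clean enough to justify the comparison with $\cor(y)\oplus P$ on a surface-code lattice with mixed smooth/rough boundaries. Once $P$ is identified and its endpoints are certified to lie on rough boundaries (so that $P$ has trivial $B_f$-syndrome), the remaining weight comparison and counting arguments are routine; but some care is needed to ensure that contractible-loop removal from $F$ does not destroy the spanning property, which is what allows the lower bound $|P|\geq d$ to be invoked.
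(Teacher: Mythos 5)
Your proof follows essentially the same route as the paper's: write $F=\cor(y)\oplus y$, observe that $F$ has trivial $Z$-syndrome and that $\parity(F)=1$ forces $F$ to contain a boundary-to-boundary path $P$ of length at least $d$; use the minimum-weight property of $\cor$ to conclude at least half the edges of $P$ lie in $y$; then union-bound over paths using the $3^{\ell}$ self-avoiding-walk count with $d$ choices of starting edge. The decomposition into subpaths, the $(2\sqrt{q})^{\ell}$ bound, and the geometric-series summation with $q\le 0.01$ are all as in the paper, and the final constant $3d$ comes out the same way.

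One small point of caution: you phrase the critical path $P$ as connecting the \emph{rough} boundaries and ``spanning left to right.'' Since $F$ is an $X$-type chain whose syndrome lives on the plaquette (face) stabilizers $B_f$, a chain with trivial $Z$-syndrome may only terminate at the \emph{smooth} (top/bottom) boundaries --- in the paper's conventions, $F$ is a cycle relative to those boundaries, and the offending path $\delta$ runs top to bottom, not left to right (see the discussion of dangling edges in the dual lattice in the paper's proof). If $P$ were a chain between the two rough boundaries, it would not have trivial $Z$-syndrome, and the comparison $|\cor(y)\oplus P|$ versus $|\cor(y)|$ that you rely on would fail. The rest of your argument is unaffected once the boundary labels are corrected, and the counting and probability bounds go through exactly as written.
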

As  discussed below (see Eq.~\eqref{eq:measuredecoding}), the lemma implies that with high probability,
the change in the parity (and thus the estimated logical $\overline{Z}$-eigenvalue) 
caused by an error is properly accounted for by the correction function~$\cor$.
\begin{proof}
Let  $r\equiv \cor(y)$.
By definition of the function $\cor$, the string
$r\oplus y$ has trivial $Z$-syndome.
Note that a bit string has trivial $Z$-syndrome iff the corresponding subset
of edges is a cycle in the dual surface code lattice, see Fig.~\ref{fig:surface_code}.
Such cycle can be represented
(non-uniquely) as an edge-disjoint union of closed loops
and  paths terminating at the dangling edges.
The dual surface code lattice has dangling edges only at the 
top/bottom boundaries, see Fig.~\ref{fig:surface_code}.
Consider the event
\[
\mathrm{FAIL} =\{ E\, : \,\parity(r\oplus y)=1 \}\ .
\]
Direct inspection shows that any closed loop has even overlap with $\calE^{\mathrm{diag}}$.
Likewise, any path having both endpoints at the same boundary
has even overlap with $\calE^{\mathrm{diag}}$.
Since $r\oplus y$ is a cycle, $\mathrm{FAIL}$ implies that any
decomposition of $r\oplus y$ into edge-disjoint loops and paths
contains at least one path $\delta$ connecting top/bottom boundaries.
Note that such a path must contain at least $d$ edges.
Let $\Delta(k)$ be the set of all  paths of length $k$ connecting
top/bottom boundaries.
By the union bound,
\[
\mathrm{Pr}_E[\mathrm{FAIL}] \le \sum_{k=d}^\infty \;\sum_{\delta \in \Delta(k)} \mathrm{Pr}_E[\delta \subseteq r\oplus y].
\]
Suppose $\delta\in \Delta(k)$ satisfies $\delta \subseteq r\oplus y$.
We claim that at least half of the edges of $\delta$ are not contained in $r$.
Indeed, otherwise one could replace $r$ by $r\oplus \delta$
reducing the weight of $r$ without changing its $Z$-syndrome
(since any path $\delta\in\Delta(k)$ has trivial $Z$-syndrome). This would contradict  the minimality of $r$.
Thus  at least half of the edges of $\delta$ are contained in $y$,
that is, $\delta' \equiv \delta\cap y$ has size at least $k/2$. 
From $E\sim \calN(q)$ one gets $\mathrm{Pr}_E[\delta' \subseteq y]\le q^{|\delta'|}\le q^{k/2}$
for any fixed $\delta'$.
Noting that  $\delta$ has at most $2^k$ subsets $\delta'$, 
and $|\Delta(k)|\le d3^k$ (since the surface code lattice has vertex degree$\le 4$
and since there are $d$ choices for the starting edge of $\delta$),
one gets
\[
\mathrm{Pr}_E[\mathrm{FAIL}]
 \le d \sum_{k=d}^\infty (6q^{1/2})^k
\le 3d (6q^{1/2})^d
\]
provided that $q\le 0.01$. 
\end{proof}
We can now  verify  Eq.~\eqref{eq:threshold}. 
Suppose $v\in \{0,1\}^m$ is a random string such that $X(v)\sim \calN(q)$
is a local stochastic noise.
By definition of the subspace $\calL$, 
any vector $x\in \calL$ has trivial $Z$-syndrome.
Thus $v$ and  $x\oplus v$ have the same  $Z$-syndromes.
Accordingly,   $\cor(x\oplus v)=\cor(v)$ for any $v\in \{0,1\}^m$
and any $x\in \calL$. Therefore
\[
\parity(x)\oplus \dec(x\oplus v)
=\parity(x) \oplus\parity( \cor(x\oplus v)\oplus x\oplus v)
=\parity( \cor(v)\oplus v)
\]
because the parity is linear and $\parity(x)\oplus\parity(x)=0$. Thus 
\begin{align}
\parity(x)= \dec(x\oplus v)\quad \textrm{ if and only if }\quad 
 \parity( \cor(v)\oplus v)=0\ .\label{eq:measuredecoding}
 \end{align}
By Lemma~\ref{lemma:threshold}, the probability of this event is at most
$3d(6q^{1/2})^d \le 3\exp{(-0.2d)}$ for all $q\le 0.01$
and all $d\ge 7$. 
To summarize, we have  shown the following:
\begin{theorem}[{\bf Error threshold for surface codes}]\label{thm:singleshotlogicalmeasurementsurface}
The decoding and parity functions~$(\dec,\parity)$ defined by~\eqref{DecSC} for the distance-$d$ surface code satisfy the correctness condition~\eqref{eq:paritydecode} in the noise-free case. Furthermore, 
suppose $d\ge 7$  and $v\in \{0,1\}^m$ is a random string such that 
$X(v) \sim \calN(q)$ for some $q\le 0.01$. Then
\begin{equation}
\mathrm{Pr} \left[\dec(x\oplus v)=\parity(x)\right]\geq  1- 3 \exp{(-0.2d)}
\label{eq:thresholdtoric}
\end{equation}
for all $x\in \calL$.
\end{theorem}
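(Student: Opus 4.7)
The first part of the theorem is almost immediate from the definitions. If $x\in \calL$ then $x$ has trivial $Z$-syndrome, which by definition of $\cor$ means $\cor(x)=0^m$, so $\dec(x)=\parity(\cor(x)\oplus x)=\parity(x)$. This establishes Eq.~\eqref{eq:paritydecode} and serves as a sanity check before turning to the noisy case.

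For the noisy case, my plan is to first reduce the claim to a statement that only involves the error $v$ itself, and then to use a minimum-weight/Peierls-type counting argument on the surface code lattice. For the reduction, I note that for any $x\in \calL$ the bit string $x\oplus v$ has the same $Z$-syndrome as $v$ (since elements of $\calL$ have trivial $Z$-syndrome), hence $\cor(x\oplus v)=\cor(v)$. Using linearity of $\parity$, this collapses the event $\{\dec(x\oplus v)=\parity(x)\}$ to the purely error-dependent event $\{\parity(\cor(v)\oplus v)=0\}$. The problem is now to upper bound $\Pr[\parity(\cor(v)\oplus v)=1]$ where $X(v)\sim \calN(q)$.

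The core combinatorial step is to show that $\parity(\cor(v)\oplus v)=1$ forces the existence of a long ``logical'' path in the support of $\cor(v)\oplus v$. Viewing bit strings with trivial $Z$-syndrome as cycles in the dual surface lattice (where dangling edges occur only at the top/bottom boundaries), one decomposes $\cor(v)\oplus v$ into edge-disjoint closed loops and paths between dangling edges. Closed loops and paths with both endpoints on the same boundary intersect $\calE^{\mathrm{diag}}$ in an even number of edges, so any contribution to an odd parity must come from at least one path $\delta$ connecting the top and bottom boundaries. Such a path has length $|\delta|\ge d$.

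The main obstacle is to turn this structural statement into a probabilistic bound with the right exponent. The key trick is the minimum-weight property of $\cor(v)$: if more than half of the edges of $\delta$ lay in $\cor(v)$ rather than in $v$, then replacing $\cor(v)$ by $\cor(v)\oplus \delta$ would strictly reduce its weight without changing the $Z$-syndrome, contradicting minimality. Hence at least $\lceil |\delta|/2\rceil$ edges of $\delta$ lie in $v$. For a fixed set $\delta'\subseteq \delta$ of that size, the local stochastic bound gives $\Pr[\delta'\subseteq v]\le q^{|\delta'|}\le q^{k/2}$ for $k=|\delta|$. The number of dual-lattice paths of length $k$ connecting the two boundaries is at most $d\cdot 3^k$ (there are $d$ starting edges and at most $3$ continuations at each step since the dual lattice has vertex degree at most $4$), and each such path has at most $2^k$ subsets. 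A union bound then yields
\[
\Pr[\parity(\cor(v)\oplus v)=1] \le d\sum_{k\ge d}(6q^{1/2})^{k},
\]
which for $q\le 0.01$ is geometric and dominated by $3d(6q^{1/2})^d\le 3\exp(-0.2 d)$ once $d\ge 7$. Combined with the reduction above, this gives the claimed bound~\eqref{eq:thresholdtoric}. The genuinely delicate parts are the minimum-weight swap argument and the correct book-keeping of the counting constants so that the base of the exponent is strictly below one.
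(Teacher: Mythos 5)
Your proposal is correct and matches the paper's own proof essentially step for step: the same reduction via $\cor(x\oplus v)=\cor(v)$ and linearity of $\parity$, the same cycle decomposition in the dual lattice forcing a boundary-to-boundary path $\delta$ of length at least $d$, the same minimum-weight swap argument showing at least half of $\delta$ lies in $v$, and the same Peierls-style counting with $|\Delta(k)|\le d\cdot 3^k$ and $2^k$ subsets. Nothing further to add.
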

We note that the local stochastic noise model can account
for correlated errors, such as those introduced by
the logical circuits 
$\overline{H}$ and $\overline{S}$ defined in Eq.~\eqref{eq:logicalHS}.
Theorem~\ref{eq:thresholdtoric} shows that the surface code has error
threshold of at least $1\%$ for such correlated noise.

\subsection{Single-shot logical state preparation\label{sec:singleshotlogicalstateprep}}
It remains to show that the surface code enables single-shot logical state preparation,  as stated in Condition~\ref{cond:shot}.
Here  we consider the Bell state version of Condition~\ref{cond:shot}, see  Section~\ref{sec:ftquantumadvantage}.
\begin{figure}[h]
\centerline{\includegraphics[height=4cm]{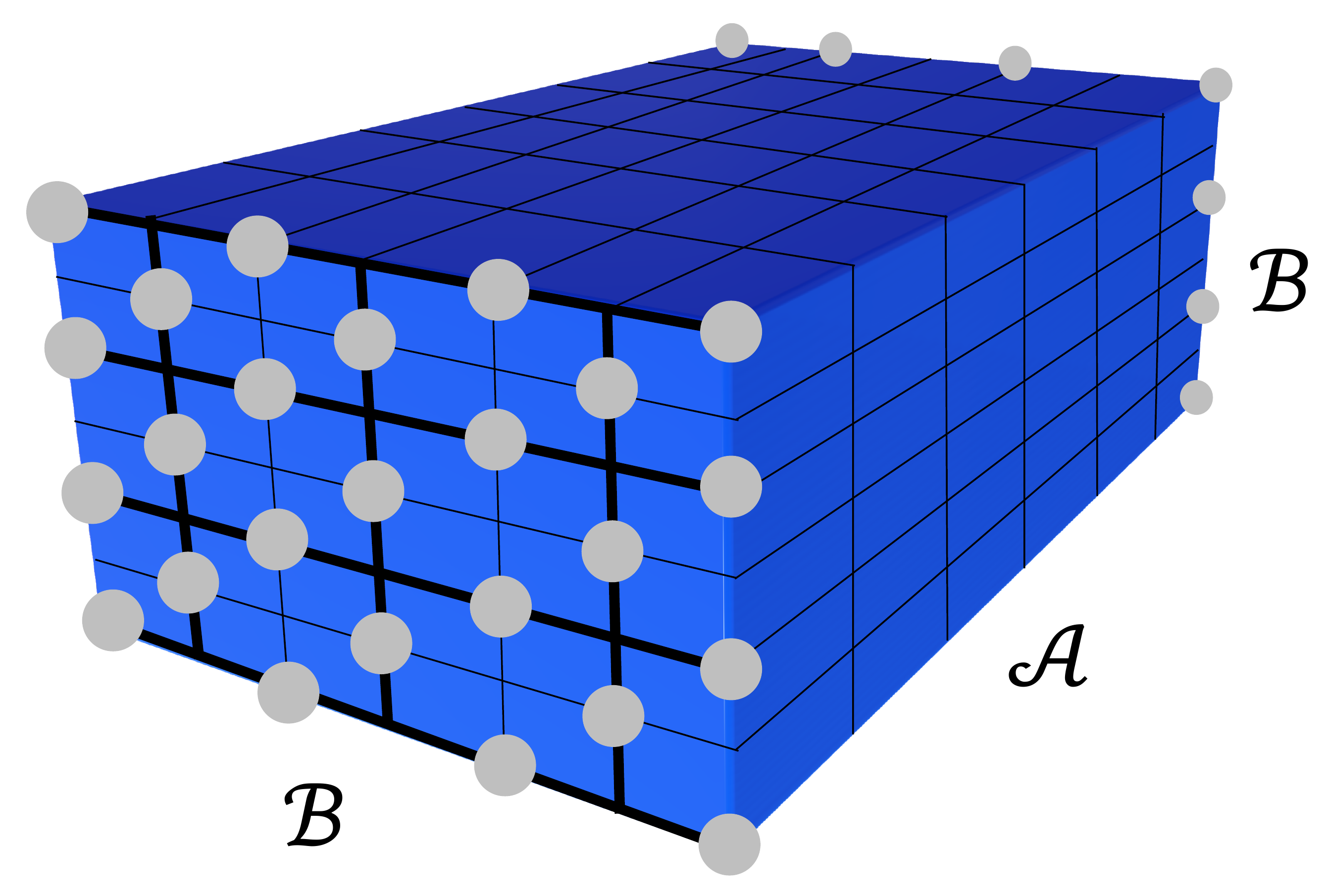}}
\caption{The cubic lattice $\calC=\calA\calB$. 
Region $\calB$ (gray circles) represents the Bell state encoded
by two surface codes located on the left and the right faces of $\calC$.
Region $\calA$ represents ancillary qubits.
\label{fig:surface_code3D}}
\end{figure}

Consider a 3D cubic lattice $\calC$
of size $r\times r\times r$, where $r=2d-1$.
We place qubits at sites of~$\calC$.
The Bell state $\Phi$ will be encoded into
a pair of distance-$d$ surface codes located on two opposite faces
of $\calC$, as shown on Fig.~\ref{fig:surface_code3D}.
Each surface code contains $m$ qubits.
Let $\calB\subset \calC$ be the subset of $2m$ qubits encoding the Bell state.
The rest of the lattice $\calA=\calC\setminus \calB$ 
represents ancillary qubits that are used
by the single-shot preparation  procedure.
Accordingly, $|\calA|=m_{anc}$. For a formal definition of $\calA,\calB,\calC$ see Section~\ref{sec:3Dcluster}.

Logical Bell state preparation  proceeds in three stages.
First, one initializes each qubit of $\calC$ in the state $|0\ra$
and applies a suitable constant-depth Clifford circuit $W$ obtaining 
a state $W|0\ra_{\calC}$.
Next one measures each ancillary qubit in the $Z$-basis.
Let $s\in \{0,1\}^{|\calA|}$ be the measurement outcome.
Finally, one applies a suitable Pauli recovery operator
$\rec{(s)}$ to the region $\calB$ obtaining a state
\[
(|s\ra\la s|_\calA \otimes \rec{(s)}_\calB) W|0\ra_\calC = \gamma_s |s\ra_\calA \otimes |\overline{\Phi}\ra_\calB.
\]
Here $\gamma_s\in \mathbb{C}$ is a normalization factor. A noisy version of this protocol may include initialization, gate, and measurement
errors. By Lemma~\ref{lem:noisemodelcliffordcircuit},
it suffices to consider a single Pauli error $E$  that occurs
immediately before the measurement.
Thus the noisy implementation
outputs the measured string $s$ and a state
$(|s\ra\la s|_\calA \otimes \rec{(s)}_\calB) EW|0\ra_\calC$,
where $E$ is a random Pauli error acting on $\calC$.
The difference between the final state of the noisy protocol
and the desired Bell state can be quantified using a repair
function $\repair{(E)}$ which
returns a Pauli operator acting on $\calB$ such that 
\be
\label{RecRep}
(|s\ra\la s|_\calA \otimes \rec{(s)}_\calB) EW|0\ra_\calC = \gamma_s |s\ra_\calA \otimes
 \repair{(E)} |\overline{\Phi}\ra_\calB
\ee
for all $s$ and $E$.  
The desired single-shot preparation property
is established in the following theorem.
\begin{theorem}[{\bf Single-shot Bell state preparation for surface codes}]
\label{thm:3D}
Let $d\ge 4$ be the desired surface code distance and $\calC$ be the cubic lattice of linear size $2d-1$
with one qubit per site. Suppose $E\sim \calN(p)$  is a local stochastic Pauli error acting on $\calC$. There exists a depth-$6$ Clifford circuit $W$ that uses
only nearest-neighbor gates on the lattice $\calC$, as well as 
recovery and repair functions 
$\rec{(s)}$ and $\repair{(E)}$ 
satisfying the logical Bell state preparation condition~\eqref{RecRep} for the distance-$d$ surface code such that 
$\repair{(E)}\sim  \calN(11 p^{1/128})$.
\end{theorem}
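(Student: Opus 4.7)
The plan is to construct $W$ as a 3D cluster state preparation protocol in the spirit of the Raussendorf--Harrington--Goyal scheme \cite{raussendorfetal05}, and to analyze noise tolerance by commuting errors through the Clifford circuit together with a Peierls-style counting argument for the associated decoder.

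First I would define $W$ in a few constant-depth stages, each using only nearest-neighbor gates on $\calC$. Initialize every qubit of $\calC$ in $|0\rangle$, apply Hadamards to the entire lattice, then apply CZ gates between all nearest-neighbor pairs. Since the edges of a cubic lattice admit a proper 3-edge-coloring (by axis direction), the CZ layer decomposes into three parallel sublayers. A final layer of Hadamards on the ancillary region $\calA$ converts the subsequent $Z$-basis measurements into effective $X$-basis measurements of the bulk cluster-state qubits; any extra basis rotations on the two boundary copies needed to realize a logical Bell encoding fit within the allotted depth. After measuring $\calA$, the outcomes $s\in\{0,1\}^{|\calA|}$ play the role of bulk single-shot syndrome bits, while the $2m$ qubits of $\calB$ carry the logical Bell state.

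Next I would verify correctness in the noise-free setting. Using the graph-state stabilizer formalism one checks that the post-measurement state on $\calB$ is the encoded Bell state $|\overline{\Phi}\rangle$ up to a tensor product of Pauli byproducts determined by $s$. Concretely, the bulk $X$-measurements project onto the stabilizer group of the RHG topological code, whose boundary theory on two opposite faces of $\calC$ consists of two surface codes sharing a logical Bell pair. The operator $\rec(s)$ is obtained by running a minimum-weight perfect matching decoder on the 3D syndrome graph defined by $s$, returning a representative correction supported on $\calB$; this yields~\eqref{RecRep} with $\repair(E)=I$ when $E=I$.

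The main obstacle is establishing the local stochastic bound $\repair(E)\sim \calN(11p^{1/128})$. I would handle this in two stages. Iterating Lemma~\ref{claim:basic}(iv) once per depth-$1$ layer of $W$, and folding in measurement and initialization errors via parts (i)--(iii), one commutes $E$ past $W$ to obtain an effective Pauli error $E'$ acting just before the measurements with $E'\sim \calN(O(p^{1/2^{7}}))$, which accounts for the exponent $1/128=1/2^{7}$. Then, for each subset $F\subseteq \calB$ of size $k$, I would show that the event $F\subseteq \mathrm{Supp}(\repair(E'))$ forces the minimum-weight matching to produce an error string in the 3D syndrome graph of length at least $\Omega(k)$ (either a closed cycle, a boundary-anchored string, or a path connecting the two opposite boundary faces for the topologically nontrivial part). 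A union bound over such strings---of which there are at most $\mathrm{poly}(d)\cdot c^{k}$ for a universal $c$ because of bounded degree---combined with the local stochastic bound on $E'$ yields $\Pr[F\subseteq \mathrm{Supp}(\repair(E'))]\le (11 p^{1/128})^{k}$ after a short computation, where the constant $11$ absorbs the combinatorial prefactor and the Clifford commutation constants. The hard part will be doing this bookkeeping simultaneously for primal and dual error sectors, and handling the fact that the local stochastic model permits arbitrary correlations across the bulk, so that the Peierls counting argument must be applied directly to the distribution of $E'$ rather than to an i.i.d.\ proxy.
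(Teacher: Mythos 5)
Your high-level architecture matches the paper's: prepare a 3D cluster state with nearest-neighbor CZ gates sandwiched between Hadamard layers, measure the bulk ancillas, and recover the encoded Bell state up to a Pauli byproduct that a matching-based decoder estimates. The noise analysis is also in the right spirit (Peierls-type counting over error strings). However, there is a genuine gap in where you locate the exponent $1/128$, and a missing technical device that the counting argument needs to go through.

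\textbf{The exponent.} You claim the factor $p^{1/128}$ arises from commuting the error $E$ through the depth-$6$ circuit $W$ via Lemma~\ref{claim:basic}(iv). This is not where it comes from. In the statement of Theorem~\ref{thm:3D}, the error $E$ is \emph{already} positioned after $W$ and before the measurement, see~\eqref{RecRep}: the task is only to relate the residual error $\repair(E)$ to $E$, not to commute $E$ through $W$. (The commutation of initialization/gate/measurement noise into a single pre-measurement $E$ is handled once and for all in Lemma~\ref{lem:noisemodelcliffordcircuit}, outside this theorem's proof.) The true provenance of $1/128$ is decoder-side: the lifting-property analysis of the minimum-weight matching gives a base exponent $c_2 c_3 = (1/2)(1/4) = 1/8$ for the local correction $\repair_Z$ (and $\repair_X$); the logical components $\repair_{\overline{Z}},\repair_{\overline{X}}$ cost another square root; combining the four factors and then the $\calA$/$\calB$ split, each via Lemma~\ref{claim:basic}(iii), halves the exponent a few more times, landing at $p^{1/128}$. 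Even as a side remark, your arithmetic for circuit-commutation would not give $1/2^7$ for depth six, since Lemma~\ref{claim:basic}(iv) squares-roots the rate per layer.

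\textbf{The missing device.} Your sketch asserts that $F\subseteq\mathrm{Supp}(\repair(E'))$ ``forces the minimum-weight matching to produce an error string of length $\Omega(|F|)$.'' This is precisely the step that is delicate to justify, and the paper does not get it directly by Peierls counting in the 3D syndrome graph alone. The key idea is the \emph{glued graph} $\gluedgraph$ obtained by attaching the bulk graph $T_e$ to the boundary surface-code graph $T_{sc}$ along the dangling edges (and similarly $T_o$ to $T_{sc}^*$). Lemma~\ref{prop:graph5} shows that a minimum matching in $T_{sc}$ remains minimal in $\gluedgraph$, and Lemmas~\ref{prop:graph6} and~\ref{lemma:tricky1} construct, for every subset $K\subseteq\mathrm{Supp}(\repair_Z(E))$, a \emph{forest} $L\in\lift(K)$ in $\gluedgraph$ whose bulk part hits at least half of $\mathrm{Supp}(E)$ and satisfies $|L|\le 4|L\cap\mathrm{Supp}(E)|$. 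The subsequent union bound is taken over forests in $\gluedgraph$ (Lemma~\ref{prop:graph4}), not over paths in the 3D syndrome graph alone. Without a gluing of this kind you cannot transfer a bound on the boundary error $K\subseteq\calB$ back to a large overlap with the bulk noise $E$ on $\calA$; that transfer is the heart of the single-shot property. You also need to treat the residual logical errors $\repair_{\overline{X}},\repair_{\overline{Z}}$ separately, by bounding the probability of a homologically nontrivial path crossing the cube, and then merge all four pieces; you flag this as ``the hard part'' but do not supply the decomposition $\repair=\repair_X\cdot\repair_{\overline{X}}\cdot\repair_Z\cdot\repair_{\overline{Z}}$ (Eq.~\eqref{repairXZ}) that makes it tractable.

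\textbf{Minor point.} The CZ layer cannot be scheduled in three sublayers by axis direction: a site in $\calC$ can have two neighbors along the same axis, so a $4$-coloring is what is used (the paper states depth $4$ for the CZ product, giving total depth $6$ with the two Hadamard layers).
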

We shall divide the proof of the theorem
into two parts. The first part, presented  in the rest of this section, gives a general 
recipe  for choosing the repair and recovery functions.
We show that the logical  Bell state can be prepared in a single-shot fashion
whenever the repair function  obeys a certain  ``lifting" property.
The surface code structure and the lattice geometry play no role in this part of the proof. 
Thus we anticipate that the same recipe may be applied to other stabilizer codes
(as well as other types of logical states).
The second part of the proof, presented in Section~\ref{sec:3Dcluster},
deals specifically with the surface code.
This part relies crucially on ideas introduced in Ref.~\cite{raussendorfetal05}.

Given a state $\psi$ and a Pauli operator $P$, we say that $P$ is a stabilizer
of $\psi$ if $P|\psi\ra= |\psi\ra$. Stabilizers of any state generate an abelian
subgroup of the Pauli group.
Let $\calS$ be the stabilizer group of the state $W|0\ra_\calC$.
It  has generators 
$WZ_u W^\dag$ with $u\in \calC$.
 We shall identify two subgroups
\be
\calS_0 \subseteq \calS_1 \subseteq \calS
\ee
with the following properties.
\begin{enumerate}[(i)]
\item Any element of $\calS_0$ has the form $Z(\alpha)_\calA \otimes I_\calB$\label{it:firstpropprep}
for some subset $\alpha \subseteq \calA$.
\item Any element of $\calS_1$ has the form $Z(\alpha)_\calA \otimes S_\calB$, 
where $S$ is a stabilizer of  $\overline{\Phi}$ and $\alpha\subseteq \calA$.
\item If $S$ is a stabilizer of $\overline{\Phi}$ then 
$\calS_1$ contains an element $Z(\alpha)_\calA \otimes S_\calB$ for some $\alpha\subseteq \calA$.\label{it:lastpropprep}
\end{enumerate}
In Section~\ref{sec:3Dcluster}, we will discuss an explicit instantiation of these stabilizer groups. 

\subsubsection{Construction of recovery and repair functions}
\label{subsub:construction}
In the following, we analyze the  effect of errors, starting from the special case when an error~$E$ acts non-trivially only on~$\calA$. In particular, our aim here is to describe the construction of the functions~$\rec{(s)}$ and~$\repair{(E)}$. The stabilizer groups $\mathcal{S}_0$ and $\mathcal{S}_1$ play distinct roles, as follows. 
 
Elements of $\calS_0$ will be used to diagnose errors. 
Namely, suppose $S_0^1,\ldots,S_0^k$ are generators
of $\calS_0$. 
Consider the noisy state $EW|0\ra_\calC$.
Define an $\calS_0$-syndrome of $E$ 
as a bit string $\syn_0(E)\in \{0,1\}^k$ such that 
the $i$-th bit of $\syn_0(E)$ is one if
$E$ anti-commutes with $S_0^i$ and zero
otherwise. Importantly, the $\calS_0$-syndrome can be inferred
from the measurement outcome $s$. Indeed,
$S_0^i$ are $Z$-type Paulis acting on $\calA$ 
that stabilize the ideal state $W|0\ra_\calC$.
Since all qubits of $\calA$ are measured in the $Z$-basis, the $i$-th bit
of $\syn_0(E)$ is 
\[
\syn_0(E)_i=\parity(s, \mathrm{Supp}(S_0^i)).
\]
Here and below we use the notation
\[
\parity(s,L)\equiv \sum_{u\in L} s_u \pmod{2}.
\]

Elements of $\calS_1$ can be identified with stabilizers of the desired
final state $\overline{\Phi}$ on $\mathcal{B}$. More precisely,
suppose $B^1,\ldots,B^{2m}$ is a complete set of stabilizers
of $\overline{\Phi}$  (recall that $|\calB|=2m$).
By condition (iii), one can choose Pauli operators
$S_1^1,\ldots,S_1^{2m} \in \calS_1$  such that the restriction of $S_1^i$
onto $\calB$ coincides with $B^i$. Furthermore, $S_1^i$  acts on $\calA$
only by Pauli $Z$.
Since $S_1^i$ stabilizes the state $W|0\ra_\calC$
and acts on $\calA$ only by Pauli $Z$,
we conclude that $S_1^i$ also stabilizes the state 
$(|s\ra\la s|_\calA \otimes I_\calB)W|0\ra_\calC$.
It follows that
\be
\label{outline_eq0}
(I_\calA \otimes B^i_\calB)(|s\ra\la s|_\calA \otimes I_\calB)W|0\ra_\calC = (-1)^{\sigma_i}
(|s\ra\la s|_\calA \otimes I_\calB)W|0\ra_\calC
\ee
where $\sigma\in \{0,1\}^{2m}$ is defined by 
\[
\sigma_i =\parity(s,  \mathrm{Supp}(S_1^i)\cap \calA).
\]
Given a Pauli error $F$ acting on $\calC$, define 
an $\calS_1$-syndrome of $F$ 
as a bit string $\syn_1(F)\in \{0,1\}^{2m}$ such that 
the $i$-th bit of $\syn_1(F)$ is one if
$F$ anti-commutes with $S_1^i$ and zero
otherwise. From Eq.~\eqref{outline_eq0} one gets 
\[
(I_\calA \otimes B^i_\calB)(|s\ra\la s|_\calA \otimes I_\calB)EW|0\ra_\calC = (-1)^{\sigma_i +  \syn_1(E)_i} (|s\ra\la s|_\calA \otimes I_\calB)EW|0\ra_\calC.
\]
The above equation shows that the reduced state of $\calB$
after the measurement is an eigenvector of $B^i$ with an eigenvalue
$(-1)^{\sigma_i +  \syn_1(E)_i}$.
Thus, to ensure that the final state of $\calB$ is proportional to~$\overline{\Phi}$ it would suffice to choose the recovery $\rec{(s)}$
as a Pauli operator acting on $\calB$ with $\calS_1$-syndrome $\sigma\oplus \syn_1(E)$.

Unfortunately, the syndrome $\syn_1(E)$ 
cannot be inferred from the measurement outcome $s$. Instead,
we shall compute an approximate version of $\syn_1(E)$ 
by replacing $E$ with a suitable Pauli operator $M=M(s)$ which depends only on 
$s$ and acts as a proxy for the actual error~$E$.
  The definition of the recovery~$\rec{(s)}$
 based on the proxy $M$ is summarized in 
Fig.~\ref{fig:state_prep}. Namely,  for each $s$ 
choose  $\rec{(s)}$ as some fixed Pauli operator acting on $\calB$ such that
\be
\label{Corr1}
\syn_1(\rec{(s)}) =  \sigma \oplus \syn_1(M).
\ee
We choose the proxy~$M$ such that it  is
consistent with the $\calS_0$-syndrome caused by the actual error $E$
and has the smallest possible weight subject to this condition.
Specifically, choose $M$ as a minimum weight Pauli operator  acting on $\calC$
such that $\syn_0(M)=\syn_0(E)$.  Note that that $M$ acts non-trivially only on $\calA$.
Indeed, since any element of $\calS_0$ acts trivially on~$\calB$, the
$\calS_0$-syndrome of $M$ depends only on the restriction of~$M$ onto $\calA$.
Thus the weight of $M$ is minimal only if $M$ acts trivially on $\calB$.
Furthermore, 
since all elements of $\calS_0$ are $Z$-type Paulis, we can assume wlog
that $M$ is an $X$-type Pauli. Note that $M$ can be viewed either as a function of $s$ 
(since $\syn_0(E)$ can be inferred from $s$) or as a function of $E$.
This completes the construction of the proxy~$M$ for the actual error~$E$, and with~\eqref{Corr1}, the definition of the 
recovery~$\rec{(s)}$.

\begin{figure}[h]
\begin{tabular}{c|c} 
& syndrome of the stabilizers $B^1,\ldots,B^{2m}$  on the final state \\
\hline
ideal preparation & $\sigma(s)$ \\
ideal preparation + recovery & $0$ \\
\hline
noisy preparation  & $\sigma(s) \oplus  \syn_1(E)$  \hspace{1cm} (actual syndrome)\\
  &  $\sigma(s) \oplus \syn_1(M(s))$ \hspace{1cm} (guessed syndrome) \\
noisy preparation + recovery & $\syn_1(M(s))\oplus \syn_1(E) = \syn_1(\repair(E))$ \\
noisy preparation + recovery + repair & $0$ \\
\end{tabular}
\caption{Here we consider a complete set of stabilizers $B^1,\ldots,B^{2m}$ for the
desired logical state $\overline{\Phi}$ and their syndrome on the
output of the state preparation circuit before and after applying the recovery/repair
operators. The state preparation succeeds whenever the final syndrome is zero.
In the ideal case ($E=I$), the syndrome of $B^i$ can be inferred from the measument
outcome $s$.  
In the noisy case, the syndrome of $B^i$ can be guessed by
replacing the unknown error $E$ with a proxy $M=M(s)$. 
We choose the recovery operator $\rec(s)$ based on the guessed syndromes of $B^i$
such that $\syn_1(\rec{(s)}) =  \sigma(s) \oplus \syn_1(M(s))$.
The repair operator $\repair(E)$ compensates for the difference between
the guessed and the true syndromes of $B^i$
producing the desired logical state $\overline{\Phi}$.
\label{fig:state_prep}
}
\end{figure}

Finally, the repair operator compensates for the difference between $\syn_1(E)$ and $\syn_1(M)$. We choose $\repair{(E)}$ as a minimum weight Pauli operator acting on $\calB$
such that 
\be
\syn_1(\repair{(E)}) = \syn_1(E) \oplus \syn_1(M).\label{eq:repairconditionsyne}
\ee
(More precisely, we shall impose the  minimum weight condition independently
on the $X$- and $Z$-parts of  $\repair{(E)}$, see Section~\ref{sec:3Dcluster}.) Recall that $M$ can be viewed as a function of $E$, so that
$\repair{(E)}$ is well defined.
The above arguments show that 
\[
(|s\ra\la s|_\calA \otimes \repair{(E)}_\calB\cdot \rec{(s)}_\calB) EW|0\ra_\calC
\sim |s\ra_\calA \otimes |\overline{\Phi}\ra_\calB
\]
for all $s,E$. Multiplying both sides by $I_\calA\otimes \repair{(E)}_\calB$ gives Eq.~\eqref{RecRep}.

This completes the construction of the recovery and repair functions in the case where the error~$E$ acts trivially on~$\mathcal{B}$. In fact,  in the following, we will only consider this case, and show that for local stochastic
noise $E=E_{\mathcal{A}}\otimes I_{\mathcal{B}}\sim \mathcal{N}(p)$ we have
\begin{align}
\repair{(E_{\mathcal{A}}\otimes I_{\mathcal{B}})}\sim \mathcal{N}{(q)}\qquad\textrm{ for  }\qquad q=p^{\Omega(1)}\label{eq:intermedea}
\end{align}
In the general case one can write 
$E=E_\calA \otimes E_\calB$.
The error $E_\calB$ commutes with the measurements of $\calA$.
Thus neither the proxy $M$ nor the recovery function $\rec{(s})$ depend on $E_\calB$.
We define the repair function for a general error $E$ as
\[
\repair{(E)} = (I_\calA \otimes E_\calB) \repair{(E_\calA \otimes I_\calB)}.
\]
The same arguments as above show that such repair function obeys Eq.~\eqref{RecRep}.
By definition of local stochastic noise, 
$E\sim \calN(p)$ implies $E_\calA \sim \calN(p)$ and $E_\calB \sim \calN(p)$, see Lemma~\ref{claim:basic}. 
With~\eqref{eq:intermedea} and using part~\eqref{it:claimthree} of Claim~\ref{claim:basic}
we conclude that
\[
\repair{(E)}  \sim \calN(\tilde{q}), \qquad \tilde{q}=2\max{(p^{1/2},q^{1/2})}.
\]
In Section~\ref{sec:3Dcluster}, we show how to instantiate this construction
using a fault-tolerant scheme for 
preparing long-range entanglement 
in noisy 3D cluster states~\cite{raussendorf2005long}.
This scheme gives $q=26 p^{1/64}$ resulting
in $\tilde{q}\le 11p^{1/128}$. This is the bound
quoted in Theorem~\ref{thm:3D}.

\subsubsection{Single-shot logical state preparation from the lifting property}
We have described a general construction of repair and recovery functions starting from stabilizer groups $\mathcal{S}_0\subseteq\mathcal{S}_1\subseteq\mathcal{S}$ satisfying the properties~\eqref{it:firstpropprep}--\eqref{it:lastpropprep}. It remains to prove that $\repair{(E)}$ is a local stochastic error with the
noise rate $p^{\Omega(1)}$, that is,
\be
\label{outline_eq1}
\mathrm{Pr}_{E}[ K\subseteq \mathrm{Supp}(\repair{(E)})]\le p^{\Omega(|K|)}\qquad\textrm{ for any subset }K\subseteq \calB\ .
\ee
As discussed above, we can assume without loss of generality that
$E=E_{\mathcal{A}}\otimes I_\mathcal{B}\sim \mathcal{N}(p)$ is a local stochastic error acting on the ancilla qubits~$\mathcal{A}$ only. Accordingly, we can assume
that the repair function $\repair{(E)})$ takes as input a Pauli error $E$ acting on $\calA$
and outputs a Pauli error acting on $\calB$.

Next, we identify a certain property of the repair function which is sufficient to imply~\eqref{outline_eq1}. For convenience, and as this proof strategy may be applicable  to other codes, we introduce the following definition.
\begin{definition}\label{def:lifting}
A repair function $\repair(E)$ has the {\em lifting property}
if there exists a function $\lift(K)$ that takes as input
a subset $K\subseteq \mathcal{B}$
and outputs a set of subsets of $\mathcal{C}=\mathcal{A}\mathcal{B}$ 
such that 
\begin{enumerate}[(i)]
\item\label{prop:decay}
For all $\lambda>0$ and $K\subseteq \mathcal{B}$, we have 
\begin{align}
\label{outline_eq2}
\sum_{L\in \lift(K)} \lambda^{|L|} \le (c_1\lambda^{c_2})^{|K|}\ .
\end{align}
\item\label{prop:tricky}
For any Pauli 
$E$ acting on $\calA$ 
and any subset 
$K\subseteq \mathrm{Supp}(\repair{(E)})$
there exists a set $E'\in \lift(K)$  such that
\be
\label{outline_eq4}
|E'\cap \mathrm{Supp}(E)| \ge c_3|E'|
\ee
\end{enumerate}
Here $c_1,c_2,c_3>0$ are some universal constants.
\end{definition}
Here we allow the possibility $\lift(K)=\emptyset$ for some $K$'s. Let us agree 
that the sum in Eq.~\eqref{outline_eq2} is zero whenever $\lift(K)=\emptyset$. In Section~\ref{sec:3Dcluster} we provide an explicit construction of the function $\lift$ 
for the surface code case  
and compute the constants~$c_1,c_2,c_3$ (see Lemma~\ref{prop:Lift},\ref{lemma:tricky1}).
Here property~\eqref{prop:tricky} is particularly non-trivial to establish as it involves the repair function.

Given the lifting property, statement~\eqref{outline_eq1} can be shown as follows.
Consider some fixed subset $K\subseteq \calB$.
For each error $E$ with  $K\subseteq \mathrm{Supp}(\repair{(E)})$ fix a set $E'\in \lift(K)$ satisfying Eq.~\eqref{outline_eq4}, that is, $E'=E'(E)\subseteq \calC$
is a function of the error $E$.  The union bound gives  
\be
\label{outline_eq5}
\mathrm{Pr}_E[K \subseteq \mathrm{Supp}(\repair{(E)})] \le \sum_{L\in \lift(K)}\; \mathrm{Pr}_E[E'=L].
\ee
Let $F\equiv E'\cap \mathrm{Supp}(E)$.
From Eq.~\eqref{outline_eq4} one infers that $|F|\ge c_3 |E'|= c_3|L|$.
By the union bound,
\be
\label{outline_eq6}
\mathrm{Pr}_E[E'=L] \le \sum_{F\subseteq L\, : \, |F|\ge c_3|L|}\; \mathrm{Pr}_E[F\subseteq \mathrm{Supp}(E)]
\le  2^{|L|} p^{|F|} \le (2p^{c_3})^{|L|}.
\ee
Here we used the assumption $E\sim \calN(p)$
and noted that $L$ has at most $2^{|L|}$ subsets $F$.
Substituting Eq.~\eqref{outline_eq6} into Eq.~\eqref{outline_eq5}
and using property Eq.~\eqref{outline_eq2} of the $\lift$ function we arrive at
\be
\label{repair_error_rate}
\mathrm{Pr}_E[K \subseteq \mathrm{Supp}(\repair{(E)})] \le \sum_{L\in \lift(K)}
(2p^{c_3})^{|L|}
\le q^{|K|}, \qquad q\equiv c_12^{c_2} p^{c_2 c_3}.
\ee
This confirms Eq.~(\ref{outline_eq1}).

To summarize, we have shown that any repair function with the lifting property converts a local stochastic error~$E\sim \mathcal{N}(p)$ to a local stochastic error~$\repair{(E)}\sim \mathcal{N}(q)$ where $q=p^{\Omega(1)}$. In the case of the surface code considered in Section~\ref{sec:repairfunctiondef}, the repair function of interest 
will be  a product of four functions~$\repair_X,\repair_Z,\repair_{\bar{X}},\repair_{\bar{Z}}$, 
see Eq.~\eqref{repairXZ}. We will show that $\repair_X,\repair_Z$ satisfy the lifting property
and thus preserve the local stochasticity property,
that is, $\repair_X$ and $\repair_Z$ have noise rate $q=p^{\Omega(1)}$.
Furthermore,  $\repair_{\bar{X}},\repair_{\bar{Z}}$ will be random 
logical Pauli errors that apply the logical operators $\overline{X}$, $\overline{Z}$ 
to one of the surface codes with probability $p^{\Omega(d)}$.
We shall see that such logical errors automatically obey the local stochasticity condition
with the noise rate $p^{\Omega(1)}$.
Thus the full repair function $\repair{(E)}$ is a product of four 
local stochastic  errors with the noise rate $p^{\Omega(1)}$.
By Lemma~\ref{claim:basic}, we conclude that $\repair{(E)}$
is a local stochastic error with the noise rate $p^{\Omega(1)}$.

\section{Single-shot Bell state preparation from a 3D lattice of qubits}
\label{sec:3Dcluster}
Here we provide all missing steps in the proof 
of Theorem~\ref{thm:3D} outlined in Section~\ref{sec:codes}. 
In Section~\ref{sec:latticeconstruction}, we specify the geometric arrangement of qubits and define the relevant stabilizer groups.  Together with the recovery function already introduced in Section~\ref{sec:singleshotlogicalstateprep}, this determines a scheme for preparing encoded Bell states. We note that this scheme is essentially that introduced in~\cite{raussendorf2005long}. In the latter paper, the authors show that performing single-qubit measurement in the bulk of a 3D~cluster state leads to an encoded Bell state of two surface codes on two boundaries, up to an error determined by the measurement outcomes. Remarkably, this was shown to be robust with respect to errors on the bulk qubits, demonstrating that a 3D~cluster state has noise-resilient long-range localizable entanglement.

The remainder of this section is devoted to establishing the fault-tolerance property of this scheme.  This analysis goes beyond~\cite{raussendorf2005long} by not requiring noise-free operations on the boundaries. In Section~\ref{sec:repairfunctiondef}, we define the repair function used in our analysis to express the residual error. In Section~\ref{sec:liftingproperty}  we show that this repair function 
satisfies the lifting property. Finally, in Section~\ref{sec:parameterscomplete}, we combine this with the arguments from Section~\ref{sec:singleshotlogicalstateprep} to complete the proof of Theorem~\ref{thm:3D}.

\subsection{3D lattice code construction\label{sec:latticeconstruction}}
We begin by defining the 3D lattice $\calC$, a constant-depth Clifford circuit $W$
acting on $\calC$,
and stabilizer groups~$\calS_0$,~$\calS_1$ and $\calS$.
Let $d$ be the surface code distance, $r=2d-1$, and 
\[
\calC'=\{  (u_1,u_2,u_3) \in \mathbb{Z}^3 \, : \, 1\le u_1,u_3\le r, \quad 0\le u_2\le r-1\}.
\]
We shall refer to triples of integers $u=(u_1,u_2,u_3)\in \calC'$ as {\em sites}.
Let $e$ and $o$ denote  arbitrary even and odd integers.
Qubits are placed at sites of the sublattice
\be
\label{latticeC}
\calC = \calC'\setminus \{ (o,o,o),(e,e,e)\}.
\ee
In other words, $\calC$ contains all sites of $\calC'$ that have at least
one odd and at least one even coordinate.
The region $\calB$ encoding the Bell state  is defined as
\[
\calB = \{ (e,o,1),(o,e,1),(e,o,r),(o,e,r)\in \calC\}.
\]
In other words, $\calB$ contains all qubits $u$ located on the faces
of $\calC$ with $u_3\in \{1,r\}$ such that $u_1$, $u_2$ have different
parity. Ancillary qubits live at sites of the region $\calA\equiv \calC\setminus \calB$.
Let $n=|\calC|$ be the total number of qubits.
Given a site $u\in \calC'$ define a set of nearest neighbors of $u$ as
\[
\neigh(u)=\{ v\in \calC \, : \, |u_1-v_1|+|u_2-v_2|+|u_3-v_3|=1\}.
\]
Note that each site $u\in \calC$ has at most four nearest neighbors.
For example, a site $(2,2,1)$ has nearest neighbors
$(1,2,1)$, $(3,2,1)$, $(2,1,1)$, and $(2,3,1)$, see Eq.~\eqref{latticeC}.
Define
\be
\label{HCZH}
W=H^{\otimes n}  \prod_{(u,v)\in \calC}\; CZ_{u,v} \; H^{\otimes n},
\ee
where the product runs over all pairs of nearest neighbor sites in $\calC$.
One can easily verify that the product of CZ gates  in Eq.~\eqref{HCZH} can be implemented by a
depth-four circuit. Thus the full circuit $W$ has depth six, as promised in  Theorem~\ref{thm:3D}.
The state $W|0^n\ra$ is a stabilizer state with stabilizer generators
\be
\label{stabilizerSu}
G_u = Z_u \prod_{v\in \neigh(u)}\; X_v, \qquad u\in \calC.
\ee
 Let $\calS$ be the group generated by $\{ G_u\}_{u\in\calC}$.

 Let us briefly comment on how this relates to the construction of~\cite{raussendorf2005long}. The starting point of~\cite{raussendorf2005long} is a cluster (or graph) state  $\ket{\Psi_{G}}=\prod_{(u,v)\in E}\; CZ_{u,v} \; H^{\otimes n}\ket{0^n}$ associated with a particular 3D graph~$G=(V,E)$. A certain measurement pattern consisting of single-qubit $X$- and $Z$-measurements is then applied to the state~$\ket{\Psi_{G}}$, resulting in the desired target state on a subset of qubits on the boundaries. We note that measuring a qubit in the $Z$-basis in a graph state amounts to
 removing the corresponding vertex from the graph: in other words, such qubits may be removed from the beginning. This is what we do in our description here. Furthermore, we introduce another layer of Hadamards on each qubit for convenience (see~\eqref{HCZH}), meaning that the remaining qubits will be measured in the $Z$-basis rather than the $X$-basis as opposed to the description in~\cite{raussendorf2005long}. Below we mostly follow notations introduced in~\cite{raussendorf2005long}.

To  set the stage for what follows
it is convenient to describe $\calC'$ as a union of four graphs denoted
$T_e$ (even graph), $T_o$ (odd graph),  $T_{sc}$ (surface code graph),
and $T_{sc}^*$ (dual surface code graph). These graphs  have sets of vertices 
\begin{align}
\calV(T_e) =& \{ u=(e,e,e) \in \calC'\},\\
\calV(T_o) =& \{ u=(o,o,o) \in \calC' \, : \, u_3 \ne 1,r\},\\
\calV(T_{sc}) =&  \{ u=(e,e,o) \in \calC' \, : \, u_3 =1,r\},\\
\calV(T_{sc}^*) =&  \{ u=(o,o,o) \in \calC' \, : \, u_3 =1,r\}
\end{align}
and sets of edges
\begin{align}
\calE(T_e) =& \{u=(e,e,o), \; (e,o,e), \; (o,e,e) \in \calC'\},\\
\calE(T_o) = & \{u=(o,o,e), \; (o,e,o), \; (e,o,o) \in \calC' \, : \, u_3 \ne 1,r\},\\
\calE(T_{sc}) = \calE(T_{sc}^*) =& \{u=(o,e,o), \; (e,o,o) \in \calC' \, : \, u_3 =1,r\}.
\end{align}
Here a pair of vertices $u,v$ is connected by an edge
if one can obtain $v$ from $u$ by changing a single coordinate by $\pm 2$.
We show examples of the graph $T_e$ and $T_{sc}$ on Fig.~\ref{fig:Te}.
Note that each graph has dangling edges (i.e. edges with only one endpoint)
located at certain external faces of $\calC'$.
By definition,  
\[
\calA=\calE(T_e) \cup \calE(T_o) \quad \mbox{and} \quad
 \calB = \calE(T_{sc})= \calE(T_{sc}^*).
\]
Vertices of the graphs $T_e,T_o$ and $T_{sc},T_{sc}^*$ will be associated
with generators of the subgroups $\calS_0\subseteq \calS$
and $\calS_1\subseteq \calS$ respectively,  see Section~\ref{sec:codes}.

Let us first define the subgroup $\calS_0$.
Recall that elements of $\calS_0$ must act trivially on $\calB$ and may
act on $\calA$ only by Pauli $Z$.  The group $\calS_0$ has generators

\be
\label{generatorS0u}
S_0^u=
\prod_{v\in \neigh(u)} G_v
=\prod_{v\in \neigh(u)} Z_v,
  \qquad u \in \calV(T_e) \cup \calV(T_o).
\ee
Here we noted that all $X$-type Pauli in the product cancel each other.
Also note that $\neigh(u)\cap \calB=\emptyset$ for any site
$u \in \calV(T_e) \cup \calV(T_o)$. Thus $S_0^u$ acts trivially on $\calB$,
as desired.

\begin{figure}[h]
\centerline{\includegraphics[height=6cm]{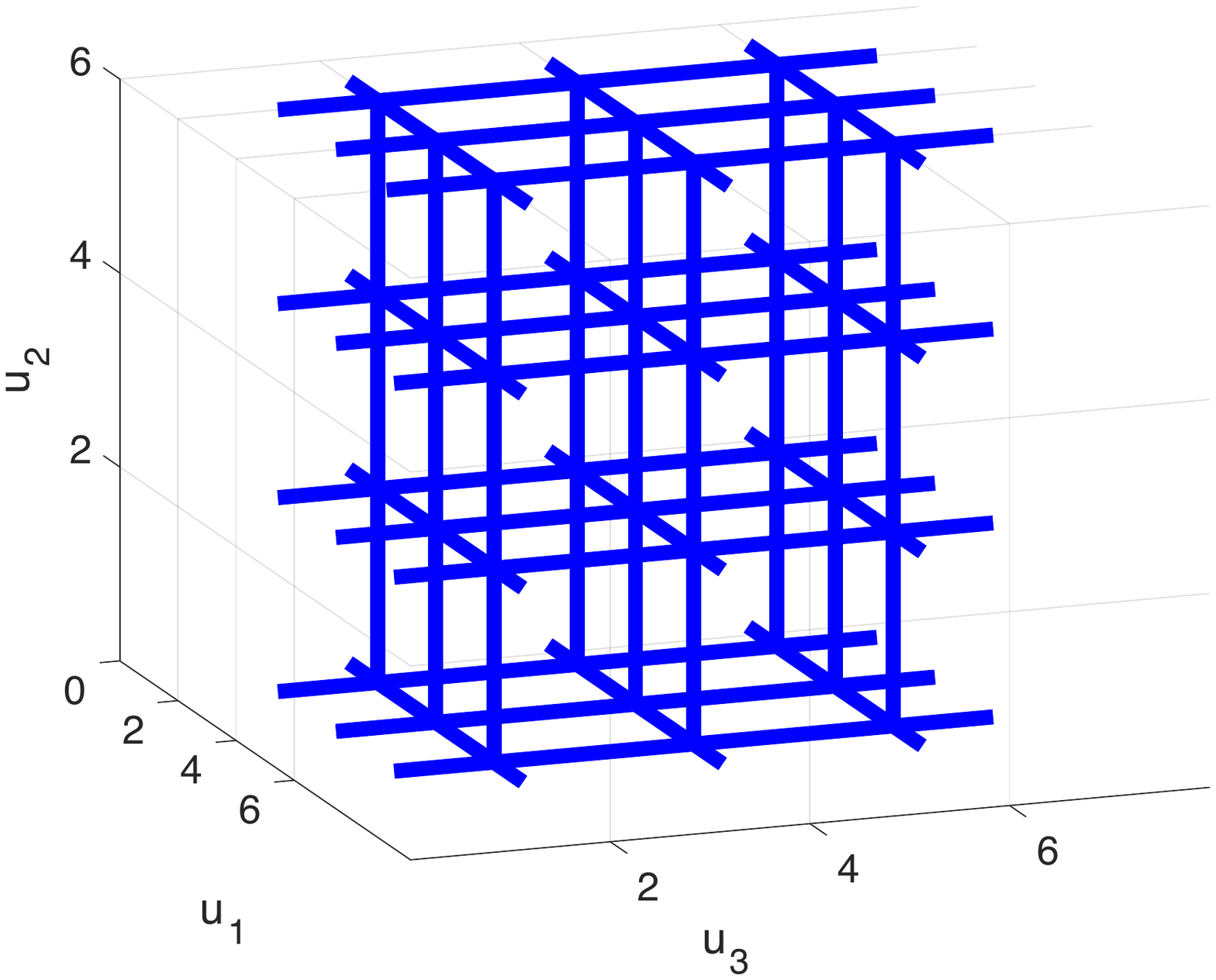}
\hspace*{5mm}\includegraphics[height=6cm]{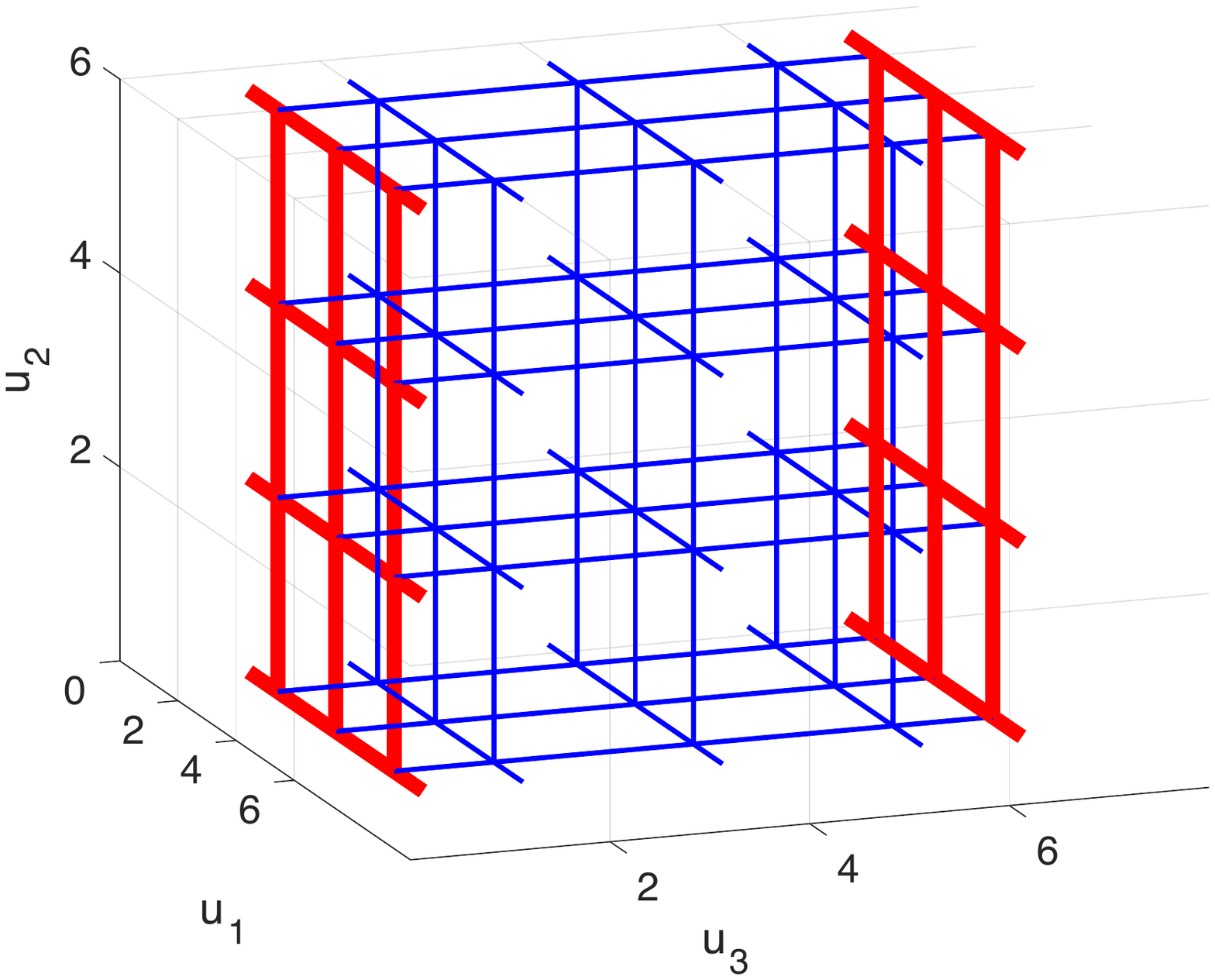}}
\caption{Examples of the graphs $T_e$ (blue) and $T_{sc}$ (red) for
the distance-$4$ surface code ($r=7$).
The glued graph $\gluedgraph$ is constructed
from  $T_e\cup T_{sc}$
 by attaching left and right
dangling edges of $T_e$ to the corresponding vertices of $T_{sc}$.
\label{fig:Te}
}
\end{figure}

We proceed to defining the subgroup $\calS_1$.
Recall that elements of $\calS_1$ may act on $\calA$ only by Pauli~$Z$.
In addition, the restriction of any element of $\calS_1$ onto $\calB$
must be a stabilizer of $\overline{\Phi}$.
Generators of $\calS_1$ come in several types. First, 
choose any site $u \in \calV(T_{sc})$ and
define
\be
\label{generatorS1u'}
S_1^u=G_u =
Z_u\prod_{v\in \neigh(u)} X_v, 
\qquad u \in \calV(T_{sc}).
\ee
We claim that $\neigh(u)\subseteq \calB$ for any site $u \in \calV(T_{sc})$.
Indeed, such site has the form $u=(e,e,1)$ or $(e,e,r)$.
Changing the third coordinate of $u$ by $\pm 1$ gives
a site $(e,e,e)$ which is not contained in $\calC$, see Eq.~\eqref{latticeC}.
Changing the first or the second coordinate of $u$ by $\pm 1$
gives a site in $\calB$. Thus $\neigh(u)\subseteq \calB$, that is, 
all Pauli $X$ in the generator $S_1^u$ act on $\calB$.
The restriction of $S_1^u$ onto $\calB$ becomes
a vertex stabilizer for one of the two surface codes,
see Section~\ref{sec:codes}.

Next, choose any site $u\in \calV(T_{sc}^*)$ and define
\be
\label{generatorS1u''}
S_1^u =G_{u\pm (0,0,1)}\prod_{v\in \neigh(u)\cap \calB} G_v
=Z_{u\pm (0,0,1)} \prod_{v\in \neigh(u)\cap \calB} Z_v,
\qquad u\in \calV(T_{sc}^*).
\ee
By definition, any site $u \in \calV(T_{sc}^*)$ has the form $u=(o,o,u_3)$
with $u_3\in \{1,r\}$. The sign in Eq.~\eqref{generatorS1u''}
is plus for $u_3=1$ and minus for $u_3=r$.
Such site $u$ has exactly one nearest neighbor not in $\calB$, namely
$u\pm (0,0,1)$. One can easily check that all Pauli $X$
in Eq.~\eqref{generatorS1u''} cancel each other.
The restriction of $S_1^u$ onto $\calB$ becomes
a face stabilizer of the surface code, see Section~\ref{sec:codes}.

Finally, the group $\calS_1$ has two special generators
$S_1^X$ and $S_1^Z$ 
corresponding to the logical Bell state stabilizers
$\overline{X}_1 \, \overline{X}_2$ and
$\overline{Z}_1 \, \overline{Z}_2$. 
The latter can be chosen as 
\[
\overline{X}_1 \, \overline{X}_2 = \prod_{u=(1,e,1)\in \calB} X_u  \prod_{u=(1,e,r)\in \calB} X_u
\]
(rough boundaries of the surface code lattice) and
\[
\overline{Z}_1 \, \overline{Z}_2 = \prod_{u=(o,0,1)\in \calB} Z_u  \prod_{u=(o,0,r)\in\calB} Z_u
\]
(smooth boundaries of the surface code lattice). 
We set
\be
\label{S1X}
S_1^X = \prod_{u=(1,e,e)\in \calE(T_e)}\; G_u 
=(\overline{X}_1\overline{X}_2)_\calB \prod_{u=(1,e,e)\in \calE(T_e)}\; Z_u.
\ee
Here the product runs  over all dangling edges of the graph
$T_e$ that cross the face $u_1=1$. 
One can easily check that all Pauli $X$ in this
product cancel each other, except for those 
that appear in the logical operators $\overline{X}_1$
and $\overline{X}_2$. Finally, set
\be
\label{S1Z}
S_1^Z =  \prod_{u=(o,0,o)\in\calC}\; G_u
=(\overline{Z}_1\overline{Z}_2)_\calB \prod_{u=(o,0,o)\in \calE(T_o)}\; Z_u.
\ee
Note that  $\{ (o,0,o)\in\calC\} \subseteq \calE(T_o)\cup \calE(T_{sc})$
where the union is disjoint. 
Furthermore, 
$\calE(T_o)\subset \calA$ and $\calE(T_{sc})=\calB$.
One can  check that all Pauli $X$ in the above product cancel each other
and the action of $S_1^Z$ onto $\calB$ gives $\overline{Z}_1\overline{Z}_2$.
This completes the construction of the subgroup $\calS_1$.

\subsection{Definition of the $\repair{}$-function\label{sec:repairfunctiondef}}
Suppose $E$ is a Pauli error acting
on $\calA$. By definition, the syndrome $\syn_0(E)$
can be viewed as a subset of $\calV(T_e)\cup \calV(T_o)$ such that
$u\in \syn_0(E)$ iff the generator $S_0^u$ anti-commutes with $E$.
As discussed in Section~\ref{sec:codes},
the construction of recovery and repair functions
is based on a minimum weight $X$-type Pauli operator $M$
supported on $\calA$ such that $\syn_0(M)=\syn_0(E)$.
Using Eq.~\eqref{generatorS0u} one can check that 
an error $X_u$ on any edge $u\in \calE(T_e)$
creates $\calS_0$-syndrome at the endpoints of $u$ in the graph
$T_e$.
Likewise, an error $X_u$ on any edge $u\in \calE(T_o)$
creates $\calS_0$-syndrome at the endpoints of $u$ in the graph
$T_o$. Recall that the graphs $T_e$ and $T_o$ have dangling edges.
Errors $X_u$ on such edges create only one  bit of $\calS_0$-syndrome
at the single endpoint of the edge.

To describe the construction of $M$ more formally we need some terminology. 
Suppose $\calG=(\calV,\calE)$ is a simple graph that may have dangling edges.
Given a subset of edges 
$F\subseteq \calE$, the {\em boundary} of $F$ denoted $\partial F$ is 
the subset of vertices  $u\in \calV$ 
such that $u$ has an odd number of incident edges from $F$.
We say that $F$ is a {\em cycle} if $\partial F=\emptyset$.
We say that $F$ is a {\em minimum matching} if 
for any $F'\subseteq \calE$ such that 
$\partial F'=\partial F$ one has $|F'|\ge |F|$.
(Note that our definition of a  minimum matching is slightly different from the one
commonly used in graph theory.)

Since $M$ includes only $X$-type errors, we shall 
identify $M$ and its support.
The above shows that $M\cap \calE(T_e)$ is a minimum matching
in the graph $T_e$ with the boundary $\syn_0(E)\cap \calV(T_e)$.
Likewise, $M\cap \calE(T_o)$ is a minimum matching
in the graph $T_o$ with the boundary $\syn_0(E)\cap \calV(T_o)$.
This completes the construction of $M$.

At this point we have well-defined $\calS_1$-syndromes $\syn_1(E)$, $\syn_1(M)$
and we are ready to construct the repair function  $\repair{(E)}$.
Recall that $\repair{(E)}$ is defined as a minimum weight Pauli operator acting on $\calB$
such that 
\be
\label{repair_full}
\syn_1{(\repair(E))} = \syn_1{(E)}\oplus \syn_1{(M)}.
\ee
Define partial $\calS_1$-syndromes associated
with $X$- and $Z$-type  surface code stabilizers.
Given a Pauli error $F$ acting on $\calC$,
let  $\syn_{1X}(F)$ be the combined syndrome of generators $S_1^u$ 
located at vertices of the surface code graph, $u\in \calV(T_{sc})$,
see Eq.~\eqref{generatorS1u'}. Likewise, let $\syn_{1Z}(F)$ be the combined syndrome of generators $S_1^u$ 
located at vertices of the dual surface code graph, $u\in \calV(T_{sc}^*)$,
see Eq.~\eqref{generatorS1u''}. The repair function is defined as
\be
\label{repairXZ}
\repair{(E)}=\repair_X{(E)} \cdot  \repair_{\overline{X}}{(E)} \cdot \repair_Z{(E)} 
\cdot  \repair_{\overline{Z}}{(E)},
\ee
where $\repair_X{(E)}$ is a minimum weight $X$-type Pauli operator
acting on $\calB$ such that 
\be
\label{repairX}
\syn_{1Z}(\repair_X{(E)}) = \syn_{1Z}(E) \oplus \syn_{1Z}(M),
\ee
$\repair_Z{(E)}$ is a minimum weight $Z$-type Pauli operator
acting on $\calB$ such that 
\be
\label{repairZ}
\syn_{1X}(\repair_Z{(E)}) = \syn_{1X}(E) \oplus \syn_{1X}(M).
\ee
Equivalently,  the support of $\repair_Z(E)$ is a minimum matching in the graph
$T_{sc}$ with the boundary $\syn_{1X}(E) \oplus \syn_{1X}(M)$.
The support of $\repair_X{(E)}$ is a minimum matching in the graph
$T_{sc}^*$ with the boundary $\syn_{1Z}(E) \oplus \syn_{1Z}(M)$.

The operators $\repair_{\overline{X}}{(E)}$
and $\repair_{\overline{Z}}{(E)}$ 
can be viewed as residual logical errors. 
They ensure that  $\repair(E)$ and $E\cdot M$ have the same syndromes for
the generators $S_1^X$ and $S_1^Z$ associated with the stabilizers
$\overline{X}\, \overline{X}$ and $\overline{Z}\, \overline{Z}$ of the logical Bell state. 
This is discussed in details in Section~\ref{sec:parameterscomplete}.

We shall see that $\repair_{\overline{Z}}{(E)}=I$
with probability exponentially close to one, and that  $\repair_{\overline{Z}}(E)$ is a local stochastic error with rate~$p^{\Omega(1)}$. We will also show that~$\repair_Z$ satisfies the lifting property such that $\repair_Z{(E)}\sim \mathcal{N}(p^{\Omega(1)})$. Exactly the same arguments apply to the 
repair functions $\repair_{\overline{X}}{(E)}$
and $\repair_X{(E)}$   if one replaces the graphs $T_e$ and $T_{sc}$
with $T_o$ and $T_{sc}^*$ respectively. 
Part~(iii) of Claim~\ref{claim:basic} then implies that
the full repair operator
$\repair{(E)}$ is a local stochastic error with rate $p^{\Omega(1)}$.

\subsection{Proof of the lifting property\label{sec:liftingproperty}}
Here we argue that $\repair_Z$ satisfies the lifting property (see Definition~\ref{def:lifting}).  To define  the function $\lift$ we shall need some basic facts from graph
theory.  
Suppose $\calG=(\calV,\calE)$ is a simple graph that may have dangling edges.
A subset of edges $F\subseteq \calE$ is called a
{\em forest} iff it contains no cycles. In other words, $F$ is an edge-disjoint union of trees.
Given a subset of vertices $S\subseteq \calV$, let $\calF(S;\calG)$ be 
the set of all forests $F$  in the graph $\calG$ such that $\partial F= S$.
We shall use the simpler notation $\calF(S)$ whenever the graph $\calG$
is clear from the context.
Note that $\calF(\emptyset,\calG)=\emptyset$.
\begin{lemma}
\label{prop:graph1}
Any forest $F\in \calF(S)$ can be partitioned (non-uniquely) 
into edge-disjoint  paths, $F=F_1\cdots F_k$, such that 
each path $F_i$ has endpoints in $S$ and each vertex in $S$ is 
an endpoint of exactly one path $F_i$.
Some paths $F_i$
may have only one endpoint (if the graph has dangling edges).
\end{lemma}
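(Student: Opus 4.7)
My plan is to reduce to the case of a single tree and then induct on the number of edges, with dangling edges handled via a virtual-endpoint trick. First, since $F$ is a forest, it decomposes into its connected components $T_1,\ldots,T_m$, each a tree (possibly with dangling edges), and $\partial F = S$ splits as a disjoint union $S=\bigsqcup_j S_j$ with $S_j = S\cap V(T_j)$. It therefore suffices to prove the statement for a single tree~$T$ with odd-degree vertex set~$S_T$; the paths will then be constructed separately in each component.

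To handle dangling edges uniformly, I would augment $T$ by introducing, for each dangling edge, a distinct virtual endpoint, turning it into an ordinary edge of an augmented tree~$\tilde T$. Writing $V_{\mathrm{virt}}$ for the set of virtual vertices (each of degree~$1$ in $\tilde T$), the odd-degree vertex set of~$\tilde T$ is precisely $U = S_T \sqcup V_{\mathrm{virt}}$, which has even cardinality by the handshake lemma. The problem becomes: partition the edges of $\tilde T$ into edge-disjoint simple paths whose endpoints are exactly the elements of~$U$, with each $u\in U$ an endpoint of exactly one path. A path endpoint that lies in $V_{\mathrm{virt}}$ corresponds in the original forest to terminating in a dangling edge, yielding the ``only one endpoint'' case of the lemma.

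I would now induct on $|E(\tilde T)|$. The base case $|E(\tilde T)|=0$ is vacuous. For the inductive step, if $U=\emptyset$ then $\tilde T$ must be empty (since any tree with at least one edge has a leaf, whose degree $1$ is odd), so we are done. Otherwise, pick any $u\in U$ and grow a walk starting at~$u$ that never reuses an edge: at an intermediate vertex $w\notin U$ the degree of $w$ in $\tilde T$ is even, so after traversing the incoming edge an odd number $\geq 1$ of unused incident edges remain and the walk extends; we stop as soon as we enter a vertex of $U$. Since $\tilde T$ is a tree, the walk cannot revisit a vertex (that would close a cycle), so it must terminate at some $u'\in U$ with $u'\neq u$. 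Removing the edges of this path from $\tilde T$ flips the parity of $u$ and $u'$ only (interior vertices lose degree~$2$), yielding a smaller forest with odd-degree vertex set $U\setminus\{u,u'\}$ and strictly fewer edges; induction then concludes. The only mildly delicate point is the clean bookkeeping of dangling edges, which the virtual-endpoint construction disposes of; the rest is a standard acyclic-graph argument relying on the cycle-free property of $\tilde T$.
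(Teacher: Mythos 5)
Your proof is correct, and it follows a similar inductive, path-peeling strategy to the paper's, but with a few genuine differences in the details. The paper inducts on $|S|$ rather than on the number of edges: it picks $u\in S$, looks at the tree component of $F$ containing $u$ rooted at $u$, and takes the unique path to an arbitrary leaf of that tree. Since the far endpoint of a leaf edge has degree one, it automatically lies in $S$ (or else the leaf is a dangling edge), so the path lands in $S$ without any parity bookkeeping at interior vertices. Your walk-growing argument --- continuing past any vertex not in $U$ because such vertices have even degree --- is the more generic Eulerian-type parity argument (acyclicity is then used to guarantee the walk is a simple path and terminates at $u'\neq u$), and the virtual-endpoint construction cleanly removes dangling edges from the picture at the modest cost of introducing an auxiliary graph. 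One small inconsistency in your write-up: after removing one path, $\tilde T$ may fall apart into several components, so the induction really lives on forests rather than on a single tree; the opening reduction ``to a single tree'' should either be dropped or reapplied inside the inductive step. This is cosmetic and easily repaired. A minor difference in the resulting decompositions: your paths never have a vertex of $U$ in their interior (you stop at the first $U$-vertex encountered), whereas the paper's root-to-leaf paths may pass through $S$-vertices en route; both outcomes are permitted by the lemma.
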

\begin{proof}
We shall use induction in $|S|$. If $S=\emptyset$ then $\calF(S)=\emptyset$
and there is nothing to prove. Suppose $|S|\ge 1$
and $F\in \calF(S)$. 
Choose any vertex $u\in S$ and let $F'$ be the connected component
of $F$ that contains $u$. Note that $F'\ne \emptyset$
since $u\in \partial F$ implies that $F$ contains at least one edge
incident to $u$. We can consider $F'$ as a tree rooted at $u$.
Let $e\in \calE$ be any leaf of $F'$ and $F_1$ be the unique path
in $F'$ connecting $u$ with $e$. If $e$ is a dangling edge then 
$\partial F_1=\{u\}$ and thus $\partial (F\setminus F_1)=S\setminus \{u\}$.
Otherwise, let $v$ be the endpoint of $e$ such that the path $F_1$ terminates at $v$.
 Note that $v\in S$ since $e$ is the only edge of $F$ incident to $v$.
Thus $\partial (F\setminus F_1) = S\setminus \{u,v\}$.
In both cases one can apply the induction hypothesis to the forest
$F\setminus F_1$.
\end{proof}
\begin{lemma}
\label{prop:graph2}
Any minimum matching is a forest.
\end{lemma}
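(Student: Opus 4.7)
The plan is to argue by contradiction, using the fact that the boundary operator $\partial$ is additive modulo~$2$ with respect to symmetric difference. Specifically, for any two edge subsets $F_1,F_2\subseteq \calE$ one has $\partial(F_1\triangle F_2)=\partial F_1 \triangle \partial F_2$, which follows by viewing $\partial$ as an $\mathbb{F}_2$-linear map from $\{0,1\}^{\calE}$ (edge subsets) to $\{0,1\}^{\calV}$ (vertex subsets). This identity is where the entire argument lives.

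Next, assuming for contradiction that a minimum matching $F$ contains a nonempty cycle $C\subseteq F$ (so $\partial C=\emptyset$ and $C\neq \emptyset$ by the definition of cycle given just before the lemma), I would set
\begin{equation}
F' := F\setminus C = F\triangle C
\end{equation}
and compute $\partial F'=\partial F\triangle \partial C=\partial F$ using the additivity noted above. Since $|F'|=|F|-|C|<|F|$ while $\partial F'=\partial F$, this contradicts the assumed minimality of $F$, establishing that $F$ contains no cycles and is therefore a forest.

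The argument has no real obstacle; the only subtlety worth noting is that the notion of ``cycle'' must be interpreted in the cycle-space sense ($\partial F=\emptyset$) in the presence of dangling edges. A dangling edge contributes an odd count at its unique endpoint, so it can appear in a cycle only if paired with another incident edge; but this observation plays no role in the proof beyond reassuring that the above definitions behave as expected.
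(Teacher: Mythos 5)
Your proof is correct and is essentially identical to the paper's: assume $F$ contains a nonempty cycle $C$, replace $F$ by $F\triangle C$, note the boundary is unchanged while the weight drops, contradiction. The paper writes the symmetric difference as $\oplus$ rather than $\triangle$, but the argument is the same.
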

\begin{proof}
Indeed, if $F$ is a minimum matching and $F$ contains a cycle $C$
then $|F\oplus C|=|F|-|C|<|F|$ and $\partial (F\oplus C)=\partial F \oplus \partial C = \partial F$.
This contradicts the minimality of $F$.
\end{proof}
\begin{lemma}
\label{prop:graph3}
Suppose $F$ is a minimum matching and $K\subseteq F$.
Then $K$ is a minimum matching.
\end{lemma}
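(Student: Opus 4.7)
The plan is to argue by contradiction: suppose $K$ were not a minimum matching, and construct from this supposition a subset $F''\subseteq \calE$ with $\partial F''=\partial F$ and $|F''|<|F|$, contradicting the minimality of $F$.

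Concretely, if $K$ is not minimum, there exists $K'\subseteq \calE$ with $\partial K'=\partial K$ and $|K'|<|K|$. I would then define
\begin{equation}
F''=(F\setminus K)\oplus K',
\end{equation}
where $\oplus$ denotes symmetric difference. First I would verify the boundary: using that $\partial$ is linear over $\oplus$, that $K\subseteq F$ gives $F\setminus K=F\oplus K$, and that $\partial K'=\partial K$, one obtains $\partial F''=\partial F\oplus \partial K\oplus \partial K'=\partial F$.

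Next I would bound the size. From $|A\oplus B|=|A|+|B|-2|A\cap B|\le |A|+|B|$ and $|F\setminus K|=|F|-|K|$, one gets
\begin{equation}
|F''|\le |F\setminus K|+|K'|=|F|-|K|+|K'|<|F|,
\end{equation}
where the strict inequality uses $|K'|<|K|$. Combined with $\partial F''=\partial F$, this contradicts the hypothesis that $F$ is a minimum matching, completing the proof.

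The argument is entirely elementary; I do not anticipate any real obstacle. The only subtle point is the bookkeeping showing $F\setminus K = F\oplus K$ (which holds precisely because $K\subseteq F$), so that the symmetric difference trick cleanly converts a counterexample for $K$ into one for $F$.
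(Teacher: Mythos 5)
Your proof is correct and is essentially the same as the paper's: you define $F''=(F\setminus K)\oplus K'$, which (since $K\subseteq F$) equals the paper's $F'=F\oplus K\oplus K'$, and the boundary and size bookkeeping are identical. No issues.
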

\begin{proof}
Assume the contrary, that is, there exists
a subset of edges $K'$ such that
$\partial K'=\partial K$ and $|K'|<|K|$. Define
$F'=F\oplus K\oplus K'$.
We have $\partial F'=\partial F$ and 
\[
|F'|\le |F\oplus K| + |K'| = |F|-|K| + |K'| <|F|.
\]
This contradicts the minimality of $F$. 
\end{proof}
\begin{lemma}
\label{prop:graph4}
Suppose $\calG$ is a graph with vertex degree at most $D$.
Let $M$ be a minimum matching in $\calG$.
Choose any $q\ge 0$ such that $16 D q^{1/2}\le 1$. Then  
\be
\label{SAWsum}
\sum_{F \in \calF(\partial M;\calG)} q^{|F|} \le (16Dq^{1/2})^{|M|}
\ee
\end{lemma}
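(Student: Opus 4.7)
The plan is to bound the sum by decomposing each forest $F\in\calF(\partial M;\calG)$ into self-avoiding walks with endpoints in $\partial M$, and then exploiting the minimum matching property of $M$ via Lemmas~\ref{prop:graph1}, \ref{prop:graph2}, and~\ref{prop:graph3}.

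First, by Lemma~\ref{prop:graph2}, $M$ is itself a forest. So by Lemma~\ref{prop:graph1}, I fix a decomposition $M=M_1\cup\cdots\cup M_k$ into edge-disjoint paths whose endpoints partition $\partial M$, with $|M_i|=\ell_i$ and $\sum_i\ell_i=|M|$. Moreover, Lemma~\ref{prop:graph3} says any union of a subset of the $M_i$ is itself a minimum matching; equivalently, any edge set in $\calG$ with the same boundary as $\bigcup_{i\in J} M_i$ has size at least $\sum_{i\in J}\ell_i$. This will be the source of all length lower bounds.

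Second, for each forest $F\in\calF(\partial M;\calG)$, applying Lemma~\ref{prop:graph1} again gives a decomposition of $F$ into edge-disjoint paths with endpoints in $\partial M$. Each such path is a self-avoiding walk (SAW), and the resulting assignment of endpoints defines a pairing $\pi_F$ of $\partial M$ (with the obvious convention for dangling paths). Overcounting the forests by the choice of pairing together with the tuple of SAWs that realizes it yields
\[
\sum_{F\in\calF(\partial M;\calG)}q^{|F|}\ \le\ \sum_{\pi}\ \sum_{(P_{(a,b)})_{(a,b)\in\pi}}\ q^{\sum_{(a,b)\in\pi}|P_{(a,b)}|},
\]
where the inner sum ranges over tuples of SAWs, one for each pair $(a,b)\in\pi$. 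I will then use the elementary fact that in a graph of maximum degree $D$, the number of SAWs of length $\ell$ starting at a given vertex is at most $D^\ell$.

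Third, I combine the three ingredients: (i) the degree bound $D^\ell$ controls each individual SAW sum via a geometric series in $Dq$; (ii) applied cycle-by-cycle to the (2-regular) multigraph on $\partial M$ whose edges record $M$'s pairing together with $\pi$, Lemma~\ref{prop:graph3} forces the total SAW length within each such cycle to be at least the total $M$-length in that cycle, so the combined exponent of $Dq$ from the geometric series is at least $|M|$; (iii) the combinatorial count of pairings of $\partial M$ contributes an extra entropic factor that has to be absorbed. Careful bookkeeping of (i)--(iii), using the standing hypothesis $16Dq^{1/2}\le 1$ (which in particular makes the geometric series converge), produces the claimed bound $(16Dq^{1/2})^{|M|}$.

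The main obstacle is step (iii): the number of pairings of $\partial M$ can be as large as roughly $k^k$, which a priori overwhelms the naive $(Dq)^{|M|}$ one would hope to get per cycle. The trick is that balancing this pairing entropy against the SAW geometric-series weight forces the effective per-edge contribution to $q^{1/2}$ rather than $q$, which is exactly the reason the final bound reads $16Dq^{1/2}$ rather than something like $cDq$. Handling this balancing, together with the minor nuisance of dangling-edge paths in $T_e$, $T_o$, and the surface-code graphs, is the delicate part of the proof.
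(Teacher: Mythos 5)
You have correctly identified the combinatorial obstacle (pairing entropy over $\partial M$), but the proposal does not contain a mechanism to actually overcome it, and the claimed source of the exponent $q^{1/2}$ is not the one that makes the argument go through.

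Concretely, if you sum over pairings $\pi$ of $\partial M$ and per pair $(a,b)$ sum a geometric series over SAWs from $a$ to $b$, each pairing contributes roughly $(Dq)^{\sum_{(a,b)\in\pi}\mathrm{dist}(a,b)}\geq(Dq)^{|M|}$, but the number of pairings is of order $k^{k/2}$ with $k=|\partial M|\leq 2|M|$. This gives a prefactor growing like $|M|^{|M|}$, which is not absorbed by $(Dq)^{|M|}$; you would need to show that all but an $O(1)^{|M|}$ number of pairings force $\sum\mathrm{dist}(a,b)$ to grow faster than $|M|$, which is a nontrivial claim about the geometry of $\partial M$ that you neither state nor prove. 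The phrase ``balancing pairing entropy against the SAW weight forces $q^{1/2}$'' is not a proof step, and it also misattributes where the square root comes from.

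The paper's actual route sidesteps pairings entirely. It applies Lemma~\ref{prop:graph1} to $F$ (not to $M$; there is no decomposition of $M$, and Lemma~\ref{prop:graph3} is not invoked at all in this proof) to get paths with endpoints in $\partial M$, and then assigns to \emph{each vertex} $u_j\in\partial M$ one SAW $H_j$ starting at $u_j$, with the convention that a path $F_i$ joining two boundary vertices is assigned to both. This deliberately double-counts: $r:=\sum_j|H_j|$ satisfies $|F|\le r\le 2|F|$, so $q^{|F|}\le q^{r/2}$ --- that doubling is exactly where the exponent $q^{1/2}$ comes from, not from any entropy balancing. Because $F$ is recovered as $\bigcup_j H_j$, the map $F\mapsto(H_1,\dots,H_k)$ is injective, and the number of $k$-tuples with $\sum|H_j|=r$ is bounded by $\Gamma(r,k)D^r\le 8^r D^r$ using $k\le 2|M|\le 2r$. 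Since $r\ge|M|$ by minimality of $M$, summing the geometric series $\sum_{r\ge|M|}(8Dq^{1/2})^r$ gives $(16Dq^{1/2})^{|M|}$. Replacing your sum-over-pairings with this per-vertex assignment (and dropping the unused decomposition of $M$) is what makes the argument close.
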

\begin{proof}
Given a vertex $u$ let $\calP(u)$ 
be  the set of all paths in the graph $\calG$ that have $u$ as an endpoint.
Suppose $\partial M = \{ u_1,\ldots,u_k\}$.
Choose any forest $F\in  \calF(\partial M)$ and let 
$F=F_1\cdots F_k$ be the decomposition of $F$ into edge-disjoint
paths established in Lemma~\ref{prop:graph1}.
Define a $k$-tuple of paths $H_j\in \calP(u_j)$, $j=1,\ldots,k$ as follows.
Each path $F_i$ with a single endpoint $u_j\in \partial M$ gives rise
to a path $H_j=F_i$. Each path $F_i$ with two endpoints $u_p,u_q\in \partial M$
gives rise to a pair of paths $H_p=H_q=F_i$.
By construction,  
\[
|M|\le  |F|\le  r\equiv \sum_{i=1}^k |H_i| \le 2|F|.
\]
The above shows that each
forest $F\in \calF(\partial M)$ can be mapped
to a $k$-tuple of paths $H_j \in \calP(u_j)$ such that
$r\ge |M|$ and $q^{|F|} \le q^{r/2}$. 
Thus
\be
\label{extendedSAW}
\sum_{F \in \calF(\partial M)} q^{|F|}
\le
\sum_{r=|M|}^{\infty}\; q^{r/2}
\sum_{H_1\in \calP(u_1)} \ldots \sum_{H_k \in \calP(u_k)}
\delta(r,|H_1|+\ldots +|H_k|)
\ee
Here $\delta(x,y)=1$ if $x=y$ and $\delta(x,y)=0$ otherwise. 
The number of  length-$s$ paths starting at 
a given vertex  is at most $(D-1)^s \le D^s$.
Thus
\be
\sum_{F \in \calF(\partial M)} q^{|F|}
\le
\sum_{r=|M|}^\infty \Gamma(r,k)\cdot (Dq^{1/2})^r 
\ee
where $\Gamma(r,k)$ is the number of ways to write $r$ as a 
sum of $k$ non-negative integers (order matters). 
Noting that  $|\partial M|\le 2|M|$ one gets 
$k=|\partial M|\le 2|M| \le 2r$ and thus
\be
\Gamma(r,k)={r+k-1 \choose k-1} \le 2^{r+k-1}\le 8^r.
\ee
We get
\be
\sum_{F \in \calF(\partial M)} q^{|F|}
\le
\sum_{r=|M|}^\infty (8Dq^{1/2})^r \le 2  (8Dq^{1/2})^{|M|}
\ee
since, by assumption, $8Dq^{1/2}\le 1/2$. We can assume wlog that $M\ne \emptyset$ (otherwise
both sides of Eq.~\eqref{SAWsum} equal to one).
Then $2\le 2^{|M|}$ which proves the lemma.
\end{proof}

Consider a subset $K\subseteq \calB$. We shall define
$\lift(K)$ as the set of forests with the boundary $\partial K$
in a  suitable graph. Namely, let $\gluedgraph$ be the graph obtained by gluing together 
the graphs
$T_e$ and  $T_{sc}$ such that 
the dangling edges $(e,e,1), (e,e,r) \in \calE(T_e)$ are attached to the respective
vertices of $\calV(T_{sc})$, see Fig.~\ref{fig:Te}.
The graph $\gluedgraph$ has the set of vertices $\calV(T_{sc}) \cup \calV(T_e)$
and the set of edges $\calE(T_{sc})\cup \calE(T_e)$.
Note that $\calB=\calE(T_{sc})$ becomes a subset of edges in the glued graph
$\gluedgraph$. Thus we define
\be
\label{LiftDef}
\lift(K) = \calF(\partial K; \gluedgraph)
\ee
as the set of forests in~$\gluedgraph$ with boundary~$\partial K$. 
Below we shall use the following property.
\begin{lemma}
\label{prop:graph5}
Suppose $M$ is a minimum matching
in the surface code graph $T_{sc}$. 
Then $M$ is also a minimum matching in the glued graph $\gluedgraph$.
\end{lemma}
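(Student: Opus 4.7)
The plan is to argue by contradiction. Suppose $M$ is a minimum matching in $T_{sc}$ but that some $M' \subseteq \calE(\gluedgraph)$ satisfies $\partial_{\gluedgraph} M' = \partial_{T_{sc}} M$ with $|M'| < |M|$. I would split $M' = M'_s \cup M'_e$ along the edge partition $\calE(\gluedgraph) = \calE(T_{sc}) \cup \calE(T_e)$ and aim to produce a subset $M'' \subseteq \calE(T_{sc})$ with $\partial_{T_{sc}} M'' = \partial_{T_{sc}} M$ and $|M''| \le |M'| < |M|$, which would contradict the minimality of $M$ in $T_{sc}$.

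First I would argue that $M'_e$ has even degree at every vertex in $\calV(T_e)$. This follows because every edge of $T_{sc}$ has both its endpoints in $\calV(T_{sc})$, so $\partial M'_s \subseteq \calV(T_{sc})$; combined with $\partial M' \subseteq \calV(T_{sc})$, this forces the $\calV(T_e)$-contribution to $\partial_{\gluedgraph} M'_e$ to vanish. Using a decomposition analogous to Lemma~\ref{prop:graph1}, $M'_e$ can then be written as an edge-disjoint union of closed cycles in $T_e$ together with paths whose endpoints sit at dangling edges of $T_e$. The closed cycles contribute nothing to the $\gluedgraph$-boundary and can be discarded.

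The central step is to replace the surviving paths by a set $N \subseteq \calE(T_{sc})$ that has the same $\calV(T_{sc})$-boundary and satisfies $|N| \le |M'_e|$. I would classify each path endpoint as \emph{attached} if it corresponds to a $T_e$-dangling edge at the glued faces $u_3 \in \{1, r\}$ (and hence is identified with a vertex of $T_{sc}$) or as \emph{free} otherwise; only attached endpoints contribute to $\partial_{\gluedgraph} M'_e \cap \calV(T_{sc})$. For a path whose two endpoints are attached on the same boundary face, say $u_3 = 1$, at $T_{sc}$ vertices $u = (a,b,1)$ and $v = (a',b',1)$, I would substitute a shortest $T_{sc}$-path on that face. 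The lattice geometry of $T_e$ provides the crucial length bound: any such $T_e$-path must use the two attached edges plus a $T_e$-interior segment from $(a,b,2)$ to $(a',b',2)$, which requires at least $(|a-a'|+|b-b'|)/2$ edges (since each $T_e$-edge changes a single coordinate by two), and this exactly matches the $T_{sc}$-distance between $u$ and $v$. So the substitution saves at least two edges per path. Paths with attached endpoints on opposite faces, or with one attached and one free endpoint, would be handled by pairing them appropriately and, when needed, routing through $T_{sc}$ dangling edges at the rough boundary; the same coordinate-distance inequality controls each contribution.

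I expect the main obstacle to be exactly this projection step: promoting the heuristic that ``every $T_e$-segment detours through the bulk and can be shortened to a $T_{sc}$-segment on the boundary faces'' into a formal accounting that covers all path types simultaneously --- same-face, cross-face, attached/free mixtures, and the global pairing between top-to-bottom paths and $T_{sc}$ dangling edges. Once the bound $|N| \le |M'_e|$ is established, the set $M'' = M'_s \cup N \subseteq \calE(T_{sc})$ satisfies $\partial_{T_{sc}} M'' = \partial M' = \partial M$ and $|M''| \le |M'| < |M|$, delivering the required contradiction.
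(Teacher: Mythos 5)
Your overall plan — keep the $T_{sc}$-part $M'_s$, replace only the $T_e$-part $M'_e$, and derive a contradiction with the minimality of $M$ in $T_{sc}$ — is a valid and slightly different route from the paper's. The paper instead decomposes the entire competitor $H$ (noting it is a forest by Lemma~\ref{prop:graph2}) into edge-disjoint paths $H_i$ via Lemma~\ref{prop:graph1}, and replaces each whole path $H_i$ (which may alternate between $T_e$- and $T_{sc}$-segments) with a single $T_{sc}$-subset $M_i$ of no greater length. Both approaches reduce to the same family of geometric length bounds, but the paper's ``whole-path'' replacement sidesteps some of the path-type bookkeeping you're worried about.

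The genuine gap you correctly anticipate is the cross-face case, and your phrase ``the same coordinate-distance inequality controls each contribution'' undersells what is actually needed there. For a $T_e$-path connecting an attached endpoint on $u_3=1$ to one on $u_3=r$, the length bound is \emph{not} the 2D taxicab estimate used for same-face paths: it is the observation that any such path must traverse the full $u_3$-range, giving length $\geq d$, together with the separate fact that every vertex of $T_{sc}$ lies within distance $d/2$ of a rough boundary, so a $T_{sc}$-replacement consisting of two paths (one per surface code, each terminating at a rough-boundary dangling edge) has total length $\leq d$. This is precisely where the surface-code distance $d$ and the linear size $r=2d-1$ of the cube enter the argument — the paper even remarks this is the only step in the whole proof of Theorem~\ref{thm:3D} that requires the separation between the two codes to be large enough. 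Without this case spelled out, the argument does not close; with it, your approach should go through, and the attached-to-free case then follows by a routine variant of the same bounds.
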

\begin{proof}
Let $H\subseteq \calE(\gluedgraph)$ be a minimum matching in the graph $\gluedgraph$
such that $\partial H=\partial M$. Here the boundary is taken in the graph $\gluedgraph$.
 It suffices  to check that $|H|\ge |M|$.
Lemma~\ref{prop:graph2} implies that $H$ is a forest.
Let $H=H_1\cdots H_k$ be the partition of $H$ into
edge-disjoint paths established in Lemma~\ref{prop:graph1}.
We claim that for any path
$H_i\subseteq \calE(\gluedgraph)$ that has both endpoints in $\calV(T_{sc})$
there exists a subset $M_i\subseteq \calE(T_{sc})$ such that
$\partial H_i = \partial M_i$ and $|H_i|\ge |M_i|$.
Indeed, let $u,v\in \calV(T_{sc})$ be the endpoints of $H_i$.

Suppose first that $u$ and $v$ belong to the same connected component of $T_{sc}$
(i.e. both $u$ and $v$ belong to the same copy of the surface code).
Then the desired path $M_i$ can be chosen as a shortest path in the graph $T_{sc}$
connecting $u$ and $v$. We have $|H_i|\ge |M_i|$ since $M_i$ is also a shortest
path in the graph $\gluedgraph$ connecting $u$ and $v$.

Suppose now that $u$ and $v$ belong to different connected components of $T_{sc}$
(i.e. the path $H_i$ connects the two surface codes).
Then the length of $H_i$ must be at least $d$.
Let $M_i'$ and $M_i''$ be shortest paths connecting $u$ and $v$
to the nearest  rough boundary in the respective connected components of $T_{sc}$.
Since any vertex of $T_{sc}$ is within distance $d/2$ from some rough boundary,
we have $|M_i'|\le d/2$ and $|M_i''|\le d/2$. Choose $M_i=M_i'M_i''$.
Then $\partial H_i = \partial M_i$
and $|M_i|\le d\le |H_i|$, as claimed.
(We note that this is the 
only step in the proof of Theorem~\ref{thm:3D} that requires the separation between
the two surface codes to be sufficiently large.)

Likewise, for any path $H_i\subseteq \calE(\gluedgraph)$
that starts at some vertex of $\calV(T_{sc})$ and terminates
at a dangling edge of $\gluedgraph$
there exists a path $M_i\subseteq \calE(T_{sc})$ such that
$\partial H_i = \partial M_i$ and $|H_i|\ge |M_i|$.
Let $M'=M_1\oplus \cdots \oplus M_k$. By construction,
$\partial M'=\partial M =\partial H$ and $|H|\ge |M'|$.
The minimality of $M$ implies $|M'|\ge |M|$.
Thus $|H|\ge |M|$.
\end{proof}
Now we are ready to verify that the function~$\repair_Z$ 
satisfies -- together with the function~$\lift$ -- the property stated in  Eq.~(\ref{outline_eq2}), that is, Property~\eqref{prop:decay} of Definition~\ref{def:lifting}. 
\begin{lemma}
\label{prop:Lift}
Suppose $K\subseteq \repair_Z{(E)}$ for some Pauli error $E$ acting on $\calA$.
Then 
\be
\label{LiftTe}
\sum_{L\in \lift(K)}\; \lambda^{|L|} \le (96\lambda^{1/2})^{|K|}.
\ee
\end{lemma}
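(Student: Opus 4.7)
The plan is to combine Lemma~\ref{prop:graph4} (which controls the total weight of forests with a fixed boundary in a bounded-degree graph) with structural facts about the repair function. Once we show that the subset $K$ is itself a minimum matching in the glued graph $\gluedgraph$, the bound~\eqref{LiftTe} will drop out of Lemma~\ref{prop:graph4} upon inserting the correct vertex-degree constant.

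First I would unpack the hypothesis. Recall from Eq.~\eqref{repairZ} that $\repair_Z(E)$ is, by definition, a minimum-weight $Z$-type Pauli on $\calB$ whose support is a minimum matching in the surface-code graph $T_{sc}$. Since $K \subseteq \repair_Z(E)$ and $\repair_Z(E)$ is a minimum matching in $T_{sc}$, Lemma~\ref{prop:graph3} immediately implies that $K$ is itself a minimum matching in $T_{sc}$. Next, I invoke Lemma~\ref{prop:graph5}, which asserts that any minimum matching in $T_{sc}$ remains a minimum matching in the glued graph $\gluedgraph$ obtained by attaching the dangling edges of $T_e$ to the appropriate vertices of $T_{sc}$. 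Hence $K$ is a minimum matching in $\gluedgraph$.

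With this established, I apply Lemma~\ref{prop:graph4} to the graph $\calG = \gluedgraph$ with $M = K$ and $q = \lambda$. By the recollection that $\lift(K) = \calF(\partial K;\gluedgraph)$ (see Eq.~\eqref{LiftDef}), the left-hand side of~\eqref{LiftTe} coincides with the sum $\sum_{F\in \calF(\partial K;\gluedgraph)} \lambda^{|F|}$ appearing in Lemma~\ref{prop:graph4}, so we obtain
\begin{equation}
\sum_{L\in \lift(K)} \lambda^{|L|} \;\le\; \bigl(16 D \lambda^{1/2}\bigr)^{|K|},
\end{equation}
where $D$ denotes the maximum vertex degree of $\gluedgraph$. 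The remaining routine check is to verify that $D \le 6$: vertices of $T_e$ are interior sites of a 3D cubic sublattice and so carry degree at most $6$, whereas vertices of $T_{sc}$ have degree at most $4$ inside $T_{sc}$ and acquire at most $2$ additional incident edges from the attached dangling edges of $T_e$ at the boundaries $u_3\in\{1,r\}$, for a total of at most $6$. Substituting $D=6$ yields $(96\lambda^{1/2})^{|K|}$ as required.

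The only mildly subtle step is the invocation of Lemma~\ref{prop:graph5}: intuitively it is not obvious that shortening via paths across the glued seam (e.g., detouring from one surface-code component through $T_e$ to the other) cannot beat the surface-code minimum matching. That subtlety, however, is already absorbed into Lemma~\ref{prop:graph5} and ultimately relies on the lattice separation $r=2d-1$ being large enough. The rest is bookkeeping about vertex degrees in the cubic lattice, and the implicit assumption $16D\lambda^{1/2}\le 1$ required by Lemma~\ref{prop:graph4} is automatically inherited (the bound is trivially valid when it is violated since the right-hand side of~\eqref{LiftTe} exceeds~$1$ already for $|K|=0$, in which case $\lift(K)=\{\emptyset\}$ and both sides equal~$1$).
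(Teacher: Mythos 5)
Your proof is correct and follows essentially the same path as the paper's: establish that $K$ is a minimum matching in the glued graph~$\gluedgraph$ by combining Lemma~\ref{prop:graph3} (subsets of minimum matchings are minimum matchings) with Lemma~\ref{prop:graph5} (minimum matchings in~$T_{sc}$ remain minimum matchings in~$\gluedgraph$), then apply the forest-counting bound of Lemma~\ref{prop:graph4} with vertex degree~$D=6$. The only difference is a harmless reordering: you apply Lemma~\ref{prop:graph3} inside~$T_{sc}$ first and then Lemma~\ref{prop:graph5}, whereas the paper applies Lemma~\ref{prop:graph5} to~$\repair_Z(E)$ first and then Lemma~\ref{prop:graph3} inside~$\gluedgraph$; both orders are valid.

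Two small imprecisions worth noting: each vertex of~$\calV(T_{sc})$ lies on exactly one boundary face~$u_3\in\{1,r\}$, so it acquires at most \emph{one} (not two) additional incident edge from the glued dangling edges of~$T_e$, giving degree at most~$5$; the degree-$6$ vertices are the interior vertices of~$T_e$, which is enough for~$D\le 6$. And the closing parenthetical about large~$\lambda$ does not actually hold: when~$96\lambda^{1/2}>1$ the left-hand side of~\eqref{LiftTe} need not be controlled by the right-hand side (also, by the paper's convention~$\calF(\emptyset;\calG)=\emptyset$, so~$\lift(\emptyset)$ is empty and the left-hand side is~$0$, not~$1$). This is harmless in context since the lemma is only ever invoked with~$\lambda=2p^{c_3}$ small, but the uniform-in-$\lambda$ phrasing of the lemma statement is slightly misleading and your attempted justification of it is not sound.
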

\begin{proof}
By definition of the repair function, $\repair_Z{(E)}$
is a minimum matching in the surface code graph $T_{sc}$.
Using Lemma~\ref{prop:graph5} one infers that~$\repair_Z{(E)}$
is a minimum matching in the glued graph~$\gluedgraph$.
By Lemma~\ref{prop:graph3}, $K$ is also a minimum matching
in $\gluedgraph$. The maximum vertex degree of $\gluedgraph$ is $D=6$
since $\gluedgraph$ is isomorphic to the 3D cubic lattice.
Now Eq.~\eqref{LiftTe} follows from Eq.~\eqref{LiftDef} and Lemma~\ref{prop:graph4}.
\end{proof}
To establish that $\repair_Z$  satisfies the lifting property, it remains to map an error $E$ to a subset $E'\in \lift(K)$,
as stated in property~\eqref{prop:tricky} of Definition~\ref{def:lifting}.  To this end we need the following.
\begin{lemma}
\label{prop:graph6}
Suppose $F\subseteq \calE(T_{sc})$ is a minimum matching
in the graph $T_{sc}$. Suppose $Y\subseteq \calE(T_e)$
is an arbitrary subset such that $\partial Y = \partial F$, 
where the boundary is taken in the graph $\gluedgraph$.
For any subset $K\subseteq F$ there exists a forest
$L \in \lift(K)$ such that 
\be
\label{LY1}
L \cap \calE(T_e) \subseteq Y
\ee
and
\be
\label{LY2}
|L| \le 2|L\cap \calE(T_e)|.
\ee
\end{lemma}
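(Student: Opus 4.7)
The plan is to define a natural candidate edge set $L_0\subseteq \calE(\gluedgraph)$, extract a forest $L\subseteq L_0$ with the correct boundary by deleting cycles, and then verify the size bound Eq.~\eqref{LY2} using the minimality of $F$ in the glued graph $\gluedgraph$ (Lemma~\ref{prop:graph5}) through an exchange argument.

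Concretely, I would set $L_0 = (F\setminus K)\cup Y$. Since $K\subseteq F$ we have $\partial(F\setminus K) = \partial F \oplus \partial K$, and combined with $\partial Y=\partial F$ this gives $\partial L_0=\partial K$. Moreover $K\cap L_0=\emptyset$, since $K\subseteq F$ is disjoint both from $F\setminus K$ and from $Y\subseteq \calE(T_e)$. Starting from $L_0$, iteratively remove the edges of any simple cycle it contains; each such removal preserves the boundary (because $\partial C=0$ for a cycle) and strictly decreases the number of edges, so the procedure terminates at a forest $L\subseteq L_0$ satisfying $\partial L=\partial K$, hence $L\in \lift(K)$. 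By construction, $L\cap \calE(T_e)\subseteq L_0\cap \calE(T_e)=Y$, which establishes Eq.~\eqref{LY1}.

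To obtain Eq.~\eqref{LY2}, note that $L$ and $K$ are disjoint (since $L\subseteq L_0$ and $K\cap L_0=\emptyset$), so $L\cup K$ is a disjoint union with $\partial(L\cup K) = \partial L\oplus \partial K = 0$; that is, $L\cup K$ is a cycle in $\gluedgraph$. By Lemma~\ref{prop:graph5}, $F$ is a minimum matching in $\gluedgraph$, and the same exchange argument used in the proof of Lemma~\ref{prop:graph2} yields
\[
|F\cap C| \;\le\; |C\setminus F| \qquad \text{for every cycle } C\subseteq \calE(\gluedgraph),
\]
because otherwise $F\oplus C$ would have the same boundary as $F$ but strictly fewer edges, contradicting minimality. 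Applying this inequality to $C=L\cup K$, and using $K\subseteq F$, $L\cap \calE(T_{sc})\subseteq F\setminus K\subseteq F$, and the fact that $L\cap \calE(T_e)\subseteq Y$ is disjoint from $F$, one computes
\[
F\cap (L\cup K) = \bigl(L\cap \calE(T_{sc})\bigr)\sqcup K,
\qquad
(L\cup K)\setminus F = L\cap \calE(T_e),
\]
so the cycle inequality reduces to $|L\cap \calE(T_{sc})|+|K|\le |L\cap \calE(T_e)|$. Adding $|L\cap \calE(T_e)|$ to both sides gives $|L|=|L\cap \calE(T_{sc})|+|L\cap \calE(T_e)|\le 2|L\cap \calE(T_e)|-|K|\le 2|L\cap \calE(T_e)|$, which is Eq.~\eqref{LY2}.

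The main obstacle I anticipate is identifying the right choice of $L_0$ and the right cycle $C=L\cup K$ on which to invoke minimality; once these are fixed, the argument reduces to bookkeeping over which edges lie in $F$, $\calE(T_{sc})$, and $\calE(T_e)$. The crucial ingredient is that minimality of $F$ holds in the \emph{glued} graph and not merely in $T_{sc}$, which was precisely the content of Lemma~\ref{prop:graph5}.
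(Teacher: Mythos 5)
Your proof is correct, and it takes a genuinely different route from the paper. The paper argues by induction on $|Y|$: it decomposes the cycle $F\cup Y$ into closed loops/paths, removes one loop $O$ at a time, splits into two cases according to whether $O$ meets $K$, and repeatedly invokes the minimality of $F$ in $\gluedgraph$ (via Lemmas~\ref{prop:graph3} and~\ref{prop:graph5}) to control the pieces $|O|\leq 2|O\cap\calE(T_e)|$. You instead construct $L$ in one shot as a forest inside $L_0=(F\setminus K)\cup Y$ and extract the estimate from a \emph{single} exchange argument applied to the cycle $L\cup K$. The key observations you need — that $L$ and $K$ are disjoint so $L\cup K$ is a cycle, that $F\cap(L\cup K)=(L\cap\calE(T_{sc}))\sqcup K$ with the $T_{sc}$-part contained in $F\setminus K$, and that $(L\cup K)\setminus F=L\cap\calE(T_e)$ — are all correct, and the inequality $|F\cap C|\leq|C\setminus F|$ for any cycle $C$ in $\gluedgraph$ is exactly the content of the minimality of $F$ as a matching in $\gluedgraph$ (Lemma~\ref{prop:graph5}), applied once. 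Both proofs hinge on the same essential ingredient (minimality in the glued rather than the surface code graph), but yours avoids the induction and the loop decomposition entirely, which makes it shorter and, I'd say, more transparent; the price is that the bound you get, $|L|\leq 2|L\cap\calE(T_e)|-|K|$, is not tracked per-loop but that extra slack was never used in the paper anyway. This is a legitimate improvement in exposition.
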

\begin{proof}
Use induction in the size of $Y$.
The base of induction is $Y=\emptyset$. We have
$\partial F=\partial Y = \emptyset$. Since $F$ is a minimum matching,
$F=\emptyset$. Thus $K=\emptyset$ and one
can choose $L=\emptyset$.

Suppose now that $Y$ is non-empty. 
Let $C=F\cup Y$. Since the graphs $T_{sc}$ and $T_e$ have no common edges,
one has $F\cap Y=\emptyset$ and thus 
 $C=F\oplus Y$. Note that $C$ is a cycle in the graph $\gluedgraph$.
Let $O\subseteq C$ be an arbitrary closed loop or a path starting and ending
at a dangling edge. Set
\[
F'=F\setminus O \quad \mbox{and} \quad Y'=Y\setminus O.
\]
Lemma~\ref{prop:graph3} implies that $F'$ is a minimum matching
in the graph $T_{sc}$.
We claim that $\partial F' = \partial Y'$. Indeed,
$F'\oplus Y' = F \oplus Y\oplus O=C\oplus O$ is a sum of two cycles.
Thus $F'\oplus Y'$ is a cycle, that is, $\partial F' = \partial Y'$. The loop $O$ must use at least one edge
of $Y$ since $F$ contains no cycles, see Lemma~\ref{prop:graph2}.
Thus $Y'$ contains at most $|Y|-1$ edges.
Consider two cases.

\noindent
{\em Case~1:} $O\cap K=\emptyset$.
 Then $K\subseteq F'$. 
The desired forest $L$ can be constructed by applying the induction hypothesis
to $K,F',Y'$.

\noindent
{\em Case~2:} $O\cap K\ne\emptyset$. 
Set 
\[
O'=O\setminus K \quad \mbox{and} \quad K'=K\setminus O
\]
Apply the induction hypothesis to $K',F',Y'$ to construct 
a forest $L'\in \lift(K')$ such that 
$L'\cap \calE(T_e)\subseteq Y'$ and
\be
\label{inductionL'}
|L'|\le 2|L'\cap \calE(T_e)|.
\ee
Define
\be
L=L'\oplus O'.
\ee
Then
\be
\partial L = \partial L' \oplus \partial O' = \partial K' + \partial O' =
\partial (K'\oplus O') = \partial (K\oplus O) = \partial K \oplus \partial O = \partial K.
\ee
Below we show that 
\be
\label{LL}
|O|\le 2|O\cap \calE(T_e)|.
\ee
Therefore
\[
|L|\le |L'| + |O'| \le |L'|+|O| \le 2|L'\cap \calE(T_e)| + 2|O\cap \calE(T_e)|  
=2|L'\cap \calE(T_e)| + 2|O'\cap \calE(T_e)|
=
2|L\cap \calE(T_e)|.
\]
Here the third inequality uses Eqs.~\eqref{inductionL'},\eqref{LL}.
The last equality uses the assumption $L'\cap \calE(T_e)\subseteq Y'$ 
and the identity $Y'\cap O'=\emptyset$.

It remains to prove Eq.~\eqref{LL}. Let $O_{sc}=O\cap \calE(T_{sc})$ 
and $O_e = O\cap \calE(T_e)$. We have $\partial O_{sc}=\partial O_e$,
where the boundary is taken in the graph $\gluedgraph$. Since $F$ is a minimum matching
in the graph $T_{sc}$
and $O_{sc}\subseteq F$, we infer (from Lemmas~\ref{prop:graph3},\ref{prop:graph5}) that
$O_{sc}$ is a minimum matching in the graph  $\gluedgraph$.
Finally, $\partial O_{sc}=\partial O_e$ implies that $|O_{sc}|\le |O_e|$ and thus
$|O|\le 2|O_e|$ which is equivalent to Eq.~\eqref{LL}.
\end{proof}

The following establishes property~\eqref{prop:tricky} of Definition~\ref{def:lifting} for the $Z$-part of the repair function, that is, the function~$\repair_Z$. 
\begin{lemma}
\label{lemma:tricky1}
Consider a Pauli error $E$ acting on $\calA$ and a subset
$K\subseteq \repair_Z{(E)}$. There exists a set $L\in \lift(K)$
such that 
\be
\label{L14}
|L\cap \mathrm{Supp}(E)|\ge \frac14 |L|.
\ee
\end{lemma}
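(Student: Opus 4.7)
The plan is to invoke Lemma~\ref{prop:graph6} with a carefully chosen ``witness'' set $Y\subseteq \calE(T_e)$ built out of both the actual error $E$ and the proxy matching $M$, and then to use the minimality of $M_e\equiv M\cap \calE(T_e)$ to argue that the resulting forest $L$ shares a constant fraction of its edges with $\mathrm{Supp}(E)$.

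Concretely, writing $E_{X,e}\equiv \mathrm{Supp}(E_X)\cap \calE(T_e)$ for the $X$-part of $E$ restricted to $T_e$-edges, I would set $Y=E_{X,e}\oplus M_e$. The first task is to verify the hypothesis of Lemma~\ref{prop:graph6}, namely $\partial_{\gluedgraph} Y=\partial_{\gluedgraph}\repair_Z(E)$. On $\calV(T_e)$ this vanishes because the generators $S_0^u$ for $u\in \calV(T_e)$ are $Z$-type and supported on the $T_e$-edges incident to $u$, so by construction $\partial_{T_e} M_e=\syn_0(E)\cap \calV(T_e)=\partial_{T_e} E_{X,e}$. On $\calV(T_{sc})$ the only edges of $\calE(T_e)$ incident to a vertex $u$ in $\gluedgraph$ are the dangling $T_e$-edges attached at $u$ (the edge and the vertex live at the same lattice site), so the contributions from $E_{X,e}$ and $M_e$ reproduce $\syn_{1X}(E)$ and $\syn_{1X}(M)$ respectively, and their XOR is $\syn_{1X}(E)\oplus \syn_{1X}(M)=\partial_{\gluedgraph}\repair_Z(E)$ by Eq.~\eqref{repairZ}. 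Once this is in place, applying Lemma~\ref{prop:graph6} with $F=\repair_Z(E)$ and this $Y$ yields a forest $L\in \lift(K)$ with $L\cap \calE(T_e)\subseteq Y$ and $|L|\le 2|L\cap \calE(T_e)|$.

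The decisive step is then to show that $|L\cap E_{X,e}|\ge \tfrac12 |L\cap \calE(T_e)|$. Since $\partial_{\gluedgraph}L=\partial K\subseteq \calV(T_{sc})$ and no edge of $\calE(T_{sc})$ touches any vertex in $\calV(T_e)$, the restriction $L\cap \calE(T_e)$ has trivial $T_e$-boundary. Decompose it disjointly as $L\cap \calE(T_e)=L_E\sqcup L_M$ with $L_E=L\cap E_{X,e}$ and $L_M=L\cap M_e$ (disjoint because $Y=E_{X,e}\oplus M_e$). Then $\partial_{T_e}L_E=\partial_{T_e}L_M$. Because $L_M\subseteq M_e$ and $M_e$ is a minimum matching in $T_e$, Lemma~\ref{prop:graph3} forces $L_M$ itself to be a minimum matching in $T_e$; since $L_E$ achieves the same $T_e$-boundary, minimality yields $|L_M|\le |L_E|$. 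Combining this with $|L|\le 2|L\cap \calE(T_e)|$ and the inclusion $L\cap E_{X,e}\subseteq L\cap \mathrm{Supp}(E)$ delivers the bound $|L\cap \mathrm{Supp}(E)|\ge |L|/4$ claimed in Eq.~\eqref{L14}.

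The main obstacle I anticipate is the interface bookkeeping: keeping straight how dangling $T_e$-edges at the top and bottom faces of $\calC$ are glued to vertices of $\calV(T_{sc})$ in $\gluedgraph$, and verifying that the boundary contributions in $\calV(T_{sc})$ coincide exactly with the $\syn_{1X}$ quantities that define $\repair_Z(E)$. Once that identification is established, the remainder of the argument is essentially a clean application of matching minimality---of $M_e$ in $T_e$ for the key step above, and of $\repair_Z(E)$ (minimum in $T_{sc}$, hence in $\gluedgraph$ by Lemma~\ref{prop:graph5}) to set up Lemma~\ref{prop:graph6}.
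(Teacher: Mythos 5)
Your proposal is correct and follows essentially the same route as the paper: set $Y$ to the symmetric difference of (the $X$-part of) $E$ and $M$ on $\calE(T_e)$, check $\partial_{\gluedgraph}Y=\partial_{\gluedgraph}\repair_Z(E)$, invoke Lemma~\ref{prop:graph6} to obtain $L$, and then use minimality in $T_e$ to argue at least half of $L\cap\calE(T_e)$ lies in $E$. The only cosmetic deviations are that you invoke Lemma~\ref{prop:graph3} on $L_M\subseteq M_e$ where the paper instead perturbs $M$ to $M'=M\oplus(L\cap\calE(T_e))$ and appeals to minimality of $M$ directly (same estimate), and you are a bit more careful than the paper in writing $E_{X,e}$ rather than identifying $E$ with its full support.
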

\begin{proof}
Let $M\subseteq \calE(T_e)\cup \calE(T_o)$ be a minimum weight
Pauli-$X$ error such that $\syn_0(M)=\syn_0(E)$. 
Below we shall identify Pauli errors and their supports.  
Set 
\[
Y=(E\oplus M)\cap \calE(T_e).
\]
By construction,  $M\cap \calE(T_e)$ and  
$E \cap \calE(T_e)$ have the same boundary in the graph $T_e$.
Thus $Y$ is a cycle in the graph $T_e$.
Let $\partial Y \subseteq \calV(T_{sc})$ be the boundary of $Y$
in the graph $\gluedgraph$. 
By definition of the repair function, $\repair_Z{(E)}$ is a minimum
matching in the graph $T_{sc}$ with the boundary $\partial Y$. Apply Lemma~\ref{prop:graph6} with $F\equiv \repair_Z{(E)}$
and $Y$ defined above
to construct a forest $L\in \lift(K)$ 
satisfying Eqs.~\eqref{LY1},\eqref{LY2}.
Let us check that $L$ obeys Eq.~\eqref{L14}.
Indeed, set
\[
M' = M \oplus (L\cap \calE(T_e)).
\]
By definition of the lift function, $\partial L =\partial K$ where the boundary
is taken in the graph $\gluedgraph$. Thus $L\cap \calE(T_e)$ is a cycle in the graph
$T_e$ and $\partial M'=\partial M$ (in the graph $T_e$).
The minimality of $M$ gives $|M'|\ge |M|$.
Thus at least half of the edges of $L\cap \calE(T_e)$ are not contained in $M$. By Eq.~\eqref{LY1},
$L\cap \calE(T_e) \subseteq Y$, i.e. at least half of the edges of $L\cap \calE(T_e)$ are 
contained in $E$. We get
\[
|L\cap E| \ge |L\cap \calE(T_e) \cap  E|
\ge \frac12 |L\cap \calE(T_e)| \ge \frac14 |L|.
\]
Here the last inequality follows from Eq.~\eqref{LY2}.
\end{proof}

\subsection{Explicit constants: concluding the proof of Theorem~\ref{thm:3D}\label{sec:parameterscomplete}}
We have  established that the function $\repair_Z$ satisfies the lifting property, see Definition~\ref{def:lifting}. Thus it converts a local stochastic error with rate~$p$ to an error with rate~$p^{\Omega(1)}$. In more detail, we can use Eq.~\eqref{repair_error_rate}
of Section~\ref{sec:codes}
to upper bound the error rate of $\repair_Z{(E)}$.
Indeed, the universal constant $c_1,c_2,c_3$ from
Eq.~\eqref{repair_error_rate} can be extracted from
Lemma~\ref{prop:Lift} and Lemma~\ref{lemma:tricky1}.
We get $c_1=96$, $c_2=1/2$, and $c_3=1/4$. 
Substituting this into Eq.~\eqref{repair_error_rate} one gets
\be
\label{repairZrate}
\repair_Z{(E)} \sim \calN(q), \qquad q=96\sqrt{2} p^{1/8}.
\ee

So far we have ignored the generators $S_1^X,S_1^Z$ 
corresponding to the logical Bell state stabilizers $\overline{X}_1\, \overline{X}_2$ and $\overline{Z}_1\, \overline{Z}_2$, 
see Eqs.~\eqref{S1X},~\eqref{S1Z}. Consider $S_1^X$ first. 
We claim that  the repair operator  $\repair_Z{(E)}$ 
satisfies the syndrome condition 
\begin{align}
\syn_{S_1^X}(\repair_Z{(E)}) = \syn_{S_1^X}(E) \oplus \syn_{S_1^X}(M)\ .\label{eq:syndromesonex}
\end{align}
with probability exponentially close to one. Here $\syn_{S_1^X}(F)\in \{0,1\}$ denotes the syndrome bit
of the logical Bell state stabilizer~$S_1^X$ for a Pauli error $F$. Combined with Eq.~\eqref{repairZ}, this implies that -- except with exponentially small probability -- the repair operator~$\repair_Z{(E)}$ 
obeys the part of syndrome  condition~\eqref{eq:repairconditionsyne} associated with
 all stabilizer generators of~$\calS_1$ defined by $Z$-type stabilizers of the encoded Bell state. In particular, defining $\repair_{\overline{Z}}(E)$ appropriately (see Eq.~\eqref{eq:repairzbare} below) and arguing analogously about $X$-type stabilizers ensures that the product $\repair{(E)}$ (cf.~\eqref{repairXZ}) satisfies  the syndrome condition~\eqref{eq:repairconditionsyne} with certainty.

To prove that~\eqref{eq:syndromesonex} is satisfied with probability exponentially close to one, let $\mathrm{FAIL}$ be the set of errors $E$
such that $\repair_Z{(E)}$ and $E\cdot M$ have different syndromes
for the generator $S_1^X$. Using the explicit form of $S_1^X$,
see Eq.~\eqref{S1X}, one gets
\[
\mathrm{FAIL}=
\{E\, : \,\parity(\repair_Z{(E)}\oplus E_e\oplus M_e, \Omega)=1\},
\]
where $E_e\equiv \mathrm{Supp}(E)\cap \calE(T_e)$,
$M_e \equiv \mathrm{Supp}(M)\cap \calE(T_e)$, and 
\[
\Omega = \{ (1,e,1),(1,e,r)\in \calB\} \cup \{ (1,e,e) \in \calE(T_e)\}.
\]
Note that $\Omega$ includes all dangling edges of the graph $\gluedgraph$
located on the face $(1,e,e)$.
By construction, 
\[
C\equiv \repair_Z{(E)}\oplus E_e\oplus M_e
\]
is a cycle in the graph $\gluedgraph$. 
Thus the event $\mathrm{FAIL}$ happens
iff $C$ contains at least one ``homologically non-trivial" path that starts at the face
 $(1,e,e)$ and ends at the face $(r,e,e)$.
Let us fix such a path~$H$ for each error $E\in \mathrm{FAIL}$.
Let $\mathrm{Pr}[H]$ be the combined probability of all errors
$E\in \mathrm{FAIL}$ that give rise to a given path $H$.
Denote
\[
H_e= H\cap \calE(T_e)
 \qquad H_{sc}= H\cap \calE(T_{sc}).
\]
Note that $H_e$ 
is a cycle in the graph $T_e$. 
The minimality of $M$ implies that  $|M_e\oplus H_e|\ge |M_e|$. Since $H_e$ is contained
in $C\cap \calE(T_e)=M_e\oplus E_e$, we infer that 
 $H_e$  has at least half of the edges in the error $E_e$.
Thus 
\be
\label{PHe}
\mathrm{Pr}[H] \le \sum_{k=|H_e|/2}^{|H_e|} {|H_e| \choose k}p^k \le (2p^{1/2})^{|H_e|}.
\ee 
We have already shown that $\repair_Z{(E)} \sim \calN(q)$,
see Eq.~\eqref{repairZrate}.
Since $H_{sc}\subseteq \repair_Z{(E)}$, one has
\be
\label{PHsc}
\mathrm{Pr}[H] \le \sum_{k=|H_{sc}|/2}^{|H_{sc}|} {|H_{sc}| \choose k}  q^k \le (2q^{1/2})^{|H_{sc}|}.
\ee
For sufficiently small $p$ one has $p\ll q$
and thus 
\be
\label{Pfail1}
\mathrm{Pr}[\mathrm{FAIL}]\le \sum_H\mathrm{Pr}[H]
\le \sum_H (2q^{1/2})^{|H|/2}.
\ee
where the sum runs over all paths $H$ in the graph $\gluedgraph$ connecting
the face $(1,e,e)$ and the face $(r,e,e)$.
Note that such path must have length $l\equiv |H|\ge r$.
The number of length-$l$ paths in the graph $\gluedgraph$ that start at the face $(1,e,e)$  is at most $r^2 6^l$
since $\gluedgraph$ has vertex degree at most $6$. Thus
\[
\mathrm{Pr}[\mathrm{FAIL}]
 \le r^2 \sum_{l=r}^\infty 6^l (2q^{1/2})^{l/2}
\le 2 r^2 (12q^{1/4})^r \le  (24q^{1/4})^r
\]
provided that $24q^{1/4}\le 1$ and $r\ge 7$. This completes the proof of the claim that~\eqref{eq:syndromesonex} is satisfied with probability exponentially close to one.

Now define the random error 
\begin{align}
\repair_{\overline{Z}}{(E)} = \left\{ \begin{array}{rcl}
(\overline{Z}_1)_\calB &\mbox{if} & E\in \mathrm{FAIL} \\
I &\mbox{if} & \mbox{otherwise}.
\\
\end{array}
\right. \label{eq:repairzbare}
\end{align}
Here the logical operator $\overline{Z}_1$ acts on the first surface code. Clearly, this definition ensures that the
operator $\repair{(E)}$ defined in Eq.~\eqref{repairXZ} satisfies~\eqref{eq:syndromesonex}, and thus the part of the syndrome condition~\eqref{eq:repairconditionsyne} associated with all $X$-type stabilizers of the encoded Bell state.

To show that~$\repair{(E)}$ is a local stochastic error with rate as given in Theorem~\ref{thm:3D}, 
recall that we have shown in~\eqref{repairZrate} that the factor $\repair_{Z}{(E)}$ in its definition is a local stochastic error with rate~$q$. We claim that $\repair_{\overline{Z}}{(E)}$ satisfies
\be
\repair_{\overline{Z}}{(E)} \sim \calN(q_0),
\qquad
q_0=600q^{1/2}.
\ee
Indeed, consider some fixed subset $K\subseteq \calB$
and suppose that $K\subseteq \repair_{\overline{Z}}{(E)}$.
Then $|K|\le d$ since $\overline{Z}_1$ has weight $d$.
Recall that $r=2d-1$. Thus
\[
\mathrm{Pr}_E[K\subseteq  \repair_{\overline{Z}}{(E)} ] \le \mathrm{Pr}[\mathrm{FAIL}]
\le (24q^{1/4})^r \le (600 q^{1/2})^d = q_0^d \le q_0^{|K|}.
\]
Exactly the same arguments (with the graphs $T_e,T_{sc}$ replaced
by $T_o,T_{sc}^*$) show that
$\repair_X{(E)}\sim \calN(q)$ and 
$\repair_{\overline{X}}{(E)}\sim \calN(q_0)$.
Finally, using part~(iii) of Lemma~\ref{claim:basic} we conclude
that the full repair operator $\repair{(E)}$ defined in Eq.~\eqref{repairXZ}
obeys 
\[
\repair{(E)} \sim \calN(26 p^{1/64}).
\]
This completes the proof of Theorem~\ref{thm:3D}.

\section{Fault-tolerant quantum advantage on a 3D grid}
\label{sec:3D}

Here we consider Algorithm~1 specialized to the 1D Magic Square
Problem and encode each qubit using the surface code.
 We show how to implement this algorithm by a constant-depth 
quantum circuit that uses only nearest-neighbor gates on a 3D grid 
with $O(1)$ qubits per site. For simplicity, we allow classical control to be geometrically non-local. 
At the end of this section we will discuss how one can modify the relation problem to remove this assumption.

Recall that the ideal quantum circuit solving the 
1D Magic Square Problem operates on a register of $4n$ qubits
labeled as $p_1,\ldots,p_{2n}$ and $q_1,\ldots,q_{2n}$,
see Fig.~\ref{fig:1d}. The circuit consists of the following operations :
\begin{enumerate}[(i)]
\item Initializing a pair of qubits $(p_{2i-1},p_{2i})$ or $(q_{2i-1},q_{2i})$ in the Bell state $\Phi$.\label{it:stepone}
\item Applying CNOT, SWAP to a pair of qubits $(p_{2i},p_{2i+1})$  or $(q_{2i},q_{2i+1})$ or $(p_j,q_j)$.\label{it:steptwo}
\item Applying a single-qubit Clifford gate $H$, $Z$ or $S$.\label{it:stepthree}
\item Measuring a qubit in the $Z$-basis.\label{it:stepfour}
\end{enumerate}
Here the operations~(\ref{it:steptwo},\ref{it:stepthree}) are classically controlled by the input bits 
specifying an instance of the problem.

\begin{figure}[h]
\centerline{\includegraphics[height=4cm]{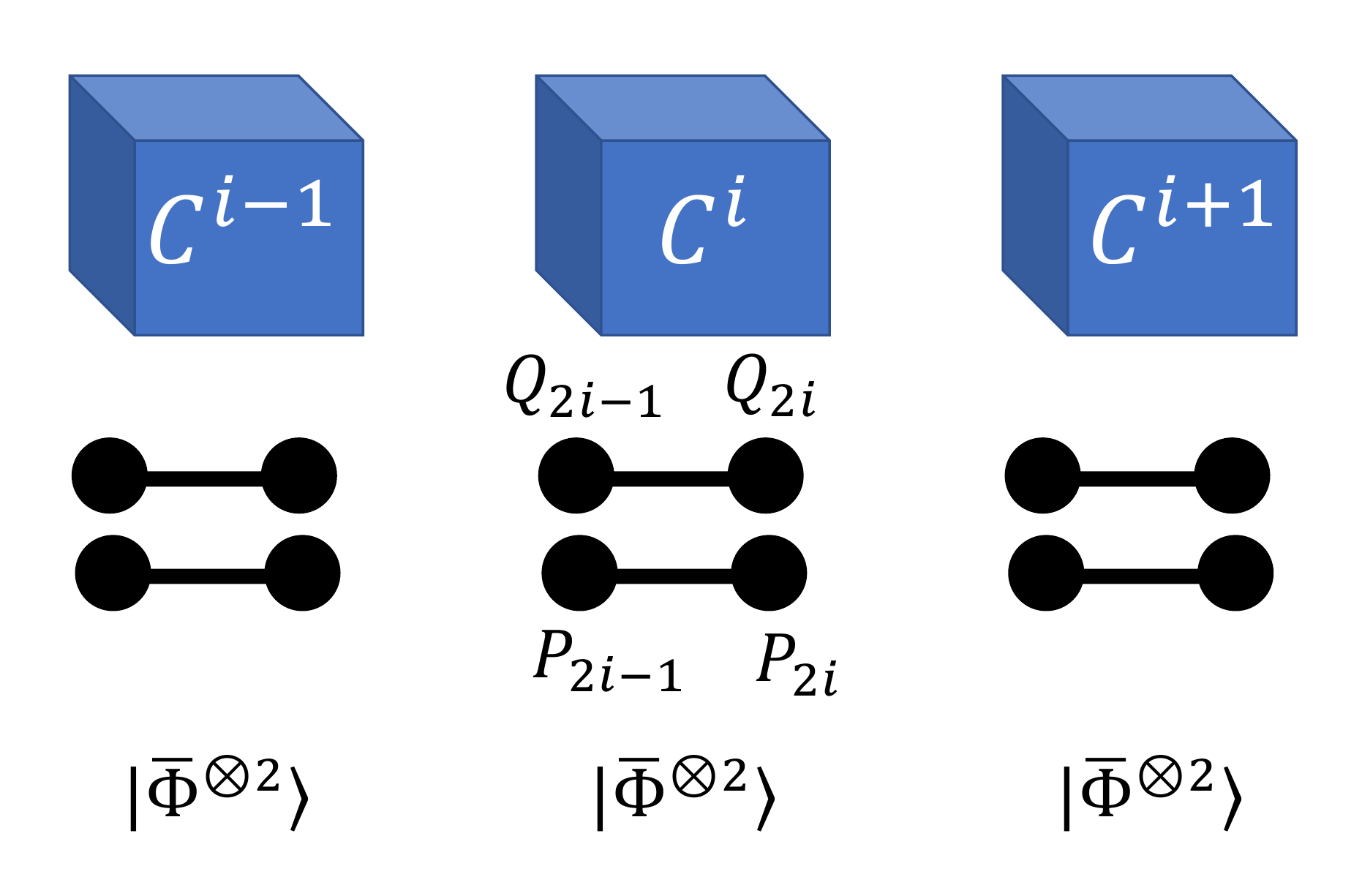}}
\caption{A chain of 3D cubes $\calC_i$.
Each cube is a copy of the 3D lattice $\calC$ shown on Fig.~\ref{fig:surface_code3D}.
A pair of surface codes is attached to the left and to the right face
of each cube $\calC_i$.
\label{fig:cubes3D}}
\end{figure}

We shall encode each qubit $p_i$ and $q_i$ by the distance-$d$ surface code
denoted $P_i$ and $Q_i$ respectively. 
Each surface code is attached to a face of a 3D cubic lattice 
$\calC$ of linear size $r=2d-1$ shown on Fig.~\ref{fig:surface_code3D}
(see Section~\ref{sec:3Dcluster} for a formal definition of $\calC$).
Let $\calC_1,\ldots,\calC_n$ be $n$ copies of the lattice $\calC$.
For brevity, we shall refer to each lattice $\calC_i$ as a {\em cube}.
Each site of $\calC_i$ holds $O(1)$ physical qubits (we shall need at most four qubits per site).
Surface codes $P_{2i-1}$ and $Q_{2i-1}$ are attached to the left face
of the cube $\calC_i$ such that the two codes share the same 
subset of sites on the face of $\calC_i$.
 Likewise, we attach
surface codes $P_{2i}$ and $Q_{2i}$ to the right face
of the cube $\calC_i$. We arrange the cubes $\calC_1,\ldots,\calC_n$ into a one-dimensional chain 
such that the right face of $\calC_i$ is next to the left face of $\calC_{i+1}$, see  Fig.~\ref{fig:cubes3D}.

 A pair of logical Bell states $\overline{\Phi}$
shared between the codes $P_{2i-1}$, $P_{2i}$
and  between the codes $Q_{2i-1}$, $Q_{2i}$
can now be created in a single-shot
fashion by a depth-$6$ Clifford circuit 
operating on the cube $\calC_i$
with nearest neighbor two-qubit gates,
see Theorem~\ref{thm:3D}. This
provides a robust (logical) realization of the initialization operation~\eqref{it:stepone}.
\begin{figure}[h]
\centerline{\includegraphics[height=4cm]{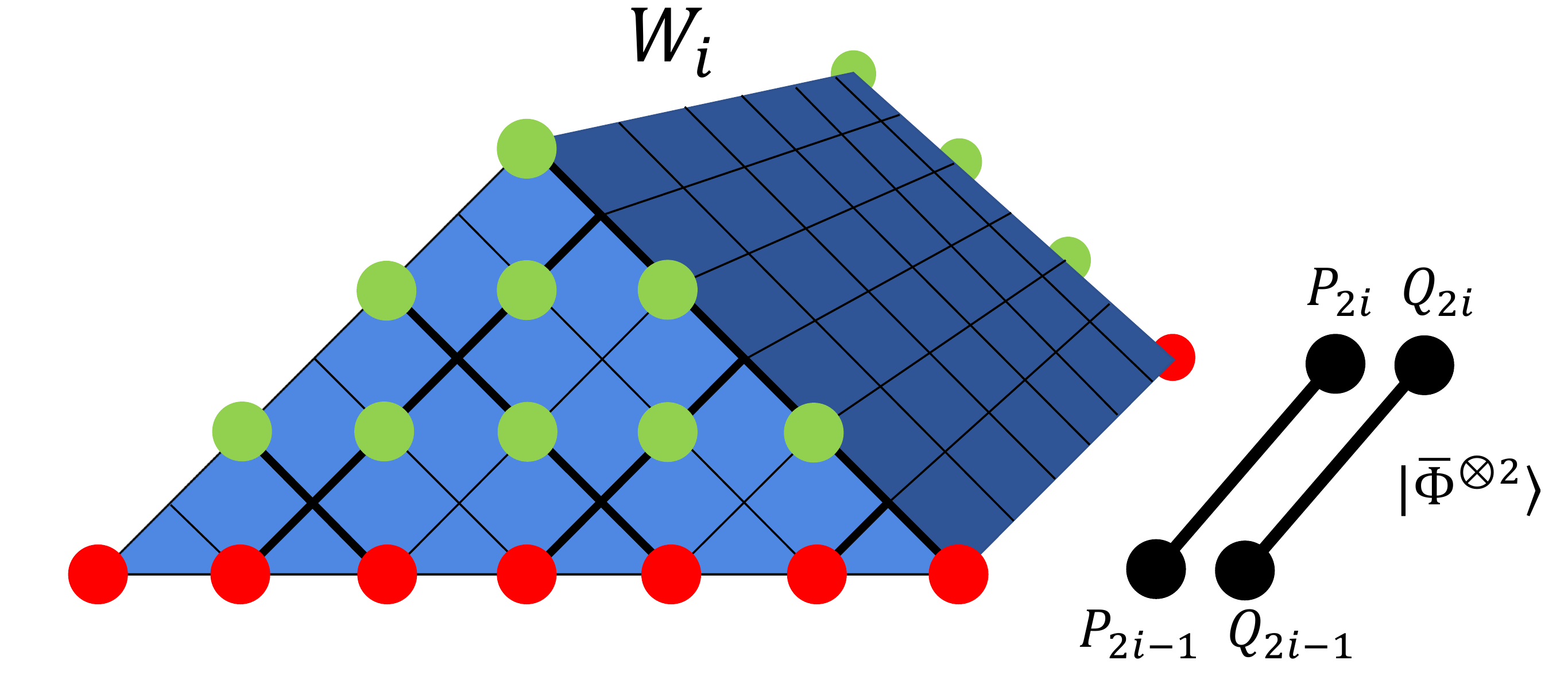}}
\caption{A wedge $\calW_i$ obtained from a cube $\calC_i$
by folding it against the diagonal plane. 
A pair of folded surface codes 
is attached to the left and to the right faces of $\calW_i$.
\label{fig:wedge}}
\end{figure}

Recall that the surface code enables  transversal logical CNOT and SWAP gates. 
In the one-dimensional chain shown in Fig.~\ref{fig:cubes3D},   a pair of the surface codes $P_{2i},Q_{2i}$ is located next to 
$P_{2i+1},Q_{2i+1}$. 
Furthermore, the codes $P_j$ and $Q_j$ share the same subset of sites.
Thus a logical CNOT (respectively SWAP) can be applied to 
pairs of logical qubits $(P_{2i},P_{2i+1})$, $(Q_{2i},Q_{2i+1})$,
and $(P_j,Q_j)$  by a depth-$1$ quantum circuit composed
of geometrically local physical CNOT (respectively SWAP) gates.  This provides the necessary encoded two-qubit operations~\eqref{it:steptwo}.

The surface codes also permits a transversal implementation of the logical~$\overline{Z}$ gate. 
To make the  logical gates $\overline{H}$, $\overline{S}$ geometrically local we shall
replace each surface code by its folded version defined in Section~\ref{sec:codes}.
Accordingly, each cube $\calC_i$ is replaced by a {\em wedge} $\calW_i$
in which pairs of sites  obtained from each other
by a reflection against the diagonal plane are identified. An example
of the wedge $\calW_i$ and the folded surface codes attached to its faces
are shown  on Fig.~\ref{fig:wedge}. 
Since any reflection is a distance-preserving operation, 
all geometrically local gates used at steps~(\ref{it:stepone},\ref{it:steptwo}) remain
geometrically local after mapping cubes $\calC_i$ to wedges $\calW_i$.
As shown in Section~\ref{sec:codes}, 
the folded surface code enables implementation of 
logical $\overline{H}$, $\overline{Z}$ and $\overline{S}$ gates by a depth-$1$ Clifford circuit
composed of geometrically local gates. This completes the description of the fault-tolerant realization of the operations~\eqref{it:stepthree}.

Finally, we recall that a logical $Z$-measurement  can be realized fault-tolerantly in the surface code  by measuring each qubit in the~$Z$-basis and decoding the result as discussed in Section~\ref{sec:singleshotlogicalmeas}, see Theorem~\ref{thm:singleshotlogicalmeasurementsurface}. We conclude that fault-tolerant analogues of all operations~(\ref{it:stepone}--\ref{it:stepfour}) can be implemented by geometrically local constant-depth Clifford circuits on a 3D grid of qubits. Thus we can implement Algorithm~1 and solve the noise-tolerant version of the 1D Magic Square relation using a constant-depth quantum circuit with geometrically local gates in 3~dimensions.

Let us now briefly sketch how one can also make the classical control geometrically local, if desired. Note that every input bit to the 1D Magic Square Problem only acts as a control in $O(1)$ Clifford gates in the  ideal quantum circuit~$U$ which solves it. We may then imagine prepending a classical copying circuit~$C_{copy}$ to the quantum circuit. The circuit~$C_{copy}$ simply creates a local copy of each input bit next to every gate location where it is used as a control in the fault-tolerant circuit of Fig.~\ref{fig:algcircuit}. Then we can write down an {\em extended} fault-tolerant quantum circuit~$U^{\textrm{ext}}$ which accesses these copies of input bits and which only involves locally controlled gates. 

Matching the definition of the extended quantum circuit~$U^{\textrm{ext}}$, we may define an {\em extended fault-tolerant relation}~$\mathcal{R}^{\textrm{ext}}_{U}$. The outputs of this relations are identical to that of~$\mathcal{R}_{U}$, but the input is  modified as there are now additional input bits. Suppose a subset~$S\subset \{0,1\}^v$ of problem instances (inputs) can be used to show a quantum advantage for the relation~$\mathcal{R}_U$. We claim that the subset $C_{copy}(S)$ of inputs for~$\mathcal{R}^{\textrm{ext}}_{U}$ can be used to show an advantage for the extended relation: Clearly, the input/output pairs of the extended quantum  circuit~$U^{\textrm{ext}}$, for any input belonging to~$C_{copy}(\{0,1\}^v)$, satisfy the relation~$\mathcal{R}^{\textrm{ext}}_U$ with high probability. To show that the extended relation ``remains hard'' for classical circuits, observe that  by assumption and because every code block has size~$m$,
the copying circuit~$C_{copy}$ can be realized by a depth-$1$ circuit using $O(m)$-local gates. Thus any classical circuit~$C^{\textrm{ext}}$ for the extended relation~$\mathcal{R}^{\textrm{ext}}_U$ can be modified to act as a classical circuit for~$\mathcal{R}_U$ by prepending~$C^{\textrm{ext}}$, increasing the circuit depth only by a constant without changing the fan-in beyond the restriction given in Theorem~\ref{thm:lowerboundcircuitdepthDMS}. This implies the claim.

\section{Acknowledgements}
SB acknowledges support from the IBM Research
Frontiers Institute and funding from the MIT-IBM Watson AI Lab 
under the project Machine Learning in Hilbert space. RK acknowledges support by the Technical University of Munich --  Institute of Advanced Study, funded by the German Excellence Initiative and the European Union Seventh Framework Programme under grant agreement no. 291763, by the DFG cluster of excellence  2111 (Munich Center for Quantum Science and Technology),  and by the German Federal Ministry of Education through the funding program Photonics Research Germany, contract no. 13N14776 (QCDA-QuantERA).


\bibliographystyle{unsrt}
\bibliography{library,library_MT}

\appendix

\end{document}